\documentclass[letterpaper, USenglish, thm-restate]{lipics-v2021}

\newcommand{\prodasm}[1]{\textbf{PROD}(#1)}

\hideLIPIcs
\nolinenumbers

\author{Tim Gomez}{MIT - CSAIL, United States}{tagomez7@mit.edu}{}{}
\author{Elise Grizzell}{UTRGV, United States}{elise.grizzell01@utrgv.edu}{https://orcid.org/0000-0001-8876-2277}{}
\author{Asher Haun}{UTRGV, United States}{asher.haun01@utrgv.edu}{https://orcid.org/0000-0002-2821-6263}{}
\author{Ryan Knobel}{UTRGV, United States}{ryan.knobel01@utrgv.edu}{}{}
\author{Tom Peters}{TU Eindhoven, The Netherlands}{t.peters1@tue.nl}{https://orcid.org/0000-0002-2702-7532}{}
\author{Robert Schweller}{UTRGV, United States}{ robert.schweller@utrgv.edu}{}{}
\author{Tim Wylie}{UTRGV, United States}{timothy.wylie@utrgv.edu}{}{}

\funding{This research was supported in part by National Science Foundation Grant CCF-2329918.}

\keywords{Intrinsic Universality, Tile Automata, Simulation, Cellular Automata, Self-assembly}

\ccsdesc[100]{Theory of computation~Models of computation}

\title{Intrinsic Universality in\texorpdfstring{\\}{}Seeded Active Tile Self-Assembly}
\authorrunning{T.\ Gomez et al.}
\titlerunning{Intrinsic Universality in Seeded Active Tile Self-Assembly}

\begin{document}

\maketitle
\begin{abstract}

The Tile Automata (TA) model describes self-assembly systems in which monomers can build structures and transition with an adjacent monomer to change their states. This paper shows that seeded TA is a non-committal intrinsically universal model of self-assembly. We present a single universal Tile Automata system containing approximately 4600 states that can simulate (a) the output assemblies created by any other Tile Automata system~$\Gamma$, (b) the dynamics involved in building $\Gamma$'s assemblies, and (c) $\Gamma$'s internal state transitions. It does so in a non-committal way: it preserves the full non-deterministic dynamics of a tile's potential attachment or transition by selecting its state in a single step, considering all possible outcomes until the moment of selection.

The system uses supertiles, each encoding the complete system being simulated. The universal system builds supertiles from its seed, each representing a single tile in~$\Gamma$, transferring the information to simulate $\Gamma$ to each new tile. Supertiles may also asynchronously transition states according to the rules of $\Gamma$. This result directly transfers to a restricted version of asynchronous Cellular Automata: pairwise Cellular Automata.

\end{abstract}
\newpage
\section{Introduction}\label{sec:introduction}
Tile self-assembly is a model that attempts to exploit the computational capabilities of nucleic acids.
DNA molecules can form complex structures, and in controlling the growth of those structures, we can utilize their powers to perform computations.
In recent years, a diverse set of new abstractions and models have been conceived, the most prominent of which has been the (two-dimensional) \emph{abstract Tile Assembly Model} (aTAM)~\cite{winfree1998algorithmic}.
In this model, a \emph{tile} is a non-rotatable unit square with specified \emph{glues} on each side, modeling a single monomer.
Two tiles can attach if their glues match.
A \emph{tile assembly system} is a set of these tile types and a temperature $\tau$.
Research into these models usually revolves around the types of \emph{assemblies} that can be created with specific sets of tile types.

In this paper, we work in a related model, derived by combining elements of tile self-assembly and the local state changes of asynchronous Cellular Automata: \emph{seeded Tile Automata (TA)}~\cite{chalk2018freezing}. A Tile Automata \emph{system} $\Gamma$ has a set of \emph{states} $\Sigma$. These states contain no glues, contrary to the aTAMs tile types.
Instead, tiles with an \emph{initial state} $\sigma\in \Lambda$ ($\Lambda \subseteq \Sigma$) can attach to the \emph{seed}~$s$ if the system contains an \emph{affinity rule} for their respective tile types that has an equal or higher strength than the system \emph{temperature} $\tau$. Should a single pair of tiles lack sufficient strength to bind to the assembly, they may bind \emph{cooperatively} by adding the strengths of affinities of neighboring tiles to reach $\tau$.
Contrary to the \emph{passive} aTAM, tiles in the \emph{active} TA system can change their state.
More restricted than most Cellular Automata systems, only \emph{two} tiles directly adjacent to one another can transition their states if the system contains the corresponding \emph{transition rule}. 

Here, we study the creation of an \emph{intrinsically universal} (IU) Tile Automata system $\Gamma_U$, a system with a finite state set capable of creating not only the final assemblies of any other arbitrary Tile Automata system $\Gamma$ but also replicating the exact assembly process and any additional computations achieved via transitions. Our universal tile assembly system can simulate systems that contain more states than~$\Sigma_U$ does and even simulate itself. To do this, we sacrifice scale.
We use many tiles to create a \emph{supertile}, that simulates a single tile in $\Gamma$.

In this paper, we show that \emph{non-committal intrinsic universality} is impossible in any \emph{passive} system, such as the aTAM.
This means that the dynamics of attachment and transitions of a tile assembly system cannot be faithfully simulated by achieving the final determinations of each in a single step.
Instead, they are \emph{committal} intrinsically universal, meaning that they need multiple attachment and or transition steps to replicate the decision process of a single step in the target system.
On first sight, this appears to contradict previous work showing the aTAM is intrinsically universal~\cite{Doty_Lutz_Patitz_Schweller_Summers_Woods_2012}.
However, that paper contained a subtle error which was later addressed by making the definition of intrinsic universality (IU) slightly weaker~\cite{Meunier_Patitz_Summers_Theyssier_Winslow_Woods_2014}.
We will refer to this weakened version as committal IU.
Besides our negative result, we show that the seeded Tile Automata model with its infinite state changes is in fact non-committal intrinsically universal, using approximately 4600 states.

Intrinsic universality is motivated by creating a universal tile set small enough to be stored in a lab refrigerator for real-world experimentation.
Although 4600 tiles is still a large number of states and is not optimal, 4600 tiles is about ten million tile types less than the previously stated committal intrinsic universality result for two-dimensional aTAM~\cite{Doty_Lutz_Patitz_Schweller_Summers_Woods_2012}.
Importantly, our initial state~set~$\Lambda$ is only a single tile type. While current laboratory capabilities lag the ability to implement this universal tile set as of today, there have been recent advancements in for example the ability to replace tiles experimentally~\cite{petersen2018information,sarraf2023modular} and in the aTAM a tile set capable of universal 6-bit computation was created~\cite{woods2019diverse}.
The aTAM has also been proven to be intrinsically universal in  3D~\cite{Hader_Koch_Patitz_Sharp_2019}, and synchronous Cellular Automata have been shown to be intrinsically universal in 1D, 2D, and 3D~\cite{Banks_1970,durand1997intrinsic,margenstern2006algorithm}.

The question of whether 2D asynchronous Cellular Automata is intrinsically universal is currently open, though work towards a 1D version has been done~\cite{worsch2013towards}.
Tile Automata can be viewed as a restricted version of asynchronous Cellular Automata in which the neighborhood size is 2, the radius is 1, the system is non-deterministic, and the updating is asynchronous.
Therefore, our results directly carry over to this restricted version of Cellular Automata. 

\subsection{Previous Work}\label{subsec:previous_work}

\subparagraph{Cellular Automata.}
The study of self-simulation, and new types of universalities is as old as the field of Cellular Automata itself, with von Neumann introducing the model to build a self-replicating machine \cite{neumann1966theory}. Though it was Banks in 1970 who explicitly coined the term intrinsic universality \cite{Banks_1970}, von Neumann's initial construction was later proven to be intrinsically universal. Conway's famous Game of Life cellular automaton was proven to be intrinsically universal~\cite{durand1999game}. Intrinsic universality in CA has been extensively studied \cite{Briceno_Rapaport_2021,durand1997intrinsic,Goles_Meunier_Rapaport_Theyssier_2011,iirgen1987simple,Ollinger_Richard_2011,Ollinger_2002, Ollinger_2003, Ollinger_2008,worsch2013towards}. Specifically, four different updating schemes for Asynchronous CA were shown to be IU in \cite{worsch2013towards}. These updating schemes restrict which cells can be updated at each time step. The closest related updating scheme to Tile Assembly is ``fully asynchronous'' where only one cell may update at a time.\footnote{For the case of Tile Automata and Surface Chemical Reaction Networks it better stated as ``one rule'' can be applied at a time because two cells can be updated in one update.}

\subparagraph{Passive Self-Assembly.} 
Intrinsic universality first crossed into the self-assembly world in~\cite{Doty_Lutz_Patitz_Summers_Woods_2010}, where a universal tile~set was introduced for systems with tiles that bond with exactly strength 2. Two years later, the first properly intrinsically universal tile set, one that can simulate the full aTAM at any temperature, was presented in \cite{Doty_Lutz_Patitz_Schweller_Summers_Woods_2012}.
These papers both used the definition of intrinsic universality that we call non-committal.
However, these definitions were later corrected to the version that we call committal~\cite{Meunier_Patitz_Summers_Theyssier_Winslow_Woods_2014}.
It was also shown that a single polygon tile type with the ability to flip, translate, and rotate can simulate any aTAM system through several intermediate simulations \cite{Demaine_Demaine_Fekete_Patitz_Schweller_Winslow_Woods_2012}. The aTAM was found not to be committal intrinsically universal at Temperature-1~\cite{Meunier_Patitz_Summers_Theyssier_Winslow_Woods_2014}, and in directed and non-directed planar systems~\cite{Hader_Koch_Patitz_Sharp_2019}.  Directed 3D and Spatial aTAM were proven to be IU \cite{Hader_Koch_Patitz_Sharp_2019}.
The 2-handed self-assembly model is, in general, not intrinsically universal; however, there are intrinsically universal tile sets for each temperature \cite{demaine2016two}. Work towards a universal tile set in Wang Tiles, which studies whether a given tile set can infinitely, and potentially periodically, tile a plane, has also been investigated \cite{lafitte2007universal,lafittealmost,lafitte2010tilings}.

% 

%\subparagraph{Simulation of aTAM by CA and CA by aTAM.} 
\subparagraph{Simulation between Tile Assembly and CA.}
The aTAM can simulate some versions of CA. In particular, it was found that the aTAM can simulate only finite CA~\cite{Hendricks_Patitz_2013}.
The TA model does not have this restriction, as we can infinitely tile the plane with our seed assembly and use transitions to simulate infinite CA.
Where the aTAM is asynchronous, nondeterministic, and finite, Cellular Automata is potentially generally synchronous, deterministic or nondeterministic, and infinite. Tile Automata is asynchronous, deterministic or nondeterministic, and finite. Additionally, Tile Automata is restricted to a neighborhood size of two.

Notable, IU in CA is usually possible with systems that contain a very limited number of states.
However, in self-assembly, the simulating system does not only need to simulate the local interactions between existing states, but importantly also build new tiles in valid locations.
Therefore, IU systems in tile self-assembly tend to use a lot more states.

\subsection{Our Contributions}\label{subsec:contributions} In this paper, we push forward the study of IU systems in a few ways. 
% The original definition of intrinsic universality for self-assembly~\cite{Doty_Lutz_Patitz_Schweller_Summers_Woods_2012} contained a subtle error, and since then, a slightly weaker statement was used~\cite{Meunier_Patitz_Summers_Theyssier_Winslow_Woods_2014}.
First, we prove that any passive self-assembly model (such as the aTAM) and variants of active self-assembly with bounded state changes cannot adhere to the stronger non-committal definition of intrinsic universality for self-assembly initially presented in~\cite{Doty_Lutz_Patitz_Schweller_Summers_Woods_2012}.
However, this was later corrected and since then, a slightly more permissive definition for the simulation of dynamics for intrinsically universal systems has been used within self-assembly~\cite{Meunier_Patitz_Summers_Theyssier_Winslow_Woods_2014}.
Although this is indication that the problem with modeling dynamics within passive models is known, to our knowledge, this has not been formally proven before.  

Then, we show that in 2D, the seeded Tile Automata model, with unbounded state changes, does indeed adhere to this stronger non-committal definition of intrinsic universality.
We do this by presenting a temperature-1 seeded TA system, and configuration of an initialized seed assembly, that is IU for all seeded temperature-1 systems in approximately 4600 states. We then show that any temperature TA system can be simulated by a temperature-1 TA system. We also prove that the effect of temperature simulation on the scale of the system's supertiles is bounded. No additional states in the IU system's state set are required to simulate systems greater than temperature-1, extending our result to all seeded TA systems. Following this, we show that, due to the mechanics of TA, our construction can be adapted to prove that 2D Asynchronous Cellular Automata, with a cardinal radius of 1 and neighborhood size of 2, is also IU in approximately 2600 states, which although inefficient, is the first 2D ACA IU result.
These positive results are summarized in Table~\ref{tab:results} together with other known IU results.

Section~\ref{sec:preliminaries} starts by giving precise definitions of the model.
Then, we show that bounded state change systems can never be IU in Section~\ref{sec:impossibility_for_passive_systems}.
Opposing this negative result, we continue to show that Tile Automata systems with their unlimited state changes are IU.
Due to the volume of necessary details, the paper first gives a high-level overview in Section~\ref{sec:hloverview}, that discusses the main gadgets and the framework of how the pieces work together. We reference the more detailed later sections that follow the overview.
Section~\ref{sec:temperature_simulation} then covers the temperature simulation part of the IU framework in depth.
Next, sections~\ref{sec:supertile_section},~\ref{sec:attachment},~and~\ref{sec:transitioning_tiles} detail the supertiles, their construction and how they transition respectively in full detail.
We analyse the number of states in Section~\ref{sec:metrics} and proof the correctness of the simulation in Section~\ref{sec:seeded_results}.
We continue to show how our result transfers over to Cellular Automata in Section~\ref{sec:iu_ta_simulates_2d_async_ca}.
We then summarize the conclusion with Section~\ref{sec:conclusion}.

\begin{table}[t]
% \vspace{-.2cm}
    \centering
\begin{tabular}{| c | c  | c | c | c | c |}\hline
\multicolumn{6}{|c|}{\textbf{Intrinsic Universality Across Models}}\\ \hline\hline
 \textbf{Model}  & \textbf{D} & \textbf{$N$} & \textbf{$|T|$ / $|\Sigma|$} &  \textbf{Scale ($S$)} & \textbf{Reference} \\ \hline
 aTAM & 2D & 5 & $>$ 10M & $O(n^4\log(n))$ & \cite{Doty_Lutz_Patitz_Schweller_Summers_Woods_2012} \\ \hline
 aTAM & 3D & 7 & 152\,000 & $O(n^2 \log(n\tau ))$ & \cite{Hader_Koch_Patitz_Sharp_2019} \\ \hline
 %\textbf{Seeded TA} & \textbf{2D} & \textbf{5} & \textbf{$O(n^3)$} & 1 & \textbf{Theorem \ref{thm:temp1sim}} \\ \hline
   \textbf{Seeded TA Temp-}$1$ & \textbf{2D} & \textbf{5} & \textbf{4600} & $O(n^3)$ & \textbf{Theorem \ref{lem:iut1}} \\ \hline
  \textbf{Seeded TA} & \textbf{2D} & \textbf{5} & \textbf{4600} & $O(\min((\tau n)^3,n^9))$ & \textbf{Theorem \ref{thm:genIU}} \\ \hline\hline
 % aTAM (Dir.) & Async & 3D & ? & 152,000 & $O(|n|^2 log(|n|\tau ))$ & \cite{Hader_Koch_Patitz_Sharp_2019} \\ \hline
 % aTAM (D.R.) & Async & 3D & ? & 152,000 & $O(|n|^2 log(|n|\tau ))$ & \cite{Hader_Koch_Patitz_Sharp_2019} \\ \hline

 Async. Cellular Automata &  1D & 3 & $O(1)$ & unknown & \cite{worsch2013towards} \\ \hline
   \textbf{Block-Pairwise ACA} & \textbf{2D} & \textbf{2} & \textbf{2600} & $O(n^3)$ & \textbf{Theorem \ref{thm:block-pairwise-aca}} \\ \hline
  \textbf{Pairwise ACA} & \textbf{2D} & \textbf{2} & \textbf{$O(1)$} & $O(n^3)$ & \textbf{Theorem \ref{thm:pairwise-aca}} \\ \hline

 %Cellular Automata & Sync. & 2D & 5 & 2 & ? & \cite{Banks_1970} \\ \hline
% Cellular Automata & Sync. & 1D & 3 & 4 & ? & \cite{Ollinger_Richard_2011} \\ \hline
 
 %Automata Networks & ? & ? & ? & ? & ? & \cite{Ríos-Wilson_Theyssier_2022} \\ \hline
 %Turado Hierarchies & Async & 3D & 5 & ? & $8r^3l + 7$ & \cite{Nalin_Theyssier_2022} \\ \hline
%Turado Hierarchies & Async & 2D & 9 & ? & ? & \cite{Nalin_Theyssier_2022} \\ \hline
 
\end{tabular}

\vspace{.2cm}

\caption{An overview of known and new simulation results for types of asynchronous cellular automata including tile assembly models. D is the dimension, $N$ is neighborhood size of the input system, $|T|$ and $|\Sigma|$ are, respectively, the number of tile or state types in the universal system, $S$ is the scale factor, $n$ is the number of states in the input system, %$\Delta$ is the transition rules in the input system 
and $\tau$ is the system's temperature.}

\label{tab:results}
%       \vspace{-.7cm}
\end{table}

\section{Preliminaries}\label{sec:preliminaries}
In this section, we cover the basics of the Tile Automata model. We use many of the same definitions as in \cite{alaniz2023building,chalk2018freezing}. An example of a Tile Automata system can be seen in Figure~\ref{fig:example_system}.

\begin{figure}[t]
    \centering
    \includegraphics[width=1.\textwidth]{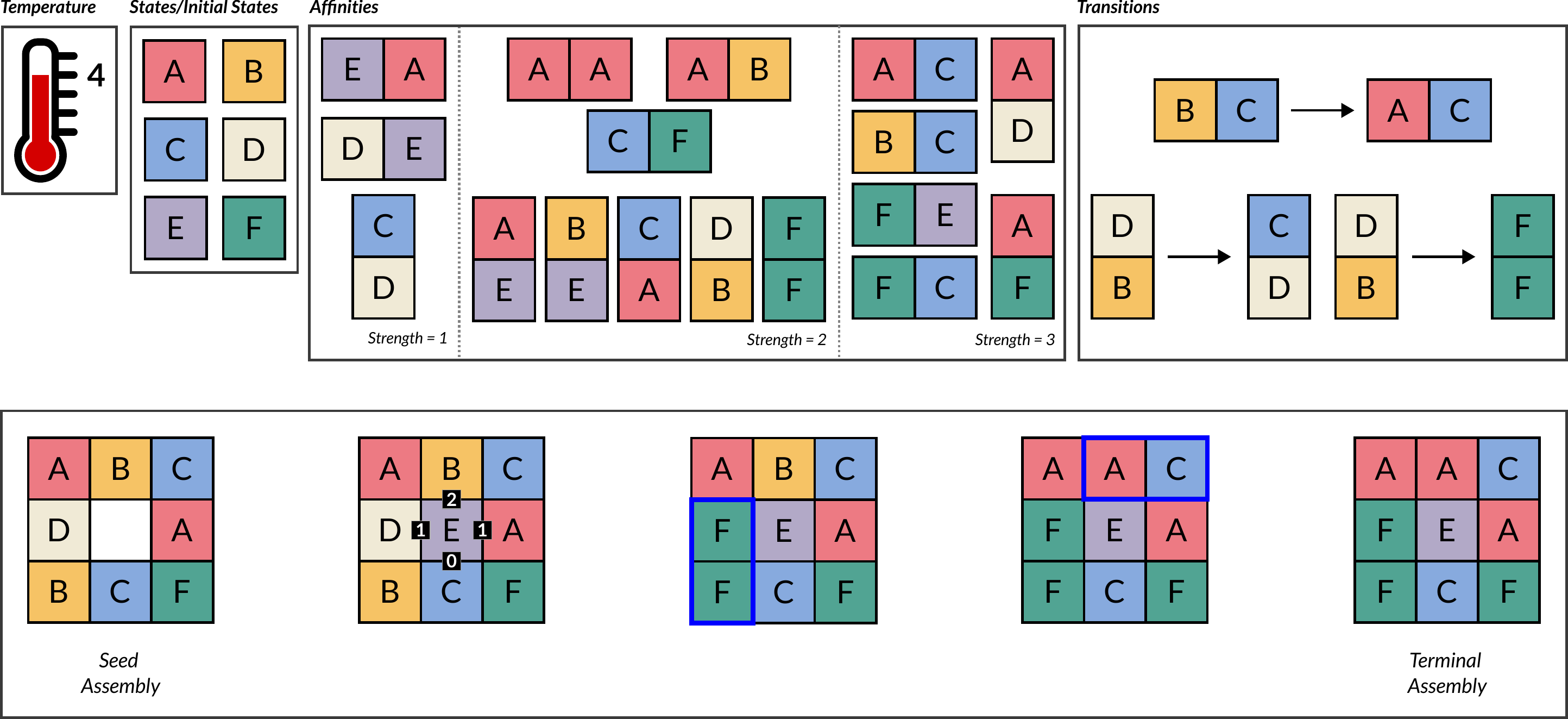}
    \caption{Example of a Tile Automata system with 6 states, a system temperature of 4, affinities of strengths 1, 2, and 3 vertical and horizontal transitions, and a seed assembly. The assembly sequence to a terminal assembly is also shown with the changes highlighted. Due to the affinity strengthening restriction, there is no detachment.}
    \label{fig:example_system}
\end{figure}

\textbf{Tiles.} Let $\Sigma$ be a set of \emph{states}. A tile $t = (\sigma, p)$ is a non-rotatable unit square placed at point $p \in \mathbb{Z}^2$ and has a state of $\sigma \in \Sigma$. Let $\sigma(t)$ be the state of $t$.  Let $\phi$ denote a special state called the \emph{empty} state.

\textbf{Affinity Function.} An \emph{affinity function} $\Pi$ over a set of states $\Sigma$ takes an ordered pair of states $(\sigma_1, \sigma_2) \in \Sigma \times \Sigma$ and an orientation $d \in D$, where $D = \{\perp,\vdash\}$, and outputs an element of $\mathbb{Z}^{0+}$. The orientation $d$ is the relative position of $\sigma_1$ to $\sigma_2$, with $\vdash$  meaning horizontal and $\perp$ meaning vertical. State $\sigma_1$ is the west or north state respectively. We refer to the output as the \emph{Affinity Strength} between these two states.

\textbf{Transition Rules.}
A \emph{Transition Rule} consists of two ordered pairs of states $(\sigma_1, \sigma_2), (\sigma_3, \sigma_4)$ and an orientation $d \in D$, where $D = \{\perp,\vdash\}$. The rule denotes that if the tiles with states $(\sigma_1, \sigma_2)$ are next to each other in orientation $d$ ($\sigma_1$ as the west/north state) they may be replaced by the states $(\sigma_3, \sigma_4)$.

\textbf{Assembly.} An assembly $A$ is a set of tiles (with states in $\Sigma$), such that no two tiles occupy the same position, i.e., for every pair of tiles $t_1 = (\sigma_1, p_1), t_2 = ( \sigma_2, p_2)$, it holds that $p_1 \neq p_2$. For an assembly $A$, let $A(x,y)$ denote the state of the tile with location $(x,y) \in \mathbb{Z}^2$ in $A$ if such a tile exists and $\phi$ (the \emph{empty} state) otherwise.  For a set of states $\Sigma$, let $A^\Sigma$ denote the set of all assemblies over state set $\Sigma$. 

Let the \emph{bond graph}~$B_G(A)$ be formed by taking a node for each tile in $A$ and adding an edge between neighboring tiles $t_1 = (\sigma_1, p_1)$ and $t_2 = (\sigma_2, p_2)$ in orientation~$d$ with a weight equal to $\Pi(\sigma_1, \sigma_2, d)$.
We say an assembly $A$ is $\tau$-stable for some $\tau \in \mathbb{Z}^{0+}$ if the minimum cut through $B_G(A)$ is greater than or equal to $\tau$.

\subsection{The Seeded Tile Automata Model}\label{subsec:the_seeded_ta_model}
In this paper, we investigate the \emph{Seeded Tile Automata} model, which differs from the non-seeded Tile Automata model defined above, by only allowing single tile attachments to a growing seed, similar to the aTAM. Here we use many of the same definitions as in \cite{alaniz2023building}.

\textbf{Seeded Tile Automata.}
A Seeded Tile Automata system~$\Gamma$ is a 6-tuple $\{ \Sigma, \Lambda, \Pi, \Delta, s, \tau\}$ where $\Sigma$ is a set of states, $\Lambda \subseteq \Sigma$ a set of \emph{initial states}, $\Pi$ is an \emph{affinity function}, $\Delta$ is a set of \emph{transition rules},  $s$ is a stable assembly called the \emph{seed} assembly consisting of tiles in states contained in $\Sigma$, and $\tau \in \mathbb{Z^+}$ is the \emph{temperature} (or \emph{threshold}). When we refer to a \emph{tile set} (or equivalently \emph{rule set}) we mean the four parameters $(\Sigma, \Pi, \Delta, \tau)$, that is, the states, the affinity function, the transition rules, and the temperature.  A system $\Gamma = \{ \Sigma, \Lambda, \Pi, \Delta, s, \tau\}$ is said to \emph{use} rule/tile set $(\Sigma, \Pi, \Delta, \tau)$.

\textbf{Attachment Step.}
A tile $t = (\sigma, p)$ may attach to an assembly $A$ at temperature $\tau$ to build an assembly $A' = A \bigcup t$ if $A'$ is $\tau$-stable and $\sigma \in \Lambda$. We denote this as $A \rightarrow_{\Lambda, \tau} A'$.

\textbf{Transition Step.}
An assembly $A$ can transition to an assembly $A'$ if there exist two neighboring tiles $t_1 = (\sigma_1, p_1), t_2 = (\sigma_2, p_2) \in A$ (where $t_1$ is the west or north tile) such that there exists a transition rule in $\Delta$ with the first pair being $(\sigma_1, \sigma_2)$ and ${A' = (A \setminus \{t_1, t_2\} ) \bigcup \{ t_3 = (\sigma_3, p_1), t_4 = (\sigma_4, p_2)\}}$. We denote this as $A \rightarrow_\Delta A'$.

\textbf{Affinity Strengthening.}
We only consider transitions rules that are affinity strengthening, meaning for each transition rule $((\sigma_1, \sigma_2), (\sigma_3, \sigma_4), d)$, the bond between $(\sigma_3, \sigma_4)$ must be at least the affinity strength of $(\sigma_1, \sigma_2)$ and it must also maintain or increase any other neighbor affinities. Formally, $\Pi(\sigma_3, \sigma_4, d) \geq \Pi(\sigma_1, \sigma_2, d)$ and $ \Pi(\sigma_3,\sigma_i, d) \geq \Pi(\sigma_1,\sigma_i, d)$ and $\Pi(\sigma_4,\sigma_j, d) \geq \Pi(\sigma_2,\sigma_j, d)$ $\forall i,j \in \Sigma$. This ensures that transitions may not induce cuts in the bond graph.

\textbf{Producibles.}
We refer to both attachment and transition steps as production steps and say that $A \rightarrow^{\Gamma}_1 A'$ if either $A \rightarrow_{\Lambda, \tau} A'$ or $A \rightarrow_{\Delta} A'$.  For any sequence of assemblies $\{A_1, A_2, \ldots A_k\}$ such that $A_i \rightarrow^{\Gamma}_1 A_{i+1}$ for all $1\leq i < k$, we say that $A_k$ is producible from $A_1$, and write $A_1\rightarrow^{\Gamma} A_k$.  Note that for any assembly $A$, $A\rightarrow^{\Gamma} A$.  We say $A\rightarrow^{\Gamma}_{\geq 1} B$ if $A\rightarrow^{\Gamma} B$ and $A\neq B$.
For a Tile Automata system $\Gamma = \{ \Sigma, \Lambda, \Pi, \Delta, s, \tau\}$ we refer to the set 
$\prodasm{\Gamma}= \{s\} \bigcup \{A | s\rightarrow^\Gamma A\}$ as the \emph{producible assemblies} of $\Gamma$.

\textbf{Terminal Assemblies.}
The set of terminal assemblies for a Tile Automata system $\Gamma = \{ \Sigma, \Lambda, \Pi, \Delta, s, \tau\}$ is written as $TERM(\Gamma)$. This is the set of assemblies that cannot grow or transition any further. Formally, an assembly $A \in TERM(\Gamma)$ if $A \in \prodasm{\Gamma}$ and there does not exists any assembly $A' \in \prodasm{\Gamma}$ such that $A\rightarrow^{\Gamma}_1 A'$.% $A \rightarrow_{\Lambda, \tau} A'$ or $A \rightarrow_{\Delta} A'$.

\textbf{Unique Assembly.}
A Tile Automata system $\Gamma = \{ \Sigma, \Lambda, \Pi, \Delta, s, \tau\}$ \emph{uniquely} assembles an assembly $A$ if $A \in TERM(\Gamma)$, and for all $A' \in \prodasm{\Gamma}, A' \rightarrow^{\Gamma} A$.

\subsection{Simulation}\label{subsec:simulation}
In this section, we formally define the concept of one tile automata system \emph{non-committally} simulating another. We use a standard $m$-block simulation in which each tile of an assembly is simulated by a larger $m\times m$ block of tiles in the simulating system.
The definition presented here is the same as that originally presented in \cite{Doty_Lutz_Patitz_Schweller_Summers_Woods_2012}, which we call non-committal IU.
However, as stated before, that paper contained a subtle error that was later corrected to become committal IU.

The difference lies in the \emph{models} concept, see Definition~\ref{def:models}.
In non-committal IU, this definition contains a universal quantifier, whereas the committal version contains a weaker statement.
From here on, we focus on two tile Automata systems $\Gamma_T$ and $\Gamma_S$, where $\Gamma_S$ denotes a system that purports to \emph{simulate} system $\Gamma_T$. Let $\Sigma_{T}$ and $\Sigma_{S}$ denote the set of states used in $\Gamma_T$ and $\Gamma_S$, respectively.

\textbf{$m$-block Supertiles.}
An \emph{m-block supertile} over a set of states $\Sigma$ is a partial function~${\lambda: \mathbb{Z}_m \times \mathbb{Z}_m \rightarrow \Sigma}$, where $\mathbb{Z}_m=\{0,1,\dots,m-1\}$.  Let $B_m^{\Sigma}$ be the set of all \emph{m}-block supertiles over $\Sigma$.  The \emph{m}-block with no domain is said to be \emph{empty}.  For any assembly $\mathcal{A}$ over state space~$\Sigma$, define ${\mathcal{A}}_{x,y}^m$ to be the \emph{m}-block defined by ${\mathcal{A}}_{x,y}^m (i,j) = {\mathcal{A}}(mx + i, my + j)$ for $0 \le i, j < m$.

\textbf{Supertile representation and mapping.}
We refer to a function $R: B_m^{\Sigma_{S}} \rightarrow \Sigma_{T} \bigcup \{\phi\}$ as an \emph{m-block representation function}.  We require $R(B)=\phi$ for the empty $m$-block, and for any non-empty $m$-block $B$ for which $R(B)=\phi$, we say $B$ maps to a \emph{ghost tile}.  For a given $m$-block representation function $R$, define the partial function $R^*: A^{\Sigma_S} \rightarrow A^{\Sigma_T}$ such that~${R^*(\mathcal{A}) = \mathcal{A'}}$ if and only if $A'(x,y) = R(A^m_{x,y})$ for all $(x,y)\in \mathbb{Z}^2$.

\textbf{c-Fuzz.} 
The concept of c-fuzz is basically that a macroblock may have a bounded number of ``extra'' tiles attached to it without altering its mapping. This allows a simulating system to make minor intermediate attachments while enacting the simulation. Another way to think of c-fuzz is as a reasonable allowance for limited-size non-empty macro-blocks (that map to an empty tile in the simulated system) to be used in the simulation process. Formally, a mapping $R^*(A) = A'$ is said to map to $A$ with at most $c$-fuzz, for some $c \in \mathbb{Z}^+$, if and only if for all non-empty blocks $A^m_{x,y}$ it is the case that $R(A^m_{x + u, y + v}) \neq \phi$ for some integers~$u,v \in [-c , c]$.  In other words, any non-empty macro blocks that map to $\phi$ (i.e., ghost tiles) are only at most $c$ macroblocks away from a macroblock that maps to a real (non-empty) tile.  We say a Tile Automata system achieves $c$-fuzz under mapping $R^*$ if each producible assembly of the system achieves at most $c$-fuzz when mapped by $R^*$.

\begin{definition} [Equivalent Productions]
    We say $\Gamma_S$ has equivalent productions to $\Gamma_T$ (under~$R$) with up to $c$-fuzz, and write $\Gamma_S \Leftrightarrow_c \Gamma_T$, if the following hold:
    \begin{enumerate}
        \item $\left\{R^*(A') | A' \in \prodasm{\Gamma_S}\right\} = \prodasm{\Gamma_T}$.
        \item $\Gamma_S$ achieves $c$-fuzz under $R^*$.
    \end{enumerate}
\end{definition}

\begin{definition}[Follows]
    We say that $\Gamma_T$ follows $\Gamma_S$ (under $R$), and write $\Gamma_T \dashv_R \Gamma_S$, if~$A' \to^{\Gamma_S} B'$, for some $A',B' \in \prodasm{\Gamma_S}$, implies that $R^*(A') \to^{\Gamma_T} R^*(B')$. % You're not doing anything bad
\end{definition}

\begin{definition}[Non-Committally Models]\label{def:models}
    We say that $\Gamma_S$ (non-committally) models $\Gamma_T$, and write $\Gamma_S \models_R \Gamma_T$,  if $A \to^{\Gamma_T} B$ for some $A,B \in \prodasm{\Gamma_T}$, implies that for all $A'$ such that $R^*(A')=A$, $A' \to^{\Gamma_S} B'$ for some $B'\in\prodasm{\Gamma_S}$ with $R^*(B')=B$. % You can do something good
\end{definition}

\begin{definition}[Non-Committal Simulation]
    We say $\Gamma_S$ (non-committally) simulates $\Gamma_T$ if for some $c\in \mathbb{Z^+}$, $\Gamma_S \Leftrightarrow_c \Gamma_T$ (equivalent productions), $\Gamma_T \dashv_R \Gamma_S$ and $\Gamma_S \models_R \Gamma_T$ (equivalent dynamics).  We say the simulation is \emph{clean} if it holds for $c=1$, and we say the simulation achieves $c$-fuzz more generally.
\end{definition}

\begin{definition}[Non-committal Intrinsic Universality.]
    A rule (tile) set $I = \{ \Sigma, \Pi, \Delta, \tau\}$ is said to be intrinsically universal for a set of systems $U$ if for all $\Gamma_T \in U$, there exists a system $\Gamma_S = \{ \Sigma, \Lambda_T, \Pi, \Delta, s_T, \tau\}$ that non-committally simulates $\Gamma_T$.  
    The set $U$ itself is said to be intrinsically universal if there exists a rule set $I$ used by some system within $U$ such that $I$ is intrinsically universal for $U$.  A model is said to be intrinsically universal if the set of all systems within that model is intrinsically universal.
\end{definition}

We use the term \emph{non-committal} simulation to emphasize that the simulation definition we use is stronger than what is used in prior work, which we call \emph{committal}.  For the remainder of this paper, we deal exclusively with \emph{non-committal} simulation and will just use the term \emph{simulation} when not directly comparing with previous versions of simulation.
\section{Impossibility for Passive or Bounded State Change Systems}\label{sec:impossibility_for_passive_systems}% impPass}

In this section, we show that systems lacking the full state changing capability of the seeded Tile Automata model cannot achieve intrinsic universality under non-committal simulation.  This includes well-studied models such as the Abstract Tile Assembly Model (aTAM)~\cite{winfree1998algorithmic} and \emph{freezing} variants of the seeded Tile Automata Model~\cite{chalk2018freezing}.  The key aspect of non-committal simulation that is impossible for these models is the non-committal modeling requirement of our simulation definition.  In this section we show the impossibility of simulating a specific \emph{passive} system, with the key use of the non-committal modeling requirement being used to prove Lemma~\ref{lemma:extensibility}.

\begin{definition}[$k$-burnout, bounded, unbounded, passive, freezing]
    For a non-negative integer~$k$, a system is a $k$-burnout system if each tile in an assembly is restricted to only changing state at most $k$ times.  A system is called bounded if it is $k$-burnout for some $k$, and unbounded otherwise.  $0$-burnout systems are termed as passive.  A freezing system~\cite{chalk2018freezing} is one in which state change rules are such that a tile can never return to a previous held state.
\end{definition}

\begin{observation}
    Any aTAM system is a passive ($0$-burnout) system, and any freezing seeded TA system that uses $|\Sigma|$ states is bounded and a $|\Sigma|$-burnout system. 
\end{observation}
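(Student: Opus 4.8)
The plan is to prove each of the two claims in the observation directly from the relevant definitions, since this is a ``sanity-check'' observation rather than a deep result.

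First I would handle the aTAM claim. The key point is that in the aTAM, a \emph{tile} is a passive object: once a tile with a given glue-type is placed at a location, it never changes. Concretely, the aTAM has no transition rules at all — the only production step is an attachment step, and attachment only adds a new tile at a previously empty location without altering any existing tile. Thus no tile in an aTAM assembly ever changes state. By the definition of $k$-burnout, a system is $0$-burnout precisely when each tile is restricted to changing state at most $0$ times, i.e.\ never. Since an aTAM system trivially satisfies this (there is no mechanism whatsoever by which a placed tile can change), every aTAM system is $0$-burnout and hence passive.

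Next I would handle the freezing seeded TA claim. Let $\Gamma$ be a freezing seeded TA system using state set $\Sigma$ with $|\Sigma|$ states. By the definition of freezing, the transition rules are such that a tile can never return to a previously held state. The argument is then a straightforward counting/monotonicity bound: fix any individual tile at a fixed location $p$ and track the sequence of states it takes across a production sequence. Each time that tile changes state (via a transition step, the only way a placed tile's state can change), it moves to a state it has never held before at $p$. Since there are only $|\Sigma|$ states in total, a tile can hold at most $|\Sigma|$ distinct states over its lifetime, and therefore it can change state at most $|\Sigma|-1$ times. In particular it changes at most $|\Sigma|$ times, so by definition $\Gamma$ is $|\Sigma|$-burnout, and being $k$-burnout for this value of $k$ it is bounded.

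The proof is essentially immediate once the definitions are unwound, so I do not anticipate a genuine obstacle; the only point requiring a little care is the freezing bound. One must be sure to count \emph{distinct} states rather than transition \emph{steps}, and to note that a transition step always produces a state not previously held, so no state is revisited and the per-tile change count is capped by $|\Sigma|$. I would state the per-tile tracking argument explicitly to make the bound transparent, and I would note that the $|\Sigma|-1$ sharper bound is not needed — the looser $|\Sigma|$-burnout claim suffices for the observation and matches what the later impossibility result (Section~\ref{sec:impossibility_for_passive_systems}) requires.
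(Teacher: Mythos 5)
Your proof is correct and matches the paper's (implicit) reasoning: the paper states this as an observation with no proof, treating it as immediate from the definitions of $0$-burnout, freezing, and $k$-burnout, which is exactly what your unwinding establishes. The per-tile counting argument for the freezing case --- each state change reaches a never-before-held state, so at most $|\Sigma|-1 \leq |\Sigma|$ changes occur --- is the right justification and requires no further detail.
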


\begin{definition}
    Define $X_n$ to be the passive seeded Tile Automata system consisting of states~${\Sigma = \{S,a_1,a_2,\ldots, a_n\}}$ with seed tile in state $S$, and east-west affinity between $S$ and each $a_i$ of strength equal to the system temperature $\tau$.  Let $s\cdot a_i$ denote the producible assembly of this system obtained by attaching a tile of state $a_i$ to the east of the seed tile.
\end{definition}

For the remainder of this section, let $R$ denote a proposed $m$-block mapping function from macro blocks from a proposed simulator system to tiles from the system $X_n$.  

\begin{lemma}\label{lemma:extensibility}
For any system $Y$ that simulates $X_n$ under mapping $R$, and for any valid assembly sequence $\langle A_{\pi_1},\ldots, A_{\pi_m} \rangle$ of $Y$ such that for all $1\leq i \leq m$, $R^*(A_{\pi_i}) = s$, either:
\begin{enumerate}
\item For all $1\leq i \leq n$ there exists an assembly $A_i$ such that $A_{\pi_m} \rightarrow^Y_1 A_i$ and $R^*(A_i)=s\cdot x_i$, or
\item there exists an assembly $A_{\pi_{m+1}}$ such that $A_{\pi_m} \rightarrow^Y_1 A_{\pi_{m+1}}$ and $R^*(A_{\pi_{m+1}}) = s$.
\end{enumerate}
\end{lemma}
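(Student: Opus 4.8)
My proof would hinge on how rigid the target system $X_n$ is, together with the universally quantified modeling requirement of Definition~\ref{def:models}. First I would record the structure of $X_n$: since its only affinity rule places an $a_i$ to the east of the seed at strength $\tau$, and it has no transition rules, the only assemblies reachable from $s$ are $s$ itself and the $n$ assemblies $s\cdot a_1,\ldots,s\cdot a_n$, each of which is \emph{terminal} (no production step of $X_n$ applies to it, as the east position is filled and the $a_i$ have no further affinities). Consequently, applying the \emph{follows} property $X_n\dashv_R Y$ together with equivalent productions to any single-step successor $A'$ of $A_{\pi_m}$ gives $s=R^*(A_{\pi_m})\rightarrow^{X_n} R^*(A')$, so every such image $R^*(A')$ lies in $\{s,s\cdot a_1,\ldots,s\cdot a_n\}$.

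\textbf{Case split on the successors of $A_{\pi_m}$.} I would then distinguish two cases according to whether some single-step $Y$-successor of $A_{\pi_m}$ maps under $R^*$ to $s$. If such a successor $A_{\pi_{m+1}}$ exists, then conclusion~(2) holds immediately. Otherwise every single-step successor of $A_{\pi_m}$ maps to some $s\cdot a_j$, and I would aim to establish conclusion~(1): that \emph{all} $n$ targets $s\cdot a_i$ are reachable from $A_{\pi_m}$ in a single step.

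\textbf{Upgrading modeling to a single step.} Fix $i$. Because $s\rightarrow^{X_n} s\cdot a_i$ and $R^*(A_{\pi_m})=s$, the non-committal modeling requirement (with its crucial universal quantifier over \emph{all} preimages of $s$) yields some $B_i'\in\prodasm{Y}$ with $A_{\pi_m}\rightarrow^{Y} B_i'$ and $R^*(B_i')=s\cdot a_i$. Since $R^*(A_{\pi_m})=s\neq s\cdot a_i$, this $Y$-derivation is nontrivial, so it has a first step $A_{\pi_m}\rightarrow^Y_1 A''$ followed by $A''\rightarrow^{Y} B_i'$. By the first paragraph $R^*(A'')\in\{s,s\cdot a_1,\ldots,s\cdot a_n\}$, and the standing case assumption forces $R^*(A'')\neq s$, hence $R^*(A'')=s\cdot a_j$ for some $j$. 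Finally I apply \emph{follows} to the tail $A''\rightarrow^{Y} B_i'$ to get $s\cdot a_j\rightarrow^{X_n} s\cdot a_i$; since $s\cdot a_j$ is terminal in $X_n$, the only possibility is $j=i$, so $A''$ is a single-step successor of $A_{\pi_m}$ with $R^*(A'')=s\cdot a_i$. As $i$ was arbitrary, conclusion~(1) follows.

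\textbf{Main obstacle.} The delicate point is exactly the jump from the \emph{multi-step} guarantee $A_{\pi_m}\rightarrow^{Y}B_i'$ supplied by modeling to a \emph{single} step landing on the correct target. This is where the terminality of $s\cdot a_j$ does the work: once the simulator takes any single step off a block-configuration mapping to $s$ without remaining at $s$, the \emph{follows} property forbids the target image from changing afterward, so that first deviating step must already commit to the right $a_i$. This is precisely the mechanism by which passivity of $X_n$ (the target has no way to revise a committed choice) is converted into a one-step obligation, and it is the engine of the subsequent impossibility argument.
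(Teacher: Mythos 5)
Your proof is correct and follows essentially the same route as the paper's: invoke the non-committal modeling requirement to obtain a multi-step $Y$-path from $A_{\pi_m}$ to a preimage of $s\cdot a_i$, isolate the first step of that path, and use the \emph{follows} property together with the rigidity of $X_n$ (each $s\cdot a_j$ is terminal) to pin down what that first step maps to. The only differences are cosmetic — you assume disjunct (2) fails and derive (1), whereas the paper assumes (1) fails and derives (2), and your explicit terminality argument spells out a step that the paper's terse claim ``$R^*(B)=s$'' leaves implicit.
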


\begin{proof}
    Suppose constraint (1) does not hold for such a sequence $\langle A_{\pi_1},\ldots A_{\pi_m} \rangle$ of $Y$, i.e. suppose that for some $1\leq i \leq n$ there does not exists an assembly $A_i$ such that $A_{\pi_m} \rightarrow^Y_1 A_i$ and $R^*(A_i)= s\cdot a_i$.  Since $Y \models_R X_n$ ($Y$ non-committally models $X_n$), and $s \rightarrow^{X_n}s\cdot a_i$, it must be that there exists some assembly $A_i$ such that $A_{\pi_m} \rightarrow^Y A_i$ and $R^*(A_i)=s\cdot a_i$, which by definition means there exists an assembly sequence $\langle A_{\pi_m}, B, \ldots, A_i \rangle$.  Therefore,~${A_{\pi_m} \rightarrow^Y_1 B}$, and since $X_n \dashv_R Y$ ($X_n$ follows $Y$), we know that $R^*(B)= s$, which means the sequence~${\langle A_{\pi_1},\ldots, A_{\pi_m} \rangle}$ can be extend with assembly~$B$ to satisfy constraint (2).
\end{proof}

\begin{lemma}\label{lemma:equalLastAssembly}
For any bounded system $Y$ that simulates $X_n$ under mapping $R$ there must exist $A, A_1,\ldots, A_n \in \textbf{PROD}_Y$ such that $R^*(A)=s$, and for all $1\leq i \leq n$, $R^*(A_i)=s\cdot a_i$ and $A \rightarrow^Y_1 A_i$.
\end{lemma}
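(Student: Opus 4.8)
The plan is to iterate Lemma~\ref{lemma:extensibility} starting from the seed of $Y$, and to use the boundedness of $Y$ to force the iteration to terminate in the first alternative of that lemma. First I would fix the seed $s_Y$ of $Y$ and observe that $R^*(s_Y)=s$: since $s$ is the unique $\to^{X_n}$-minimal element of $\textbf{PROD}_{X_n}$ and, by $X_n \dashv_R Y$, every producible of $Y$ maps to an assembly that is $\to^{X_n}$-reachable from $R^*(s_Y)$, equivalent productions forces $R^*(s_Y)\to^{X_n} s$; as $X_n$ only grows, nothing but $s$ reaches $s$, so $R^*(s_Y)=s$. This supplies the length-one sequence $\langle s_Y\rangle$ of assemblies all mapping to $s$ with which to seed the iteration.

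The technical core is a finite bound $N$ on the length of any assembly sequence of $Y$ all of whose assemblies map to $s$. Because $s$ consists of a single tile and the simulation achieves $c$-fuzz, every assembly $\mathcal{A}$ with $R^*(\mathcal{A})=s$ is supported inside a fixed box $\mathcal{R}$ of at most $M:=((2c+1)m)^2$ positions: every non-empty $m$-block of $\mathcal{A}$ lies within Chebyshev distance $c$ of the one real-tile block. Inside $\mathcal{R}$ only finitely many production steps can ever occur. Seeded TA has no detachment, so there are at most $M$ attachment steps; and since $Y$ is bounded, say $k$-burnout, each of the at most $M$ tiles changes state at most $k$ times, bounding the number of transition steps by $kM$. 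Hence any such sequence has at most $N:=(k+1)M$ steps, since the quantity (number of attachments performed) plus (number of state changes performed) starts at $0$ at the seed, strictly increases at every production step, and never exceeds $N$.

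Finally I would run the iteration: apply Lemma~\ref{lemma:extensibility} to $\langle s_Y\rangle$; if alternative (1) holds we are done, and otherwise alternative (2) appends an assembly that still maps to $s$, giving a longer sequence of $s$-mapping assemblies to which the lemma applies again. Each repetition lengthens a valid assembly sequence whose assemblies all map to $s$, so by the bound $N$ the process must halt after finitely many extensions with alternative (1) holding for the current last assembly $A=A_{\pi_m}$. This $A$ satisfies $R^*(A)=s$ and yields, for every $1\le i\le n$, an assembly $A_i$ with $A\to^Y_1 A_i$ and $R^*(A_i)=s\cdot a_i$; each is in $\textbf{PROD}_Y$ because it is reachable from $s_Y$, which is exactly the claimed family of assemblies.

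The main obstacle, and the only place the bounded hypothesis enters, is the finiteness estimate of the second paragraph. For an \emph{unbounded} system the tiles confined to $\mathcal{R}$ could transition arbitrarily often, the sequence of $s$-mapping assemblies could be extended indefinitely, and alternative (1) would never be forced, so the argument collapses. This is precisely the asymmetry that will make bounded (in particular passive) systems unable to be non-committally intrinsically universal, and I would take care to confirm that a genuine transition step always increases the state-change count by at least one so that the strict-increase property of the potential holds.
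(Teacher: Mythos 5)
Your proposal is correct and follows essentially the same route as the paper: bound the length of any assembly sequence of $Y$ whose assemblies all map to $s$ (via the finite scale factor, the $c$-fuzz bound, the absence of detachment, and the $k$-burnout hypothesis), take a maximal such sequence, and apply Lemma~\ref{lemma:extensibility} to force its first alternative. Your extra details---explicitly verifying $R^*(s_Y)=s$ so the iteration can start, and the explicit strictly-increasing potential giving the bound $(k+1)\left((2c+1)m\right)^2$---merely make rigorous what the paper states tersely.
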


\begin{proof}
    Suppose a bounded system $Y$ simulates $X_n$.  Since $Y$ is bounded, there must exist~${M \in \mathbb{Z^+}}$ such that for all assembly sequences $\langle A_{\pi_1},\ldots, A_{\pi_m} \rangle$ of $Y$ where $R^*(A_{\pi_i}) = s$, it is the case that $m \leq M$.  This is the case since each assembly $A_{\pi_i}$ maps to a single tile under~$R^*$, thereby limiting the size of each $A_{\pi_i}$ to a finite integer based on the (finite) scale-factor of the simulation and the (finite) fuzz factor $c$ of the simulation.  The number of state changes and tile attachments for each assembly $A_{\pi_i}$ therefore has a finite bound in a system with a finite burnout number.

    Since the length of such sequences cannot be extended infinitely, there must exist a sequence $\langle A_{\pi_1},\ldots A_{\pi_m} \rangle$ for which no additional $A_{\pi_{m+1}}$ exists for which $A_{\pi_m} \rightarrow^Y_1 A_{\pi_{m+1}}$ and $R^*(A_{\pi_{m+1}}) = s$.  For this sequence that must exist, Lemma~\ref{lemma:extensibility} implies that 
    for all $1\leq i \leq n$ there exists an assembly $A_i$ such that $A_{\pi_m} \rightarrow^Y_1 A_i$ and $R^*(A_i)=s\cdot x_i$.  Therefore, there must exist $A=A_{\pi_m}, A_1, \ldots, A_n \in \textbf{PROD}_Y$ that satisfy the requirements of the lemma.
\end{proof}

\begin{lemma}\label{lemma:noBounded}
    A bounded seeded TA system with fewer than $n^\frac{1}{5}$ states cannot simulate $X_n$.
\end{lemma}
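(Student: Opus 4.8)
The plan is to reduce the statement to a counting bound on the one-step successors of a single assembly, using Lemma~\ref{lemma:equalLastAssembly} to produce that assembly. Let $N$ denote the number of states of $Y$. By Lemma~\ref{lemma:equalLastAssembly}, if a bounded system $Y$ simulates $X_n$ under $R$, then there is a single producible assembly $A$ with $R^*(A)=s$ such that for every $1\le i\le n$ there is an assembly $A_i$ with $A\rightarrow^Y_1 A_i$ and $R^*(A_i)=s\cdot a_i$. Since the states $a_1,\dots,a_n$ are pairwise distinct, the tiles $s\cdot a_1,\dots,s\cdot a_n$, and hence the assemblies $A_1,\dots,A_n$, are pairwise distinct. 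Therefore $A$ admits at least $n$ distinct one-step successors in $Y$, and it suffices to show that any fixed assembly mapping to a single tile has at most $O(N^5)$ one-step successors.

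Next I would bound the number of distinct $A'$ with $A\rightarrow^Y_1 A'$ in terms of $N$ and the size of $A$. Each such $A'$ is obtained by a single attachment or a single transition. An attachment is determined by the empty cell $p$ adjacent to $A$ at which the new tile appears together with its state in $\Lambda\subseteq\Sigma$, contributing at most $N\cdot\beta(A)$ successors, where $\beta(A)$ is the number of boundary cells of $A$. A transition is determined by the oriented adjacent pair of tiles it rewrites together with the output pair $(\sigma_3,\sigma_4)$, contributing at most $N^2\cdot\varepsilon(A)$ successors, where $\varepsilon(A)$ is the number of adjacent tile pairs in $A$. Since $\beta(A)$ and $\varepsilon(A)$ are both $O(|A|)$, the assembly $A$ has $O(|A|\cdot N^2)$ one-step successors, so $n=O(|A|\cdot N^2)$.

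Finally I would bound $|A|$. Because $R^*(A)=s$ is a single seed tile and the simulation achieves at most $c$-fuzz, every non-empty $m$-block of $A$ lies within $c$ blocks of the block mapping to $s$; thus $A$ meets at most $(2c+1)^2$ blocks and $|A|\le(2c+1)^2m^2$, giving $n=O((2c+1)^2m^2N^2)$. The main obstacle is exactly this last estimate: the naive bound retains the scale $m$ and the fuzz $c$, and one must argue that these geometric parameters cannot be exploited to beat a polynomial dependence on $N$. Concretely, I would aim to show that the number of \emph{distinguishing} attachment sites and rewritable edges available at $A$ is governed by the state set rather than the block size, so that the factor $|A|$ can be replaced by $O(N^3)$ and the estimate collapses to $n=O(N^5)$. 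This yields $N=\Omega(n^{1/5})$, so a bounded system with fewer than $n^{1/5}$ states cannot simulate $X_n$. I expect this conversion from a geometric successor count into a pure state-count bound to be the crux of the proof, as everything else is elementary counting.
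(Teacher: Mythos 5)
Your reduction to Lemma~\ref{lemma:equalLastAssembly} matches the paper's first step, but after that your plan has a genuine gap, and you have correctly identified where it is: the conversion of the successor count from a geometric bound $O(|A|\cdot N^2)$ into a pure state-count bound. The problem is that no counting argument alone can achieve this. The number of one-step successors of $A$ really can scale with $|A|$, and $|A|$ scales with the block size $m$ (and fuzz $c$), neither of which is bounded by any function of $N$; the simulation definition places no such constraint. In other words, as far as counting is concerned, nothing prevents the simulator from encoding the $n$ different states $a_1,\dots,a_n$ by attachments or transitions at $n$ \emph{different locations} of $A$, defeating any bound of the form $n=O(\mathrm{poly}(N))$. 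What rules this scenario out is not counting but dynamics, and your proposal never invokes the \emph{follows} condition $X_n \dashv_R Y$, which is exactly the missing ingredient.

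The paper's argument is a case analysis on the change locations of the $A_i$. If two successors $A_i$ and $A_j$ modify $A$ at non-overlapping sets of points, the two modifications commute: the rule or attachment producing $A_i$ is still applicable to $A_j$ and vice versa (attachments only add bond-graph edges, and TA transition rules depend only on the adjacent pair of states), so there is a common assembly $A_{i\oplus j}$ with $A_i \rightarrow^Y A_{i\oplus j}$ and $A_j \rightarrow^Y A_{i\oplus j}$. The follows condition then forces $R^*(A_{i\oplus j})$ to be reachable in $X_n$ from both $s\cdot a_i$ and $s\cdot a_j$, which are distinct terminal assemblies --- a contradiction. Hence \emph{all} $n$ successors must have pairwise overlapping change locations; since each change occupies one point or two adjacent points, their union is confined to at most $5$ cells, and the $N^5$ possible state assignments of that region must distinguish $n$ distinct assemblies, giving $N \geq n^{1/5}$. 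This diamond-plus-follows argument is precisely the "crux" you anticipated but left unproved, so your proposal as written cannot be completed along the purely counting-based route you sketch.
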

\begin{proof}
If a bounded system $Y$ simulates $X_n$, then by Lemma~\ref{lemma:equalLastAssembly} it must be the case that there exists ${A, A_1,\ldots, A_n \in \textbf{PROD}_Y}$ such that $R^*(A)=s$, and for all $1\leq i \leq n$ it holds that $R^*(A_i)=s\cdot a_i$ and $A \rightarrow^Y_1 A_i$. Since $A \rightarrow^Y_1 A_i$ for each $A_i$, we know that each pair of assemblies $A_i$ and $A_j$ differ at either a single point or two adjacent points (corresponding to a tile attachment or a pairwise state change).  We now consider two cases:

Case 1:  Suppose there exists an $i$ and $j$ such that $A_i$ and $A_j$ differ at non-overlapping points.  In this case, we know that the rule or attachment applied to $A$ to attain $A_i$ is also applicable to $A_j$, and vice versa. This implies there exists a common assembly $A_{i\oplus j}$ such that $A_i \rightarrow^Y A_{i\oplus j}$ and $A_j \rightarrow^Y A_{i\oplus j}$.  But since $X_n \dashv_R Y$ ($X_n$ follows $Y$), it must then be the case that $R^*(A_i) \rightarrow^{X_n} R^*(A_{i\oplus j})$ and $R^*(A_j) \rightarrow^{X_n} R^*(A_{i\oplus j})$.  This implies that $R^*(A_{i\oplus j}) = s\cdot a_i$ and $R^*(A_{i\oplus j}) = s\cdot a_j$, which is a contradiction.

Case 2: Suppose the points of difference for all assemblies $A_i$ overlap each other in at least one of their points.  Since each assembly's pair of points are adjacent (in the case that there are two), this implies that the union of all such points of difference is at most 5 points.  If $Y$ has $|\Sigma_Y|$ states, then there are at most $|\Sigma_Y|^5$ distinct state assignments possible for this 5-tile region.  Thus, if $|\Sigma_Y| < n^{\frac{1}{5}}$, then there are fewer than $n$ distinct 5-tile regions, implying that $A_i = A_j$ for some $i\neq j$, which is a contradiction since $R(A_i) \neq R(A_j)$. 
\end{proof}

\begin{lemma}\label{lemma:noPassiveSim}
    A passive seeded TA system with fewer than $n$ states cannot simulate $X_n$.
\end{lemma}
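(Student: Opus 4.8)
The plan is to mirror the structure of the proof of Lemma~\ref{lemma:noBounded}, but to exploit the absence of transition rules to sharpen the counting from $|\Sigma_Y|^5$ down to $|\Sigma_Y|$. Since a passive system is $0$-burnout and hence bounded, I would first invoke Lemma~\ref{lemma:equalLastAssembly} to obtain assemblies $A, A_1, \ldots, A_n \in \textbf{PROD}_Y$ with $R^*(A) = s$, and with $R^*(A_i) = s \cdot a_i$ and $A \rightarrow^Y_1 A_i$ for every $1 \le i \le n$. Because $Y$ is passive, each single production step $A \rightarrow^Y_1 A_i$ must be a tile attachment rather than a pairwise state change, so $A_i$ and $A$ differ at exactly one position $p_i$, which is empty ($\phi$) in $A$ and carries some state of $\Sigma_Y$ in $A_i$.

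Next I would rule out the analogue of Case~1, namely that two of these attachments occur at distinct positions. Suppose $p_i \ne p_j$ for some $i \ne j$. Since attachments only add tiles and an attachment that is valid on $A$ remains valid on any assembly containing $A$ (adding tiles can only increase bond-graph cuts), both attachments are compatible and there is a common assembly $A_{i\oplus j}$ with $A_i \rightarrow^Y A_{i\oplus j}$ and $A_j \rightarrow^Y A_{i\oplus j}$. Applying $X_n \dashv_R Y$ gives $s\cdot a_i \rightarrow^{X_n} R^*(A_{i\oplus j})$ and $s\cdot a_j \rightarrow^{X_n} R^*(A_{i\oplus j})$. The crucial observation I would use here is that in $X_n$ the only affinity is the strength-$\tau$ bond placing some $a_k$ east of $S$, so each $s\cdot a_i$ is a terminal assembly; hence both relations force $R^*(A_{i\oplus j}) = s\cdot a_i$ and $R^*(A_{i\oplus j}) = s\cdot a_j$, a contradiction. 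Therefore all the $p_i$ coincide at a single common point $p$.

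Finally, since every $A_i$ agrees with $A$ except at the single point $p$, each $A_i$ is completely determined by the state it assigns to $p$. The assemblies $A_i$ are pairwise distinct, because the $R^*(A_i) = s\cdot a_i$ are pairwise distinct, so the $n$ states assigned to $p$ across $i = 1, \ldots, n$ must be pairwise distinct elements of $\Sigma_Y$. This yields $|\Sigma_Y| \ge n$, contradicting the hypothesis that $Y$ has fewer than $n$ states.

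I expect the main obstacle to be making airtight the step that rules out distinct attachment positions: I must verify both the commutativity of the two attachments (that they genuinely yield a common producible assembly $A_{i\oplus j}$ in $Y$) and the terminality of $s \cdot a_i$ in $X_n$, since the contradiction in the Case~1 analogue hinges entirely on $s\cdot a_i$ admitting no nontrivial $\rightarrow^{X_n}$ successors. Everything after that reduces to the elementary counting argument that distinguishing $n$ outcomes at one cell requires at least $n$ states.
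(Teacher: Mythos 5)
Your proposal is correct and takes essentially the same route as the paper's own proof: invoke Lemma~\ref{lemma:equalLastAssembly}, then split into the case of distinct attachment points (ruled out via a common assembly $A_{i\oplus j}$ and the follows condition $X_n \dashv_R Y$) and the case of a common point (ruled out by counting states). The only difference is that you spell out two steps the paper leaves implicit—the terminality of $s\cdot a_i$ in $X_n$ and the fact that attachability is preserved when tiles are added—both of which are valid and only strengthen the write-up.
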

\begin{proof}
Suppose a proposed system $Y$ with fewer than $n$ states simulates $X_n$ with representation function $R^*$.  By Lemma~\ref{lemma:equalLastAssembly} there exists $A, A_1,\ldots, A_n \in \textbf{PROD}_Y$ such that $R^*(A)=s$, and for all $1\leq i \leq n$, $R^*(A_i)=s\cdot a_i$ and $A \rightarrow^Y_1 A_i$.  As each $A_i$ is attained by attaching a single tile to $A$, let point $p_i$ denote the point of this attached tile in assembly $A_i$.  Now consider two cases:

Case 1:   Suppose there exist $1\leq i,j \leq n$ such that $p_i \neq p_j$.  In this case the tile attached to form $A_j$ from $A$ can also be attached to $A_i$, and vice versa, implying that the assembly consisting of attaching both such tiles, call it $A_{i\oplus j}$, is such that $A_i \rightarrow^Y A_{i\oplus j}$ and $A_j \rightarrow^Y A_{i\oplus j}$.  But since $X_n \dashv_R Y$ ($X_n$ follows $Y$), it must be that $R^*(A_i)\rightarrow^{X_n} R^*(A_{i\oplus j})$ and $R^*(A_j)\rightarrow^{X_n} R^*(A_{i\oplus j})$, which implies that $R^*(A_{i\oplus j}) = s\cdot a_i$ and $R^*(A_{i\oplus j}) =s\cdot a_j$, which is a contradiction.

Case 2:  Suppose instead that $p_i = p_j$ for all $1\leq i,j\leq n$.  Since $Y$ has less than $n$ states, there must exist some $1\leq i,j\leq n$ such that $A_i$ and $A_j$ use the same state at point $p_i = p_j$.  This implies that $A_i = A_j$, which is a contradiction since $R^*(A_i) \neq R^*(A_j)$.
\end{proof}

Lemmas~\ref{lemma:noBounded} and \ref{lemma:noPassiveSim} show that without the unbounded state change capability of the full seeded Tile Automata model, there exists a simple class of passive systems that cannot be simulated under non-committal simulation without arbitrarily larger state spaces. This gives us the following negative results for two established models in regards to non-committal simulation, directly following from Lemma~\ref{lemma:noPassiveSim} and~\ref{lemma:noBounded}:

\begin{restatable}{theorem}{atamimp}
\label{thm:atamImp}
    The Abstract Tile Assembly Model (aTAM) is not intrinsically universal under non-committal simulation.
\end{restatable}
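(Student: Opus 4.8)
The plan is to obtain the theorem as an almost immediate corollary of Lemma~\ref{lemma:noPassiveSim}, once two bridging observations are in place. The argument proceeds by contradiction: I would assume that the aTAM is intrinsically universal under non-committal simulation, so that there exists a single finite universal tile set $I = \{\Sigma_I, \Pi, \Delta, \tau\}$ with the property that every aTAM system is non-committally simulated by some system using $I$. Let $|\Sigma_I|$ denote the fixed, finite number of states of this universal set; the whole point is that this count is a constant that does not depend on the system being simulated.

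The first bridging step is to confirm that each $X_n$ is a legitimate member of the class the aTAM must be universal for. Although $X_n$ is phrased as a passive seeded TA system, it is realizable as an ordinary aTAM system: give the seed tile $S$ a single east glue of strength $\tau$, and give each tile type $a_i$ a matching west glue of strength $\tau$ (and no other binding glues). Each $a_i$ then attaches to the east of $S$ exactly as in $X_n$, producing the $n$ distinct assemblies $s \cdot a_i$. Hence $X_n \in U$ for every $n$, and so a universal tile set must be able to simulate it.

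The second bridging step is to note that the universal simulator inherits passivity. By the observation that any aTAM system is a $0$-burnout (passive) system, the simulating system using $I$ is a passive seeded TA system with exactly $|\Sigma_I|$ states, and this number is independent of $n$. I would then select $n = |\Sigma_I| + 1$. The proposed simulator is now a passive system with fewer than $n$ states that purports to simulate $X_n$, directly contradicting Lemma~\ref{lemma:noPassiveSim}. Therefore no universal tile set $I$ can exist, and the aTAM is not intrinsically universal under non-committal simulation.

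The genuine mathematical content of this statement already resides in Lemma~\ref{lemma:noPassiveSim} (and, behind it, Lemmas~\ref{lemma:extensibility} and~\ref{lemma:equalLastAssembly}), so the only real obstacle here is the modelling bookkeeping rather than any new combinatorics. I must be careful on two points: that $X_n$ genuinely sits inside the aTAM, so that a universal aTAM tile set is obligated to simulate it; and that the simulator's state count is a single constant independent of $n$, so that the inequality $|\Sigma_I| < n$ can always be forced by taking $n$ large. Once both are pinned down, the contradiction with Lemma~\ref{lemma:noPassiveSim} is immediate.
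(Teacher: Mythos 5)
Your proposal is correct and follows essentially the same route as the paper: the paper derives the theorem directly from Lemma~\ref{lemma:noPassiveSim} together with the observation that every aTAM system is passive, which is exactly your argument with the bookkeeping (realizing $X_n$ via glues inside the aTAM, noting the universal set's state count is a constant independent of $n$, and taking $n = |\Sigma_I|+1$) spelled out explicitly. The two bridging observations you flag are precisely the details the paper leaves implicit, and both are handled correctly.
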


\begin{restatable}{theorem}{frzimp}
\label{thm:frzImp}
    The Freezing Seeded Tile Automata model is not intrinsically universal under non-committal simulation.
\end{restatable}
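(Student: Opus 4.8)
The plan is to derive Theorem~\ref{thm:frzImp} as an immediate corollary of Lemma~\ref{lemma:noBounded}, exactly as the proof of Theorem~\ref{thm:atamImp} follows from Lemma~\ref{lemma:noPassiveSim}. The key observation I would use is the one already recorded in the Observation following the definition of $k$-burnout: any freezing seeded TA system using $|\Sigma|$ states is bounded, in fact a $|\Sigma|$-burnout system, since the freezing condition forbids a tile from ever revisiting a previously held state, so each tile can change state at most $|\Sigma|-1$ times. Thus freezing systems are a subclass of bounded systems, and Lemma~\ref{lemma:noBounded} applies to them.

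First I would argue by contradiction: suppose some fixed rule (tile) set $I = \{\Sigma_I, \Pi, \Delta, \tau\}$ is intrinsically universal for the set of all freezing seeded TA systems under non-committal simulation, with $|\Sigma_I|$ states. The crucial point is that $I$ is a \emph{single finite} rule set; by the definition of intrinsic universality, every freezing system must be simulated by some system $\Gamma_S$ that \emph{uses} this same fixed rule set $I$, differing only in the choice of initial states $\Lambda_T$ and seed $s_T$. Hence every such $\Gamma_S$ has exactly $|\Sigma_I|$ states and is itself freezing, therefore bounded.

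Next I would instantiate the hard target: choose $n$ large enough that $|\Sigma_I| < n^{1/5}$, i.e. $n > |\Sigma_I|^5$, and consider the system $X_n$. Since $X_n$ is passive ($0$-burnout) it is in particular freezing, so by assumption it is simulated by a system $\Gamma_S$ using $I$, which is bounded with fewer than $n^{1/5}$ states. But Lemma~\ref{lemma:noBounded} states precisely that a bounded seeded TA system with fewer than $n^{1/5}$ states cannot simulate $X_n$ — a contradiction. Therefore no such $I$ exists, and the freezing seeded TA model is not intrinsically universal under non-committal simulation.

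I do not expect a genuine obstacle here, since all the substantive work is already carried out in Lemmas~\ref{lemma:equalLastAssembly} and~\ref{lemma:noBounded}; the only subtlety to state carefully is the quantifier structure of intrinsic universality. The point that must be made explicit is that $I$ is fixed \emph{before} $n$ is chosen, so that $n$ can be driven past the threshold $|\Sigma_I|^5$ while the simulating systems' state count remains pinned at $|\Sigma_I|$. One should also confirm that a passive system counts as freezing (it trivially does, as a $0$-burnout system can never change state at all and so never returns to a prior state), which guarantees $X_n$ lies in the universe $U$ of freezing systems that $I$ is assumed to handle.
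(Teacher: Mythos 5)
Your proposal is correct and follows exactly the paper's intended route: the paper derives Theorem~\ref{thm:frzImp} directly from Lemma~\ref{lemma:noBounded} via the Observation that freezing systems with $|\Sigma|$ states are $|\Sigma|$-burnout (bounded), which is precisely your argument. Your write-up merely makes explicit the quantifier structure (fixing the rule set $I$ before choosing $n > |\Sigma_I|^5$) that the paper leaves implicit, which is a faithful elaboration rather than a different approach.
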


\section{Overview of Intrinsic Universality in TA}\label{sec:hloverview}
Now that we have shown that any bounded system cannot be intrinsically universal under non-committal simulation, we will show that Tile Automata (TA), with its unbounded state changes, is non-committal intrinsically universal.
We do so by characterizing a TA system that can simulate any other TA system.
For ease of presentation, we first give a high-level overview of the framework and some of the main techniques used to achieve the simulation of any TA system with the one presented.
Both a detailed exposition of the techniques, as well as any proofs omitted in this section can be found in the later sections covering the details.

We show that seeded TA is IU by first showing that temperature-1 seeded TA is IU at scale $O(|\Sigma|^3)$ with constant states. We then show how we can simulate a seeded TA system at any temperature with a temperature-1 system at scale-1 with $O(\min(|\Sigma|^3,\tau|\Sigma|))$ states. These combine to create a general IU result for any seeded TA system by bounding the number of states to a constant for any temperature, or by scaling based on the temperature.

\subsection{Temperature-1 Seeded TA is Intrinsically Universal}

Section \ref{sec:temperature_simulation} gives the full details for the IU results with temperature-1. We show that there exists an intrinsically universal temperature-1 system with a constant number of states if we increase the scale factor to $O(|\Sigma|^3)$. Here, we give an overview of the framework used to prove the following.

\begin{restatable}{theorem}{IUone}
\label{lem:iut1}
    There exists a tile set $(\Sigma_{U}, \Lambda_U, \Pi_{U}, \Delta_{U})$ such that, for all systems $\Gamma = (\Sigma, \Lambda, \Pi, \Delta, s, 1)$, there exists a $\Gamma' = (\Sigma_{U}, \Lambda_{U}, \Pi_{U}, \Delta_{U}, s_U, 1)$ that simulates $\Gamma$ at scale $O(|\Sigma|^3)$. 
\end{restatable}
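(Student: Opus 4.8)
The plan is to construct a single universal tile set that simulates any temperature-1 seeded TA system $\Gamma$ via an $m$-block encoding with $m = O(|\Sigma|^3)$, where every supertile carries a complete description of $\Gamma$'s rule set $(\Sigma, \Pi, \Delta, 1)$ written out in its tiles. First I would fix the block representation function $R$: each $m\times m$ supertile will contain, in a designated region, a binary encoding of a single state $\sigma \in \Sigma$ (this is the ``meaning'' $R$ extracts), together with a static copy of the affinity and transition tables. Since $|\Sigma|$ is the only unbounded parameter and a full transition/affinity table over $\Sigma$ has size $O(|\Sigma|^4)$ or so, the $O(|\Sigma|^3)$ scale is what is needed to lay this data out in two dimensions along the supertile's perimeter; I would verify the table fits in an $O(|\Sigma|^3)\times O(|\Sigma|^3)$ footprint. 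The universal seed $s_U$ is obtained by encoding $\Gamma$'s seed $s$ as a pre-built patch of supertiles, each stamped with the rule set.

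The core of the construction is three interacting gadgets, matching the three simulation obligations. (i) \emph{Attachment:} when $\Gamma$ permits a tile in state $\sigma'$ to attach to a face of a tile in state $\sigma$ (i.e.\ $\Pi(\sigma,\sigma',d)\geq 1$), the simulating system grows a fresh, blank supertile off the appropriate side of an existing supertile, copies the rule-set tables into it, and then—crucially for \emph{non-committal} behavior—reads the affinity table to nondeterministically select, in a single committing transition step, exactly one state $\sigma'$ among all states affine to $\sigma$. The selection must remain genuinely open (all choices reachable) until that one step, so I would funnel all candidate states through a single ``choice'' tile pair whose transition rules in $\Delta_U$ branch to each legal $\sigma'$; this is what makes $R^*$ satisfy the universal quantifier in Definition~\ref{def:models}. (ii) \emph{Transitioning:} two adjacent supertiles encoding $(\sigma_1,\sigma_2)$ in orientation $d$ must be able to jointly rewrite to $(\sigma_3,\sigma_4)$ whenever the matching rule lies in $\Delta$; I would implement this by a signal that scans the shared boundary, looks up $((\sigma_1,\sigma_2),\cdot,d)$ in the transition table, nondeterministically picks an applicable output pair, and rewrites both state-encoding regions—again concentrating the nondeterminism into one transition so the commitment happens atomically. (iii) \emph{Information transfer:} the rule-set tables must propagate intact to each new supertile; I would use a row-by-row/column-by-column copying gadget driven by affinity-strength-1 glues and affinity-strengthening transitions.

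Once the gadgets are specified, correctness reduces to checking the three clauses of the simulation definition. For \emph{equivalent productions} ($\Gamma' \Leftrightarrow_c \Gamma$) I would show the state-encoding region of a completed supertile maps under $R$ to the intended $\sigma$, that every producible assembly of $\Gamma'$ maps to a producible assembly of $\Gamma$, and conversely, and that the transient scaffolding tiles left during copying/selection are confined to within $O(1)$ macroblocks of a real tile, giving bounded $c$-fuzz. For \emph{follows} ($\Gamma \dashv_R \Gamma'$) I would argue each $\Gamma'$-production step either leaves $R^*$ unchanged (internal gadget progress) or realizes exactly one legal $\Gamma$-step. The main obstacle, and where I expect to spend the most care, is the \emph{models} direction ($\Gamma' \models_R \Gamma$): I must prove that for \emph{every} simulator assembly $A'$ with $R^*(A')=A$—not merely some canonical one—and every $\Gamma$-step $A\to B$, the simulator can reach some $B'$ with $R^*(B')=B$. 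This requires showing the attachment/transition gadgets are re-entrant and never ``use up'' their nondeterminism prematurely: regardless of which partial scaffolding state $A'$ sits in, the choice gadget must still be able to commit to the particular state $B$ demands. Engineering the gadgets so that the single committing step is always still available from any intermediate configuration mapping to $A$ is precisely the non-committal requirement, and is the heart of the proof.
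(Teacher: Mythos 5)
Your overall architecture does match the paper's: $m$-block supertiles each carrying the full rule tables of $\Gamma$, a copy process for growth, and nondeterminism concentrated into a single committing step. However, there is a genuine gap at the exact point you yourself identify as the heart of the proof. You define $R$ to extract the supertile's state from a designated region holding a binary encoding of $\sigma$, and you simultaneously require that a pairwise transition of $\Gamma$ be committed atomically in one simulator step. These two choices are incompatible: a Tile Automata transition rewrites only two adjacent tiles, so no single step can rewrite the two distant state-encoding regions of two neighboring supertiles. If $R$ reads the encoding regions, then after your committing step the two supertiles' mappings change at different, later steps, and the intermediate assemblies map to configurations of the form $(\sigma_3,\sigma_2)$ that need not be producible in $\Gamma$ --- violating equivalent productions and the follows condition. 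The paper resolves exactly this by making $R$ a priority scheme: a supertile maps to the state of its \emph{active column} unless a transition has been selected in the selection gadget sitting on the shared edge between the two supertiles, in which case it maps to the post-transition state. The committing step is then a single double-transition of two agents facing each other across the supertile boundary (one tile in each supertile), which flips both mappings simultaneously; all subsequent internal rewriting (deselecting the old column, activating the new one, notifying neighbors) is mapping-neutral. Without redefining $R$ along these lines, your construction cannot meet the non-committal definition; this is not an implementation detail but the core mechanism.

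A second gap is in attachment. You select the new supertile's state ``among all states affine to $\sigma$,'' the state of the builder. But in $\Gamma$ at temperature 1 a tile may attach at a location by having affinity with \emph{any} of its up-to-four neighbors, not necessarily the one that triggered construction. As written, your simulator can never realize an attachment whose sole affinity is with a different neighbor, so equivalent productions and the models condition both fail on such systems. The paper handles this by having the completed (still ghost) supertile query all of its neighbors; the neighbor-reporting agents race for the table lock, any one of them may enter, and the selecting agent may nondeterministically commit at any affinity door, back out, exit, and let another direction's agent try --- so every legal choice over all neighbors remains reachable up to the single committing step. Relatedly, you never address competing constructions (two or three supertiles growing into the same empty block); the paper needs corner-claiming and mirror-edge machinery to guarantee a unique builder, and without something comparable the copied tables of rival builders can interleave and corrupt the new supertile.
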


\subparagraph{Supertiles.}\label{subsec:supertile}
Supertiles are $m \times m$ blocks that map to a specific tile in some state from the original system that is being simulated. 
Each supertile contains all information necessary for the simulation.
For specifics, please refer to the detailed walk-through of the operations in Section~\ref{sec:supertile_section}.

Figure~\ref{fig:simple_overview_supertile} shows a simplified diagram of a single supertile. Every supertile has binding sites on each of the four sides, and wires on each side that lead to a central lookup table corresponding to valid affinities and transitions for the system being simulated. Within the table each column represents a state in the system being simulated and each row a state and direction of a neighboring supertile. There are datacells at each intersection. 

The supertile makes use of several small gadgets for effective and correct communication and information transmission, as well as ensuring non-committal simulation.
The most important gadgets are listed here. For a complete explanation of their workings and purpose, see Section~\ref{sec:supertile_section}.

\begin{itemize}
    \item \textbf{Lookup Table.} Each supertile contains a lookup table that contains all the affinity and transition information of the system to be simulated. Thus, every tile has all information necessary to update itself from its neighbors. 

    \item \textbf{Transition Storage Area.} All of the transitions for a pair of states are stored within a storage area in ordered data strings. For each transition rule, only the halves of pairs pertaining to the current supertile are stored.
    
    \item \textbf{Datacells.} For each directed pair of states in the system, a datacell stores the possible transitions and affinity status between them. The datacell is a compound gadget comprised of a transition storage area, and a single tile containing the affinity and current state status. For non-committal IU, the affinity must be chosen in one step, which is why the affinity is stored in a single tile.

    \item \textbf{Transition Selection.} For non-committal IU, any change to the mapping of a supertile must occur with a single tile placement or transition. This requires careful collaboration and ordering around the tiles that can change this mapping. Most of the information must be obtained from the lookup table and brought to the edge. The transition selection gadget contains this reversible process of getting the supertile (or supertiles) ready to change the mapping (or mappings) irreversibly in such a way that it could be reversed at any point up until the single mapping transition. Once the state mapping has changed, it must communicate this information to the rest of the supertile.
\end{itemize}

\begin{figure}[t]
    \centering
    \includegraphics[width=.8\textwidth]{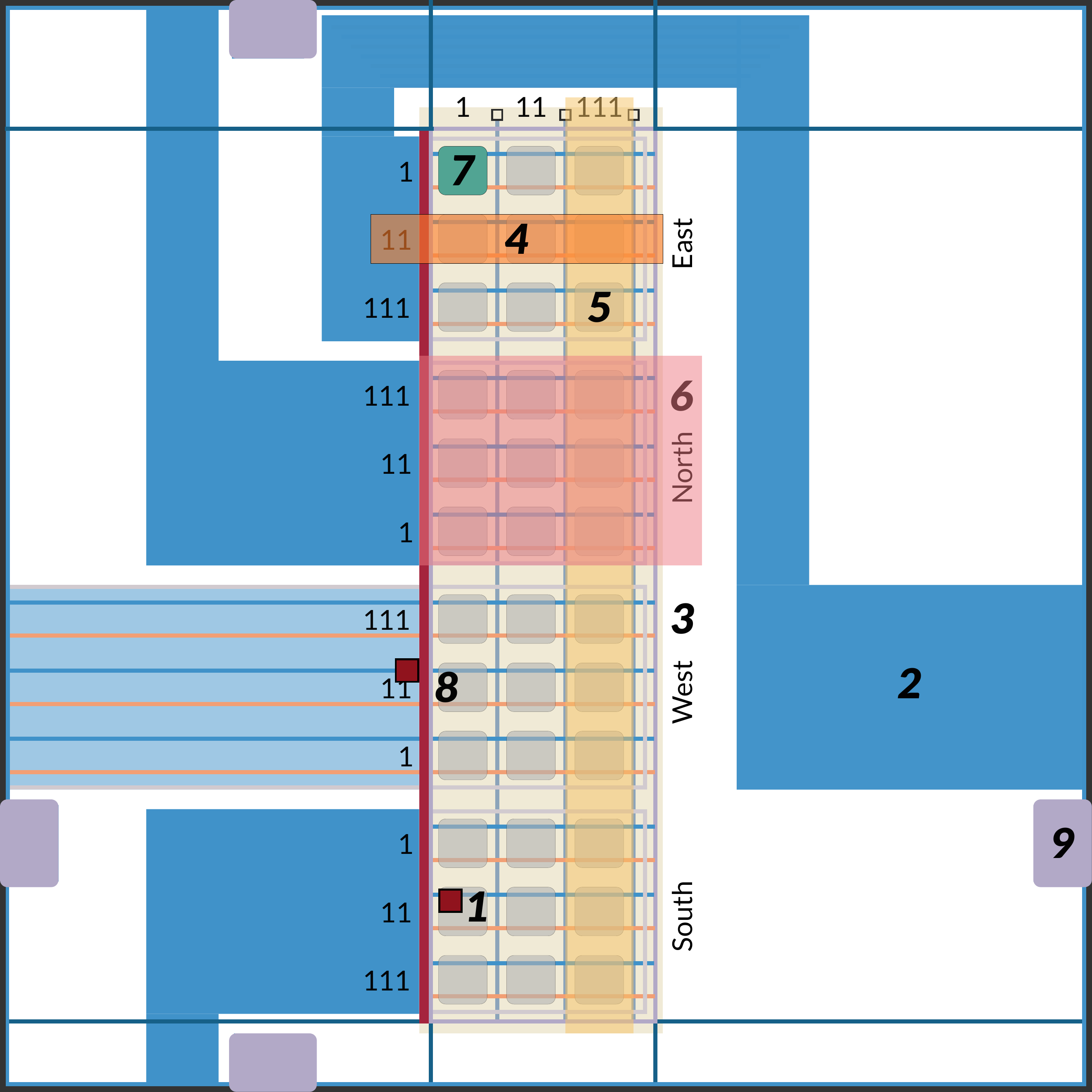}
    \caption{An overview of a supertile. (1) An agent inside of the supertile. (2) Wires connecting supertiles from each edge to the lookup table. West wires are drawn individually. (3) The lookup table storing the information about the system being simulated. (4) A row containing the information about the state of the east neighbor of the supertile. (5) The active column, representing the current state of the supertile. (6) A group of datacells storing all information for the north side. (7) A single datacell, in this case, storing the affinities and transitions for when both this supertile and the East supertile are in state 1. (8) The table control edge, with an agent waiting to enter. (9) Transition selection gadget at each edge, dictating the transition of this supertile with its east neighbor.}
    \label{fig:simple_overview_supertile}
\end{figure}

\subparagraph{Attachments.}
The attachment process for a new supertile works approximately as follows, the details of which can be found in Section~\ref{sec:attachment}. Attachment is triggered by a supertile, when it discovers that no neighbor exists adjacent to it.
The builder supertile finds that an affinity in that direction exists, it prepares itself for construction, by locking its outer edge, wiping its wires, and deactivating its gadgets. 

Next, it starts building the new supertile.
Should the supertile find a competitor trying to build in this spot, one of the two nondeterministically prevails.
The builder supertile then copies over each part of the supertile one by one.

Once built, the new supertile requests the states of all its neighbors to select its own state. From the valid states, one is chosen nondeterministically, and the representative state column is activated. Finally, the new supertile's state is sent to its neighbors.

\subparagraph{Transitions.}
The process of transitioning happens in seven general phases.
The full details can be found in Section~\ref{sec:transitioning_tiles}.

\begin{enumerate}
    \item The existence of one or more transitions between two neighboring supertiles is confirmed, and each supertile's table is locked for the duration of the transition process.  
    \item The data strings within the transition storage area of the datacell at each supertile's respective intersections are copied and transmitted to the transition selection gadget. 
    \item Agents within the transition selection gadget nondeterministically select the new states associated with a transition rule or abort the transition altoghether.
    \item Once a transition has been chosen, the new states are copied and transmitted to each supertile's respective tables for updating.
    \item In the table, the old state is deselected, and the new state is activated. 
    \item The transition selection gadget is wiped. 
    \item The tables are unlocked, and new states are transmitted to neighboring supertiles.
\end{enumerate}

\subsection{Temperature Simulation at Scale-1}
Using these techniques we will show that seeded TA at temperature-1 is intrinsically universal. 
We also show that at scale-1, we can simulate a seeded TA system at any temperature if we scale the number of states in the system.
We provide two bounds on the scale factor: $O(\min(\tau|\Sigma|,|\Sigma|^3)$.
Resulting in the following two Lemmata.

\begin{restatable}{lemma}{temponesim}
\label{thm:temp1sim}
For all Tile Automata systems $\Gamma_\tau = ( \Sigma, \Lambda, \Pi, \Delta, s, \tau )$ there exists a system $\Gamma_1 = ( \Sigma_1, \Lambda_1, \Pi_1, \Delta_1, s_1, 1 )$ that simulates it with 1-fuzz at scale-1 such that $|\Sigma_1| = O(\tau|\Sigma|)$. 
\end{restatable}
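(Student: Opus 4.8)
The plan is to build $\Gamma_1$ at scale-$1$ so that each target state $\sigma\in\Sigma$ has a single \emph{committed} representative $\bar\sigma$ with $R(\bar\sigma)=\sigma$, while the temperature gap is bridged by a family of \emph{uncommitted} states that all map to $\phi$. Concretely, $\Sigma_1$ will consist of the committed states $\{\bar\sigma : \sigma\in\Sigma\}$, a single generic placeholder $p_0$, and, for each $\sigma\in\Sigma$, a bounded family of accumulator states $\{\sigma^{(c,D)} : 0\le c\le \tau,\ D\subseteq\{N,E,S,W\}\}$ recording a running affinity total $c$ (capped at $\tau$) toward becoming $\bar\sigma$ together with the set $D$ of directions already polled. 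Since $D$ contributes only a constant factor and $c$ ranges over $O(\tau)$ values, this gives $|\Sigma_1| = O(\tau|\Sigma|)$. Only the placeholder is initial, $\Lambda_1=\{p_0\}$, and the seed $s_1$ is the tile-for-tile copy of $s$ with each $\sigma$ replaced by $\bar\sigma$.

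For the affinity function I would set $\Pi_1(\bar\sigma,\bar\sigma',d)=\Pi(\sigma,\sigma',d)$ between committed states, give the placeholder affinity $1$ to every committed state in every orientation (so it can attach next to any real tile), and set every affinity among placeholder/accumulator states to $0$ (so uncommitted tiles never chain and thus occupy only a one-tile-thick layer around committed tiles, which will yield $1$-fuzz). Attachment of a target tile is then simulated in three reversible phases followed by one committing step: (i) $p_0$ attaches at an empty position adjacent to a committed tile via a single strength-$1$ bond, becoming a ghost; (ii) through ghost-to-ghost transitions it nondeterministically guesses a target state $\sigma$, entering $\sigma^{(0,\emptyset)}$, and then polls its neighbors one direction at a time, each poll reading the committed neighbor state $\sigma_i$ and updating $c \mapsto \min(\tau,\, c+\Pi(\sigma,\sigma_i,d_i))$ and $D \mapsto D\cup\{d_i\}$; (iii) at any moment it may reset to $p_0$ or re-guess, so no partial progress is binding. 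Once $c=\tau$ it fires the single committing transition $\bigl((\sigma^{(\tau,D)},\bar\sigma_i),(\bar\sigma,\bar\sigma_i),d_i\bigr)$ with a committed neighbor, the unique step that changes the mapping from $\phi$ to $\sigma$. Target transitions are simulated by copying each rule of $\Delta$ verbatim onto committed states; because $\Pi_1$ agrees with $\Pi$ there, the copied rules remain affinity-strengthening and the temperature-$1$ stability requirement is free (any $\tau$-stable target assembly has min cut $\ge\tau\ge1$, hence is $1$-stable).

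The crux of correctness—and the step I expect to be hardest—is showing that the accumulator faithfully realizes cooperative binding despite asynchrony, since a ghost may poll its neighbors at different times while those neighbors transition. The key fact I would exploit is that affinity-strengthening makes every bond weight monotonically non-decreasing, so a neighbor's affinity to $\sigma$ can only grow after it is polled. Hence the recorded total $c$ is always a lower bound on the current cooperative strength, which gives both directions of the simulation: any commit with $c=\tau$ corresponds to a configuration whose genuine neighbor sum is $\ge\tau$, so $R^*$ of every assembly in $\prodasm{\Gamma_1}$ lies in $\prodasm{\Gamma_\tau}$ (equivalent productions, condition~1) and $\Gamma_\tau\dashv_R\Gamma_1$; conversely, whenever $\Gamma_\tau$ attaches $\sigma$ at a point the true neighbor sum is $\ge\tau$, so a ghost can re-poll the current neighbors and reach $c=\tau$. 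The non-committal \emph{models} requirement is where the reset/re-guess transitions earn their keep: given \emph{any} $A'$ with $R^*(A')=A$ and a target step $A\to^{\Gamma_\tau}B$, any ghost already sitting at the relevant position (possibly mid-accumulation toward the wrong state) can be reset to $p_0$ and re-driven toward the correct $\sigma$, so a suitable $B'$ is reachable for \emph{every} preimage $A'$. I would finish by verifying the fuzz bound ($1$-fuzz, from the no-chaining affinity choice), the scale ($1$, as $R$ is defined on single tiles), and the state count $O(\tau|\Sigma|)$, completing the simulation.
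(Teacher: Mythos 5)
Your construction follows essentially the same strategy as the paper's: a scale-1 simulation whose only initial state is a ghost/placeholder that attaches next to real tiles, per-target-state accumulator states indexed by a capped running strength and a subset of already-consulted directions (giving the $O(\tau|\Sigma|)$ bound), nondeterministic reset/abort transitions to satisfy the non-committal \emph{models} condition, and a single committing transition that changes the mapping. The one genuine difference is how the count is protected against asynchrony: the paper \emph{locks} each neighbor as it is counted (transitioning it into a locked state ${\sigma_L}_d$ so that it cannot change state while its contribution is recorded, with explicit success/failure states that unlock all neighbors afterwards), whereas you leave neighbors free and argue that, by affinity strengthening, each polled contribution is a lower bound on that neighbor's current affinity, so a commit at $c=\tau$ is still a legal attachment in $\Gamma_\tau$ at the moment it occurs. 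Your monotonicity argument is sound and arguably cleaner: it eliminates the locked states entirely and sidesteps the (small) obligation the paper incurs of showing that locking can always be undone so that the models condition is not obstructed.

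Two details in your specification need repair, though both are local. First, the ghost must only guess states $\sigma\in\Lambda$: attachment in $\Gamma_\tau$ requires the attaching tile to be in an initial state, so a commit to some $\sigma\notin\Lambda$ after accumulating strength $\tau$ would map to a step $\Gamma_\tau$ cannot take, breaking both \emph{follows} and condition (1) of equivalent productions. Second, your committing rule $\bigl((\sigma^{(\tau,D)},\bar\sigma_i),(\bar\sigma,\bar\sigma_i),d_i\bigr)$ is not affinity-strengthening as specified: the accumulator must carry affinity $\geq 1$ to every committed neighbor (it needs this to remain stably attached and for the guess transition out of $p_0$ to be strengthening), yet $\bar\sigma$ has affinity $\Pi(\sigma,\sigma_i,d_i)$ to that neighbor, which can be $0$; note that affinity strengthening---the very hypothesis your monotonicity argument exploits---is also a constraint imposed on your own rule set $\Delta_1$. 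The fix is to inflate committed-committed affinities to $\max\bigl(1,\Pi(\sigma,\sigma',d)\bigr)$: this is harmless because $\Lambda_1=\{p_0\}$, so inflated affinities never enable a spurious attachment, and taking the maximum with $1$ preserves the strengthening property of the copied $\Delta$ rules. (The paper's write-up is not more careful on this point, but because you commit to explicit affinity values, the clash is visible in your version and should be patched.)
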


\begin{restatable}{lemma}{temponesimtwo}
\label{thm:temp1sim2}
For all Tile Automata systems $\Gamma_\tau = ( \Sigma, \Lambda, \Pi, \Delta, s, \tau )$ there exists a system $\Gamma_1 = ( \Sigma_1, \Lambda_1, \Pi_1, \Delta_1, s_1, 1 )$ that simulates it with 1-fuzz at scale-1 such that $|\Sigma_1| = O(|\Sigma|^3)$.
\end{restatable}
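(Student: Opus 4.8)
The plan is to construct $\Gamma_1$ on a state set $\Sigma_1$ that contains one \emph{committed} state for each $\sigma \in \Sigma$ (these are the only states $R$ sends to a non-empty tile) together with a family of \emph{provisional} states that orchestrate a single cooperative attachment one strength-$1$ glue at a time. Every affinity in $\Pi_1$ is set to strength exactly $1$, so $\Gamma_1$ is honestly temperature-$1$, and the entire threshold-$\tau$ behaviour of $\Gamma_\tau$ is re-encoded in the transition rules $\Delta_1$. The representation function $R$ maps each committed state to the corresponding $\sigma$ and every provisional state to $\phi$; thus a cell carrying a provisional tile maps to empty, exactly as that cell is empty in $\Gamma_\tau$ before the cooperative attachment fires.

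Where $\Gamma_\tau$ attaches a tile in state $\sigma$ at a site $p$ whose present neighbours supply total affinity at least $\tau$, $\Gamma_1$ first attaches a provisional tile at $p$ by a single strength-$1$ glue to one such neighbour (so $p$ still maps to $\phi$), and then runs a sequence of pairwise transitions between this provisional tile and its neighbours that verifies the genuine condition $\sum_d \Pi(\cdot,\sigma,d) \ge \tau$. Only when the condition is witnessed does a single transition replace the last provisional state by the committed $\sigma$; this is the \emph{unique} step that changes the $R^*$-image of the configuration, which is precisely what the non-committal modelling requirement of Definition~\ref{def:models} demands. Because the simulation is at scale-$1$ and every provisional tile sits on the very cell its committed successor will occupy, adjacent to a neighbour that already maps to a real tile, $1$-fuzz is immediate.

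The delicate part of the bound is making $|\Sigma_1|$ independent of $\tau$. Tracking the accumulated strength as a numeric counter capped at $\tau$ costs $\Theta(\tau)$ provisional states per target and yields exactly the companion bound $|\Sigma_1| = O(\tau|\Sigma|)$ of Lemma~\ref{thm:temp1sim}. To obtain the $\tau$-free bound of Lemma~\ref{thm:temp1sim2} I would instead carry, in each provisional state, not a total but a bounded \emph{certificate}: the target state $\sigma$ together with the (at most two) neighbour states currently being combined. For each such triple one precomputes---once, from $\Pi$ and $\tau$---whether their combined affinity already clears the threshold, and multi-neighbour cooperation is resolved by iterating this pairwise combination. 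Capping every affinity at $\tau$ (harmless for both $\tau$-stability and the affinity-strengthening hypothesis) means the transition rules need only consult these per-triple predicates and never the raw magnitudes, giving $|\Sigma_1| = O(|\Sigma|^3)$.

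I expect the main obstacle to be proving the three simulation conditions simultaneously, and in particular guaranteeing non-committal modelling. Since $\Gamma_1 \models_R \Gamma_\tau$ must hold from \emph{every} simulator configuration mapping to a given $A$, the provisional phase has to be \emph{reversible} up to the commitment step: a half-built attachment toward one site must always be unwindable so the simulator can instead realize whichever attachment $\Gamma_\tau$ chooses next, never committing to a single nondeterministic branch prematurely. I would therefore make every provisional transition bidirectional, so that a provisional tile can be reset to a canonical state and re-routed, and the commit step becomes reachable from any mapping-to-$A$ configuration via a mapping-preserving prefix followed by one mapping-changing step. The \emph{follows} relation $\Gamma_\tau \dashv_R \Gamma_1$ and soundness then reduce to verifying that each individual $\Gamma_1$ step either fixes $R^*$ or enacts a legal $\Gamma_\tau$ attachment; \emph{equivalent productions} follows by matching producible assemblies on the two sides; and inheriting affinity strengthening in $\Delta_1$ ensures $\Gamma_1$ admits no detachments, closing the argument. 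The subtle point to get right is that the pairwise-certificate scheme remains correct for cooperation of arity greater than two while staying $\tau$-independent, which is where most of the care will go.
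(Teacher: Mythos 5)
Your overall framework (provisional/ghost tiles mapping to $\phi$, threshold behaviour compiled into $\Delta_1$, reversibility up to a single committing transition) matches the paper's construction for Lemma~\ref{thm:temp1sim}, but your scheme for the $O(|\Sigma|^3)$ bound has a genuine gap, and it is exactly the point you flag at the end. You propose that each provisional state carry the \emph{target} state $\sigma$ plus at most two neighbour states, with precomputed boolean predicates ("clears $\tau$ or not") per triple, iterated to handle higher-arity cooperation. This cannot work: a boolean does not compose. Once two neighbours' contributions are collapsed into "does not clear $\tau$," the partial sum is lost, so a later neighbour cannot be added to it; and keeping the identities of three neighbours alongside the target costs $O(|\Sigma|^4)$ states. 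Concretely, take $\tau=4$ and a site whose four neighbours each supply affinity exactly $1$ to some state $\sigma$. Every pair and every triple of neighbours fails the threshold, so all your predicates return "no," and a state of the form $(\sigma, n_1, n_2)$ together with one neighbour read at the final transition sees at most three contributions, i.e.\ total $3 < 4$. The simulator can never commit, although $\Gamma_\tau$ attaches a tile there, violating the models and equivalent-productions conditions.

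The paper's fix is to \emph{not} store the target in the provisional state at all: the "adding states" record only the identities (and directions) of up to three neighbours already read, which is $O(|\Sigma|^3)$, and the fourth neighbour is read in the final pairwise transition itself. That committing rule, being indexed by the stored triple and the fourth neighbour's state, has all four contributions available, and it exists (for result state $\sigma$) precisely when $\sum_d \Pi(\sigma, n_d, d) \ge \tau$; the choice of $\sigma$ is made nondeterministically by the choice of which such rule fires. This both restores full four-way cooperation within the $O(|\Sigma|^3)$ budget and is cleaner for non-committal modelling, since the provisional tile never commits to a target before the single mapping-changing step. To repair your proof you should replace your $(\sigma, n_1, n_2)$ certificates with target-free neighbour records of size up to three and move the selection of $\sigma$ into the final transition; the rest of your argument (reversibility, 1-fuzz at scale 1, affinity strengthening preventing detachment) then goes through essentially as you wrote it.
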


\subsection{Seeded TA is Intrinsically Universal}
By taking Theorem \ref{lem:iut1} in conjunction with Lemmas \ref{thm:temp1sim} and \ref{thm:temp1sim2}, we achieve the desired result that seeded Tile Automata is non-committal intrinsically universal. This follows by directly plugging in the state-scaling into the temperature-1 construction.

\begin{restatable}{theorem}{genIU}
\label{thm:genIU}
    There exists a tile set $(\Sigma_{U}, \Pi_{U}, \Delta_{U}, 1)$ such that, for all systems $\Gamma = (\Sigma, \Lambda, \Pi, \Delta, s, \tau)$, there exists a $\Gamma' = (\Sigma_{U}, \Lambda_{U}, \Pi_{U}, \Delta_{U}, s', 1)$ that simulates $\Gamma$ with 1-fuzz at scale factor $O(\min((\tau|\Sigma|)^3,|\Sigma|^9)$. 
\end{restatable}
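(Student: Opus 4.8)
The plan is to derive the general result entirely by \emph{composing} simulations, using the two scale-$1$ temperature lemmas and the temperature-$1$ universality theorem as black boxes. The single new ingredient is a transitivity statement: if a system $\Gamma_S$ simulates an intermediate system $\Gamma_M$ and $\Gamma_M$ simulates a target $\Gamma_T$, then $\Gamma_S$ simulates $\Gamma_T$, with the product scale factor and a composed representation function. First I would prove this composition lemma, and then instantiate it with $\Gamma_M$ being the temperature-$1$ system produced by Lemma~\ref{thm:temp1sim} or Lemma~\ref{thm:temp1sim2} and $\Gamma_S$ being the universal simulator from Theorem~\ref{lem:iut1}.

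For the composition lemma, suppose $R_2$ witnesses $\Gamma_S$ simulating $\Gamma_M$ at scale $m_2$ and $R_1$ witnesses $\Gamma_M$ simulating $\Gamma_T$ at scale $m_1$. In our instantiation $m_1=1$, so an $m_1 m_2$-block is just an $m_2$-block and the composite representation function is the honest functional composition $R^* = R_1^* \circ R_2^*$; this avoids the usual bookkeeping of reading a large block as a grid of smaller blocks. I would then verify that each clause of the simulation definition chains. Equivalent productions follows from $\{R_1^*(A_M) : A_M \in \prodasm{\Gamma_M}\} = \prodasm{\Gamma_T}$ applied to the set $\{R_2^*(A') : A' \in \prodasm{\Gamma_S}\} = \prodasm{\Gamma_M}$. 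The \emph{follows} relation chains because $A'\to^{\Gamma_S}B'$ yields $R_2^*(A')\to^{\Gamma_M}R_2^*(B')$ and hence $R_1^*(R_2^*(A'))\to^{\Gamma_T}R_1^*(R_2^*(B'))$. The non-committal \emph{models} relation is the subtle one, but it also chains cleanly because Definition~\ref{def:models} is stated over the multi-step relation $\to^{\Gamma}$: given $A\to^{\Gamma_T}B$ and any $A'$ with $R^*(A')=A$, set $A_M=R_2^*(A')$; then $R_1^*(A_M)=A$, so $\Gamma_M$ models $\Gamma_T$ gives $B_M$ with $A_M\to^{\Gamma_M}B_M$ and $R_1^*(B_M)=B$, and then $\Gamma_S$ models $\Gamma_M$ gives $B'$ with $A'\to^{\Gamma_S}B'$ and $R_2^*(B')=B_M$, whence $R^*(B')=B$. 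The universal quantifier over preimages survives because every preimage of $A$ under $R^*$ is, via $R_2^*$, a preimage of a specific preimage of $A$ under $R_1^*$.

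With transitivity established, the theorem is immediate. Given $\Gamma=(\Sigma,\Lambda,\Pi,\Delta,s,\tau)$, I would apply one of Lemma~\ref{thm:temp1sim} or Lemma~\ref{thm:temp1sim2} to obtain a temperature-$1$ system $\Gamma_1$ simulating $\Gamma$ at scale-$1$ with $|\Sigma_1|=O(\tau|\Sigma|)$ or $|\Sigma_1|=O(|\Sigma|^3)$. Because $\Gamma_1$ is a temperature-$1$ system, Theorem~\ref{lem:iut1} supplies $\Gamma'$ over the fixed universal tile set $(\Sigma_U,\Pi_U,\Delta_U,1)$ that simulates $\Gamma_1$ at scale $O(|\Sigma_1|^3)$, where only the seed $s'$ depends on the input. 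Composition then gives a simulation of $\Gamma$ at scale $m_1 m_2 = 1\cdot O(|\Sigma_1|^3)$, evaluating to $O((\tau|\Sigma|)^3)$ or $O(|\Sigma|^9)$; selecting the smaller intermediate system yields $O(\min((\tau|\Sigma|)^3,|\Sigma|^9))$, and the universal rule set is unchanged.

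The main obstacle I anticipate is not the scale arithmetic but verifying that the composite achieves $1$-fuzz rather than merely constant fuzz. A composite ghost block arises either as an outer ghost (mapped to $\phi$ by $R_2$) or as a genuine $\Gamma_1$-supertile whose $\Gamma_1$-state is itself an inner ghost of $\Gamma$. Naively the two $1$-fuzz budgets add, so I would need to rule out compounding: the cleanest route is to show that the scale-$1$ temperature simulation introduces \emph{no} ghost tiles in $\Gamma_1$ (so the inner fuzz is vacuous and only the outer $1$-fuzz remains), or, failing that, to argue that any inner ghost of $\Gamma_1$ is re-simulated by Theorem~\ref{lem:iut1} as a supertile already lying within the outer $1$-fuzz neighborhood of a real tile. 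Pinning down this fuzz interaction, together with confirming that $\Gamma_1$ genuinely falls in the class of temperature-$1$ systems to which Theorem~\ref{lem:iut1} applies, is where the real care is required.
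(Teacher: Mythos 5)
Your overall route is the same as the paper's: the paper likewise proves a transitivity lemma for simulation (composing representation functions exactly as you do; your verification of how the non-committal \emph{models} clause chains is in fact more explicit than the paper's one-line claim), and then instantiates it with the temperature-$1$ system $\Gamma_1$ of Lemma~\ref{thm:temp1sim} or Lemma~\ref{thm:temp1sim2} as the intermediate and the universal simulator of Theorem~\ref{lem:iut1} on the outside, with identical scale arithmetic $1\cdot O(|\Sigma_1|^3) = O(\min((\tau|\Sigma|)^3,|\Sigma|^9))$. So the composition and the scale bound are sound.

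The gap is exactly where you said ``the real care is required,'' and neither of your two proposed routes closes it. Your first route is false: the construction of Lemma~\ref{thm:temp1sim} explicitly introduces an empty state $\mathbf{g}$ --- which the paper literally calls a ghost tile --- so the inner simulation has genuine $1$-fuzz, not $0$-fuzz. Your second route covers only the harmless case: a supertile representing $\mathbf{g}$ is, by the inner $1$-fuzz, adjacent to a supertile representing a real state of $\Gamma$, so those composite ghost blocks are automatically within distance $1$ of a real block. The dangerous case is the other one: an \emph{outer} ghost (a partially built supertile, mapped to $\phi$ by $R_2$) whose only completed neighbor is a supertile representing $\mathbf{g}$; such a block would sit at distance $2$ from every block mapping to a real tile of $\Gamma$, giving $2$-fuzz. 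Indeed, the paper's general transitivity lemma only guarantees constant (at most $3$) fuzz; to get $1$-fuzz it invokes a property of the specific constructions (Theorem~\ref{the:chaining_1-fuzz} and the proof of Theorem~\ref{thm:genIU}): in $\Gamma_1$ no tile can ever attach adjacent to a $\mathbf{g}$ tile, since $\mathbf{g}$ is the only attaching state and ghost tiles may not attach to ghost tiles or to temporary states, and in the universal system a supertile initiates construction of a neighboring (outer-ghost) supertile only after confirming that its current state admits an attachment in that direction. Hence supertiles representing $\mathbf{g}$ never spawn outer ghosts, every outer ghost hangs off a supertile representing a real state of $\Gamma$, and the composite simulation is $1$-fuzz. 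Without this (or an equivalent construction-specific argument) your proof establishes the theorem only with constant fuzz rather than the claimed $1$-fuzz.
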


\section{Temperature Simulation}\label{sec:temperature_simulation}
We now give a detailed construction of our universal Tila Automata system.
We show how any Tile Automata system ${\Gamma = ( \Sigma, \Lambda, \Pi, \Delta, s, \tau )}$ of any temperature $\tau$ can be simulated by temperature 1 with 1-fuzz by using ghost tiles and adding intermediary states.

In order to attach tiles that require \emph{cooperative binding}, the necessity of needing multiple neighbors to attach a single tile in order to reach necessary affinity strength, we use intermediary states to add together the affinity strengths of surrounding tiles to the interim state tile that is attempting to be placed at that location, see Figure~\ref{fig:sim_system} for an example, and Figure~\ref{fig:sim_system_construction} for the assemblies it produces.

\begin{figure}[t]
    \centering
    \includegraphics[width=.9\textwidth]{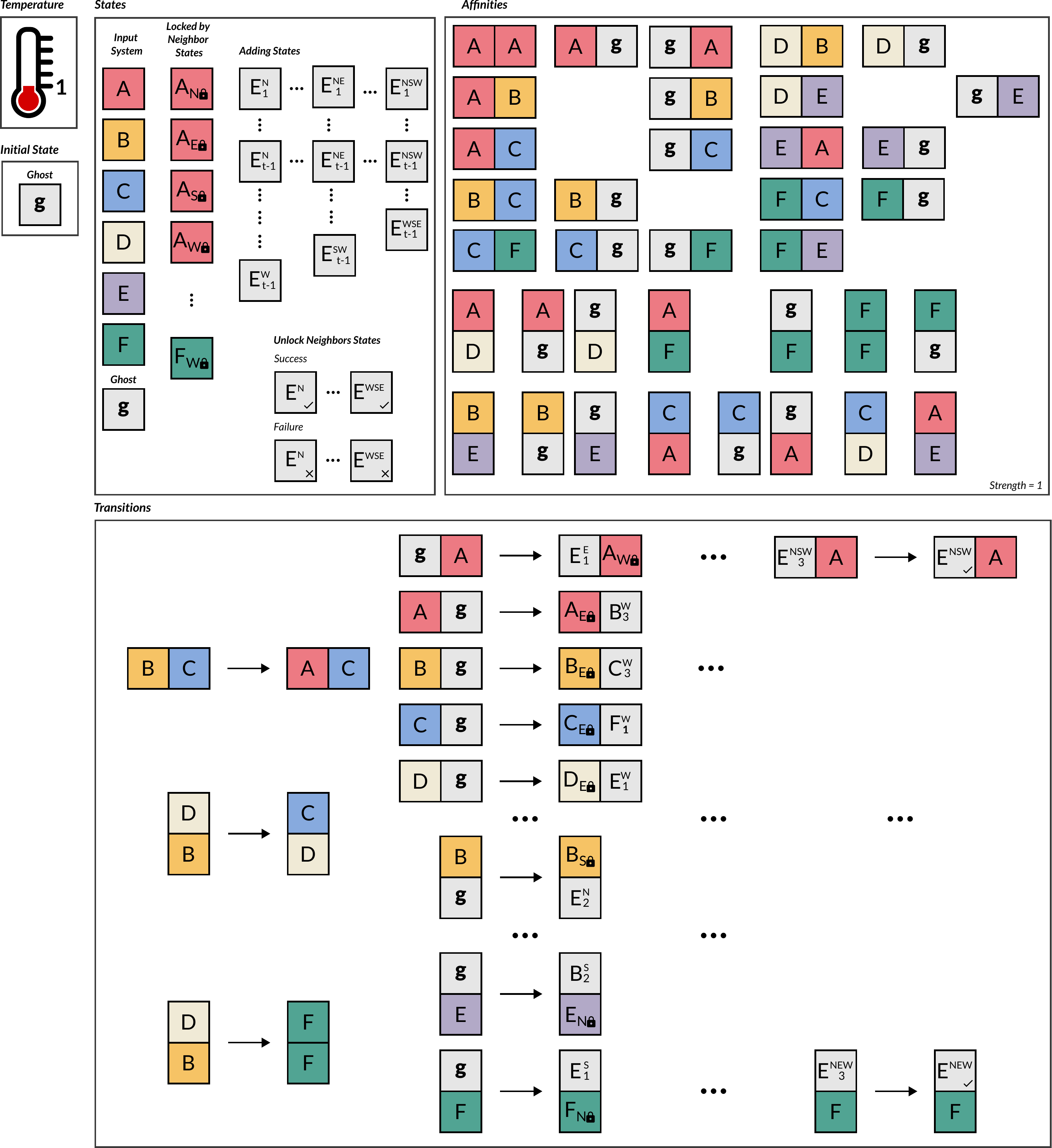}
    \caption{The temperature-1 system that simulates the system in Figure~\ref{fig:example_system}.}
    \label{fig:sim_system}
\end{figure}

\begin{figure}[t]
    \centering
        \includegraphics[width=1.\textwidth]{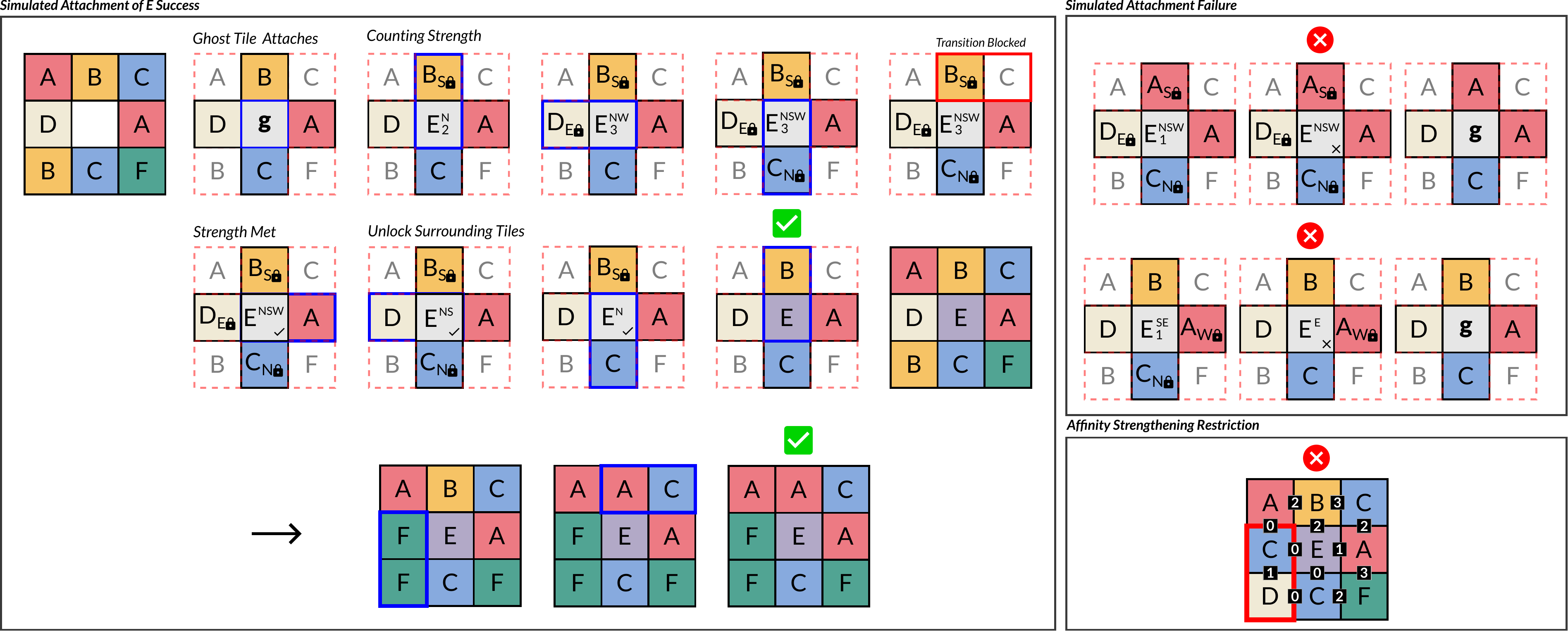}
        \caption{The construction process that the Tile Automata in Figure~\ref{fig:sim_system} builds, representing the same attachments and transitions as in Figure~\ref{fig:example_system}}
        \label{fig:sim_system_construction}
\end{figure}

\temponesim*
\begin{proof}
The set of states $\Sigma_1$ contains $\mathcal{\tau}$ states for each $\sigma \in \Sigma$. We simulate the system in Figure \ref{fig:example_system} with the one in Figure \ref{fig:sim_system}, which contains the following states:

%\textg

\begin{itemize}
    \item An unlocked state $\sigma$ for every $\sigma\in\Sigma$.
    \item Locked states ${\sigma_L}_d$ for $d \in \{N, S, E, W\}$ for the directions. The lock $L$ is represented by a lock icon in Figure~\ref{fig:sim_system}.
    \item Counting states $\sigma_{i, Q}$ numbered from $1$ to $\tau - 1$, where $Q$ is all subsets of $\{N, S, E, W\}$.
    \item Success unlocking neighbor states $\sigma_{Q, \checkmark}$.
    \item Failure unlocking neighbor states $\sigma_{Q, \times}$.
    \item An empty state \textbf{g}, which we call a ghost tile.
\end{itemize}

\textbf{Empty States.} 
The state \textbf{g} has affinity with all non-ghost tiles $\sigma$, the states which map to something. This ghost state transitions with unlocked tiles adjacent to it to enter a counting state representing a tile which may attach. This process is outlined in Figure \ref{fig:sim_system_construction}. 
If the strength of the affinity is greater than or equal to the input system temperature, then the counting tile immediately transitions to a success state and starts unlocking its neighbors. If the sum is not yet $\tau$ the neighbor state is transitioned to locked and the counting tile increased based on the new binding strength. 

Additionally, the attachment may nondeterministically choose to fail and begin the unlocking process of all locked surrounding states at any time. This has two functions.
First, if the tile does not have 4 neighbors and it cannot reach the affinity strength, then it would be unable to detect the lack of neighbor on its own.
The second reason is to ensure that the strict definition of simulation can be met. 

\textbf{Simulation.} 
A ghost tile may not attach to another ghost tile or to a tile with a temporary state, nor can a temporary state affect its own affinity strength count. This ensures the system has 1-fuzz. A tile can only transition from a ghost tile to simulate attachment if there exists enough locked neighbors which reach $\tau$ so we know every assembly in $\prodasm{\Gamma_1}$ maps to something in $\prodasm{\Gamma}$. This shows equivalent production.

For following and strongly modeling we note that transitions are simulated in one step so the rules in $\Delta \subset \Delta_1$. For every attachment that could take place in $\Gamma$ we simulate this via the adding states. The failure states serve two purposes, first if we select a tile that does not have enough affinity it will eventually be abandoned and another will be selected. Second it allows us to satisfy the Strongly Models definition as when an assembly $A' \in \prodasm{\Gamma_1}$, which represents $A \in \prodasm{\Gamma}$, can abandon any attachment step to circle back and reach an assembly which represents any $B$ such that $A \to^{\Gamma} B$.
\end{proof}

\subsection{Alternate Upper Bound}\label{subsec:alternate_upper_bound}
The dominating factor of the tile set is the adding tiles. We may replace these by instead of storing the temperature we may store the current neighbors. This is a better bound in the case that $\tau \geq \mathcal{O}(\Sigma^2)$. 

\temponesimtwo*
\begin{proof}
Consider an alternate set of adding states, which store the current neighbors of the ghost tile instead of adding the strength. This encodes $\Pi$ into $\Delta$ directly without adding up $\tau$. We only need to store up to $3$ neighbors as the fourth neighbor will be read in the final transition.
\end{proof}

\newpage

\section{Supertiles}\label{sec:supertile_section}
A supertile is a block of $m$ by $m$ tiles that maps to a single tile in the system it is simulating via the $m$-block representation function.
Each supertile contains the complete rules of the system it is simulating and, hence, can perform attachment and transition operations locally with its neighboring supertiles.
To do this, the supertile contains a lookup table that stores the possible transition rules and affinities for each combination of states and neighbor directions.

Each state of the system to simulate is first mapped to a unary encoding.
The table contains four smaller subtables, one for each neighboring direction.
Each of these subtables is constructed as a matrix.
The column indicates the state this supertile represents, and the row the states the neighbor can represent.
We call an entry in this matrix a \emph{datacell}.
A datacell that is part of the East subtable stores at position $(i, j)$ the affinities and transition rules that apply if this supertile represents state $j$ and the East supertile would represent state $i$.
Lastly, each supertile has an \emph{active} column.
This column indicates which state the supertile currently represents.

Besides the lookup table, a supertile contains \emph{wires} that connect the table to the edges of the supertile and \emph{gadgets} for reading, writing, and locking the table.

\subsection{Agents \& Gadgets}\label{subsec:agents_and_gadgets}
A supertile is comprised of several gadgets, groups of tiles that together perform a specific function, such as facilitating data traversal or table lookup queries.
Agents are small packets of information encoded by tile states that traverse a supertile and can transport information from one part of the system to another.
Figure~\ref{fig:simple_overview_supertile} shows a single agent that has traversed from the supertile neighboring to the south to perform a lookup in the table.
Other tasks specific agents can perform include locking the edges of a supertile or its table, clearing wires, or coordinating construction functions.

Gadgets are groups of tiles that together serve a specific purpose.
They are reset after each use and are, therefore, reusable.
Whereas agents move through the system, gadgets are largely stationary.
Agents can interact with gadgets, and each gadget serves a specific purpose.

\subparagraph*{Wire.} The simplest gadget is the wire.
The only purpose of a wire is to allow the one-way traversal of an agent from one part of the system to another.
A wire is a one-wide string of tiles.
The states of the wire tiles not only indicate it is a wire tile, but also indicate which direction the wire is going.
An agent can traverse a wire by swapping states with a neighboring wire tile if the direction of that wire tile allows it.

Wires connect supertiles and allow them to communicate.
Since a supertile has a wire connected to its neighbor for each state it can be in, the specific wire on which a supertile~$S$ communicates with its neighbor is an implicit communication of the state of~$S$, see Figure~\ref{fig:simple_overview_supertile}.

\subparagraph*{Data Strings.} Data strings are a series of tiles carrying data capable of traveling down wires. Transition related data string consist of a start data string tile, a string of unary 1 tiles, an end data string and on occasion a prepended instruction.

\subparagraph*{Door.} Doors are tiles placed along wires to control the flow of data and construction.
They consist of two parts.
The first part is the actual door, placed on the wire in question.
The other is its \emph{handle}.
When an agent or data string reaches a door, it can pass if the door allows it.

Each door has a specific direction dependent on the wire it is on. If the wire the door is on switches directions, then the door's direction will also flip.
An agent trying to pass an open door swaps states with the door as if it were a normal wire.
The door then enters a \emph{reset} state, indicating it recently let an agent through and is currently not connected to its handle.
No other agent is allowed to pass in this state.
Once the agent has moved on, the door can swap states with the new wire tile on its original spot, resetting the door.
This prevents doors from getting lost or mis-matched to the wrong handle, see Figure~\ref{fig:door_in_action}.

When a door swaps with a wire tile, the wire tile will be transitioned to a blank wire tile and take on the wire state of the next wire tile it encounters. Agents may swap with blank wire tiles but, due to the lack of direction, may also swap back if a normal wire tile state has not been selected. The copy director may also copy into blank wire tiles. This blank wire tile method ensures that wires will not become inappropriately shuffled during traversal. 

Additionally, agents or data strings may occasionally swap with the wire tile a door needs to return to its normal position, blocking it from doing so. In this case the door sends a repelling signal to the offending tile making it traverse with a wire tile backward once, if there is not a wire tile behind it and instead another agent or data string tile the repel state will be passed from tile to tile until one is able to swap. Then, all of the previous tiles still in the repelling state may return to normal by swapping with this wire tile until it reaches the door in need of resetting. 

\begin{figure}[t]
    \centering
    \includegraphics[width=.9\textwidth]{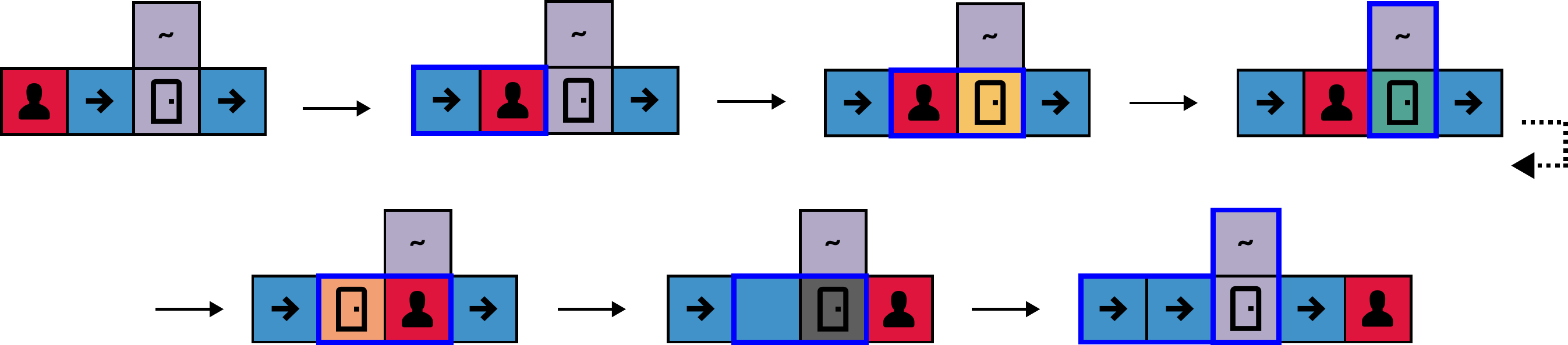}
    \caption{The door in action. Once an agent asks to pass the door, the door first confirms with its handle, after which it goes into an open state. The agent can then pass the door. The door goes into an orange warning state, after which it is only allowed to swap with a wire tile to go back to its original position.}
    \label{fig:door_in_action}
\end{figure}

\subparagraph*{Crossover Gadget.} When wires crossover within the table, at the edge of the supertile, and within construction wires, we require a gadget to control the flow of information across these wires, see Figure~\ref{fig:crossover_gadget}.
The crossover gadget has 3-4 doors arranged with crossover gadget handles at each corner, all around a center wire tile. In most cases, a locking/unlocking agent or the agent's own transition rules ensure it passes through a crossover gadget in the appropriate direction; however, in a few rare instances, the agent will need to signal through the crossover gadget to lock other doors within the crossover before traversing, unlocking them upon exit. 
\begin{figure}[t]
    \centering
        \includegraphics[width=.175\textwidth]{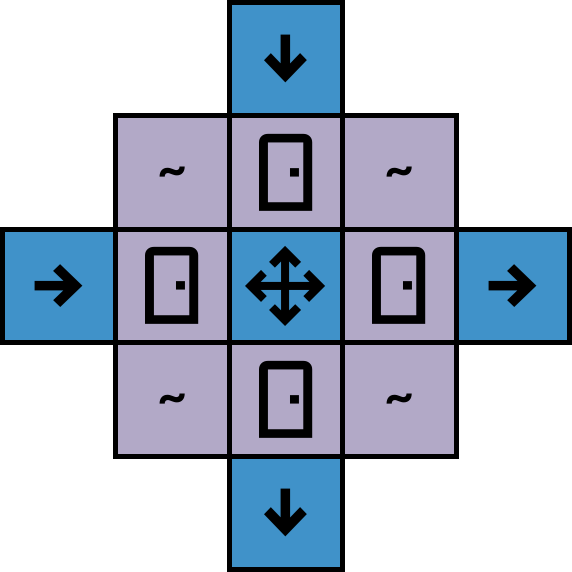}\hfil
        \includegraphics[width=.175\textwidth]{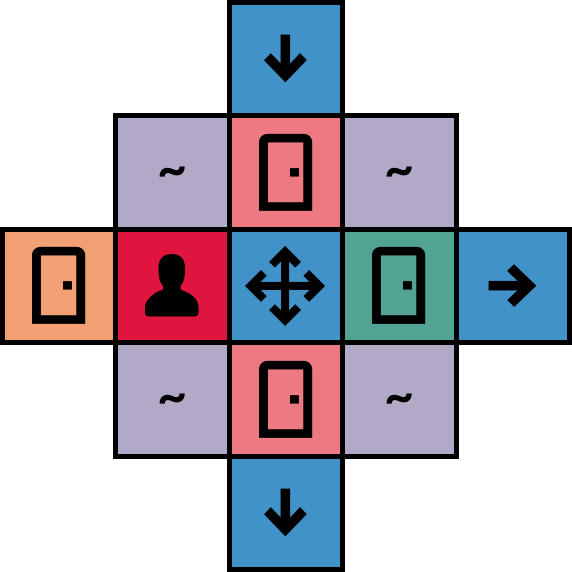}
        \caption{\label{fig:crossover_gadget}Left: Standard crossover gadget. Right: Agent traverses a crossover gadget horizontally. The red doors are locked, and the green door is open.}
\end{figure}

\subparagraph*{Punchdown Gadget.} The punchdown mechanism allows the distance between datacells to be calculated by decrementing a data string representing the necessary number of columns to traverse in the table.
See Figure~\ref{fig:punchdown_gadget} for how decrementing works with the punchdown gadget. 
When a data string comes along a wire, the punchdown gadget will ``delete'' one of the unary digits by transitioning it into a normal wire tile.
The rest of the data string is allowed through the door as normal.
The end-of-string tile resets the punchdown gadget.

\begin{figure}[t]
    \centering
    \includegraphics[width=0.3\textwidth]{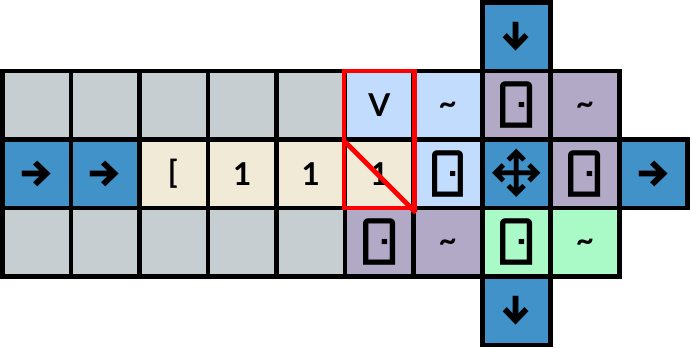}\hfil
    \includegraphics[width=0.3\textwidth]{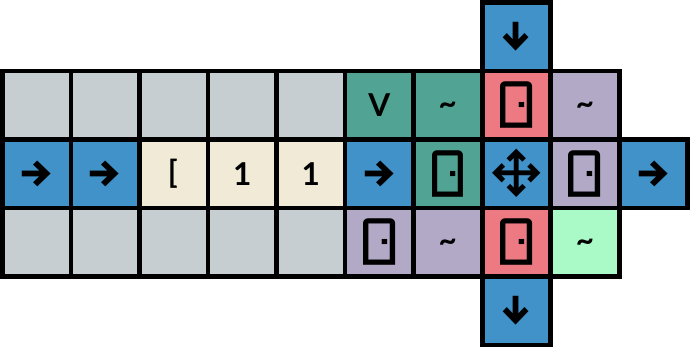}\hfil
    \includegraphics[width=0.3\textwidth]{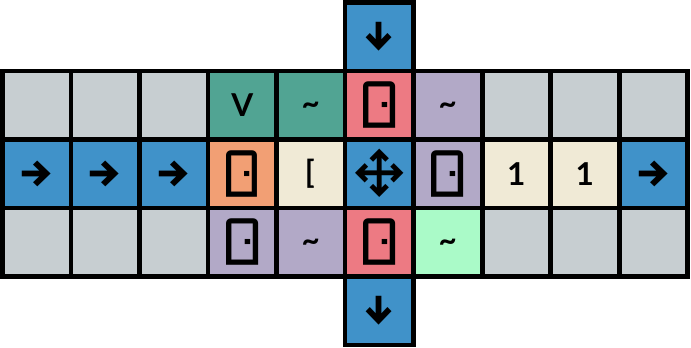}
    \caption{\label{fig:punchdown_gadget} The punchdown gadget first decrements a data string by turning the 1 into a wire tile (Left). Then, the associated door is unlocked, after the north and south crossover doors are locked (Middle). Lastly, the punchdown door resets after the end-of-data string tile swaps with the door (Right).}
\end{figure}

\subparagraph*{Transition Selection Gadget.} On each of the four edges of the supertile, directly next to the wires, lies a Transition Selection Gadget, see Figure~\ref{fig:transition_selection_gadget}.
Upon initiation of a transition with a neighboring supertile, this gadget and its mirror on the neighboring side together determine which transition to take.
The gadget is filled with the transition rules when a transition is initiated.
Once this is done, each supertile initiates a nondeterministic selection agent to walk up and down the border between the two selection gadgets.
When they meet at a rule, they may or may not transition with one another to select that transition rule to be executed on the two supertiles.
The newly selected states are then returned to the tables of the two supertiles to update their respective active columns.
The complete workings of supertile state transitions are explained in Section~\ref{sec:transitioning_tiles}.

\begin{figure}[t]
    \centering
    \includegraphics[width=.4\textwidth]{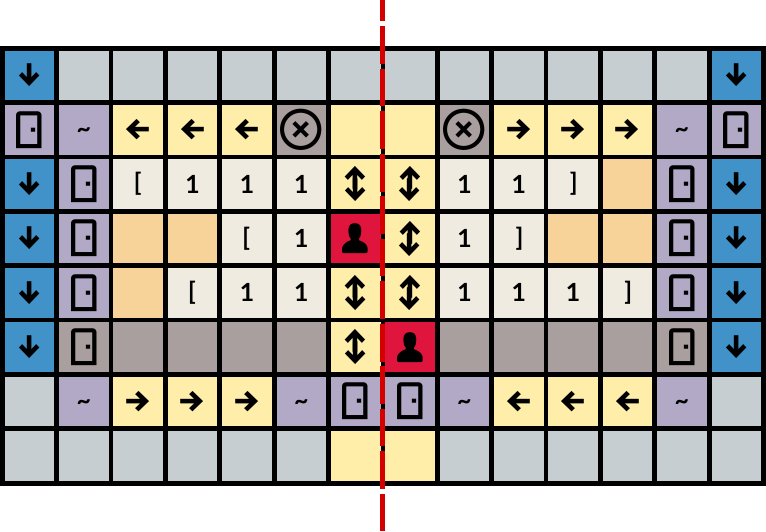}
    \caption{The transition selection gadgets of two neighboring supertiles. The border between supertiles is depicted in red. The agents non-deterministically walk up and down and can eventually select a transition or abort by transitioning with each other.}
    \label{fig:transition_selection_gadget}
\end{figure}

\subsection{Table \& Wiring}\label{subsec:table_and_wiring}

\subparagraph*{Table.} The table is the primary gadget of the supertile.
It stores the affinities and transitions of all possible pairs of states in the system and the functional state of the supertile.

The rows are ordered to prevent the crossing of wires at the border of supertiles.
Hence, the rows are ordered in reverse for the North and West subtable, as shown in Figure~\ref{fig:simple_overview_supertile}.
The columns are ordered normally.

A datacell is a single cell in a subtable and stores both the affinity and transition information for a specific neighbor direction, see Figure~\ref{fig:macrocell_overview}.
It is a compound gadget comprised of an incoming wire, an outgoing wire, wire traversal doors, a punchdown gadget, an affinity door, and the transition storage area.
It can interact with data strings, and agents can use it to get information about the state of the supertile.

Since we use temperature 1, affinities stored in a datacell are simply a Boolean; a combination of two states either has an affinity in this orientation or not.
Each column in the table has a vertical wire next to each datacell.
On this wire, for each datacell, this door indicates if this specific datacell has an affinity or not.
If there is no affinity, the door is a normal door.
Otherwise, the door is a special affinity door, indicating the affinity.

Transition rules are stored in a transition rule storage compartment at the bottom of each datacell.
They are stored as follows.
If the combination of the supertile state and the neighboring supertile state corresponding to this datacell has a transition rule, the storage contains only the new state this supertile would become if this transition is taken.
If the system we are simulating allows for multiple transitions for this pair of states, we define a fixed order for these transitions prior to the simulation.
The resulting states are then stored in the transition storage area according to this order.
The rules always match up because we predefined this order and because the supertile template is copied every time.
The storage is templated to be the size that is necessary to store the maximum number of transition rules of any state pair in the system, such that all transition storage compartments have the exact same size.
If a compartment contains fewer transitions than the maximum, it is filled with blank transitions.

\subparagraph*{Table Locking.}\label{sec:table_locking} The wires enter the table from the left.
Each wire has a special table locking door and corresponding handle, that is situated at the edge of the table.
If an agent tries to act on the table, it first must pass its corresponding door.
If that door is open, two locking agents move up and down along the left edge of the table.
These signals lock all other doors corresponding to incoming wires, allowing for only one operation on the table at a time.
If two of these locking agents meet, only one is allowed to continue on while the other disappears.
Once the locking agent reaches either the top or bottom of the left edge, it transforms into a successful locking agent and moves back toward its door.
A door that sent out locking agents and that sees another locking agent coming by, knows that its own locking agent was the one that failed, and the door will lock itself.
Even if multiple doors try to lock the table at the same time, only a single door will receive its corresponding successful locking agents coming back from both the top as well as the bottom.
This door then allows the waiting agent to enter the table.
An agent trying to enter a locked table will simply wait.
This does not lead to deadlocks, since the origin table of this agent was not locked when this agent left it.

\begin{figure}[t]
    \centering
    \includegraphics[width=.8\textwidth]{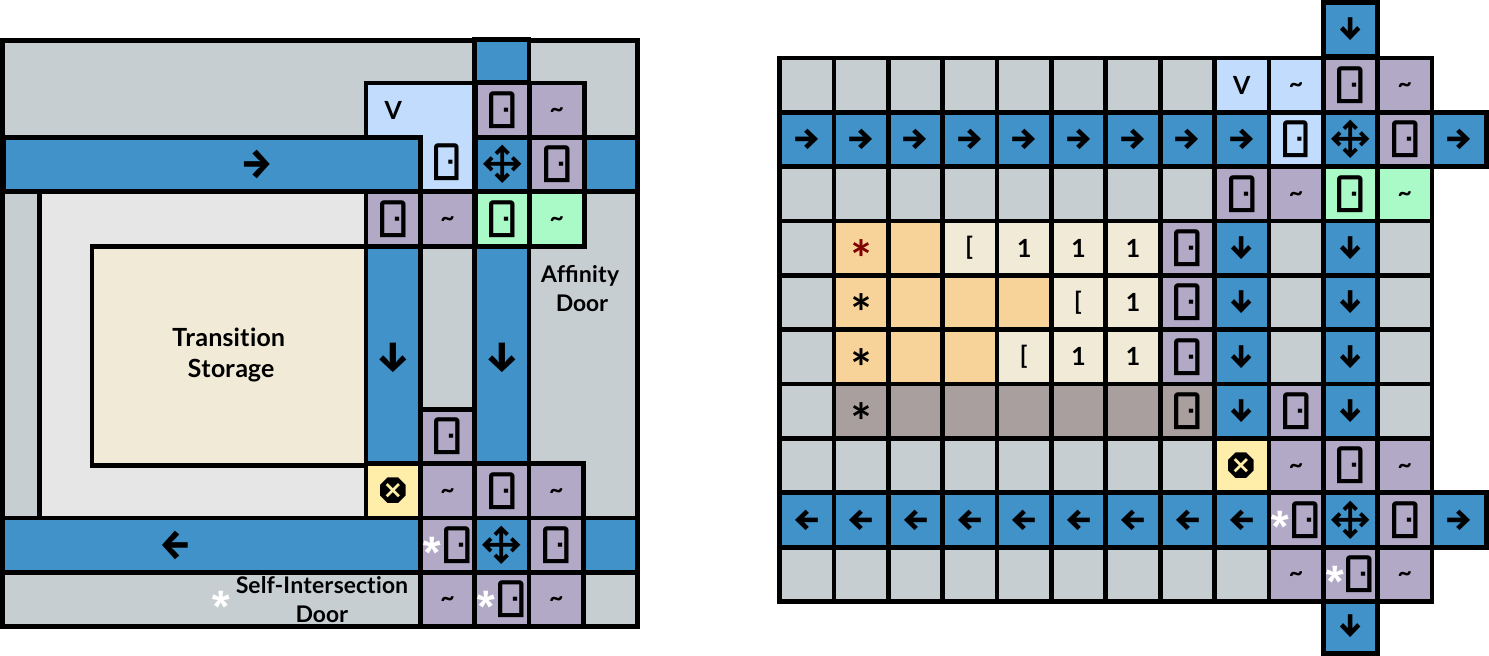}
    \caption{An overview of a datacell}
    \label{fig:macrocell_overview}
\end{figure}

\subsection{Outer Shell}\label{subsec:outer_shell}
The outer frame of both the table and the supertile are comprised of outer frame tiles, with the appropriate doors along wires to access the supertile or table, see Figure~\ref{fig:simple_overview_supertile}.
Inside, they contain an inner wire for traversing the boundary, and then inner frame tiles, again with the appropriate doors.
Initially, supertiles are built with doors that indicate that no neighbor is present.
Once a neighbor is found, they transition to their regular counterparts.

\section{Attachment}\label{sec:attachment}
The attachment process consists of several phases.
First, if the supertile detects it has a spot next to it without a supertile build, it checks if there is an affinity in that direction.
Next, the supertile is copied piece by piece into the neighboring spot.
Lastly, when the construction is completed, this newly built ghost tile sends a signal to all neighboring tiles to select an actual state for itself.

\subsection{Initiation}\label{subsec:initiation}
Whenever a supertile changes state, either via transition or via attachment, it sends out state transmission agents to all four directions.
When such an agent reaches an edge of its supertile and finds no neighbor, the process of determining whether to attach a new supertile starts, see Figure~\ref{fig:agents_search_edges_overview_attachment}.
The state transmission agent is not able to reach the true edge of the supertile if there is no neighbor present.
Instead, it reaches the inner row of doors on the edge of the supertile that controls access to the outline wire.
Since there is no neighbor, the agent changes state to a lookup state, enters the outline wire, and goes to the affinity selection wire, see Figure~\ref{fig:agents_search_edges_overview_attachment}.
Initially, all outline wires on the edges are directed such that if no neighbor is found the state transmission agent will be directed to the affinity selection wire.

The affinity selection wire wraps around the outside of the entire subtable for its given direction (E/N/W/S) allowing for the agent to drop into the active state column wire and search every possible state in the system for an attachment. For example, if the missing tile is to the West of the supertile then the wire runs above West and its outgoing wire is below West 1.

\begin{figure}[t]
    \centering
    \includegraphics[width=.9\textwidth]{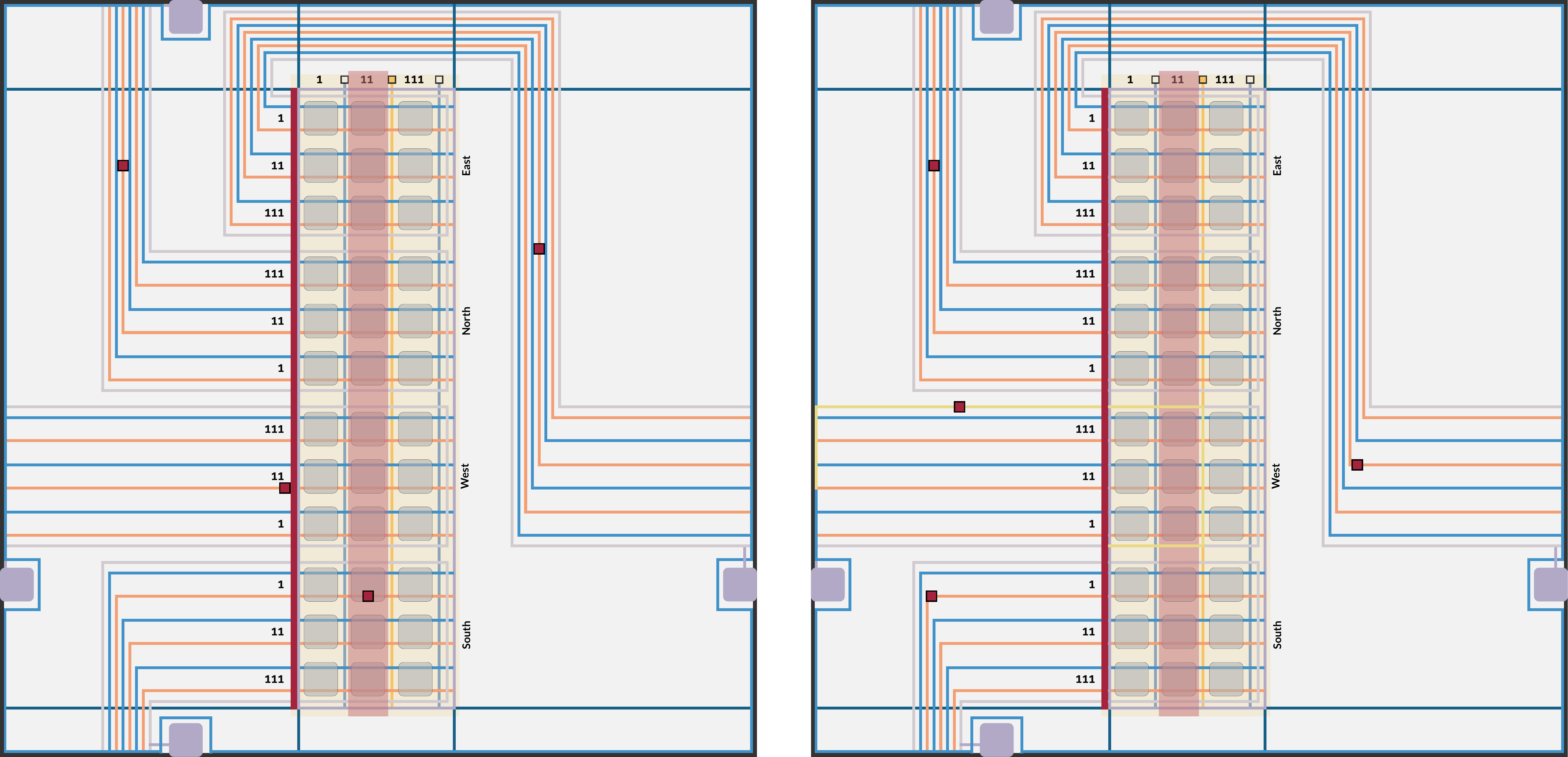}
    \caption{Left: The tile transmits its state to neighboring tiles and discovers it has no neighbors. Right: The west agent has moved over to the affinity selection wire (in yellow).}
    \label{fig:agents_search_edges_overview_attachment}
\end{figure}

\begin{figure}[t]
    \centering
    \includegraphics[width=.9\textwidth]{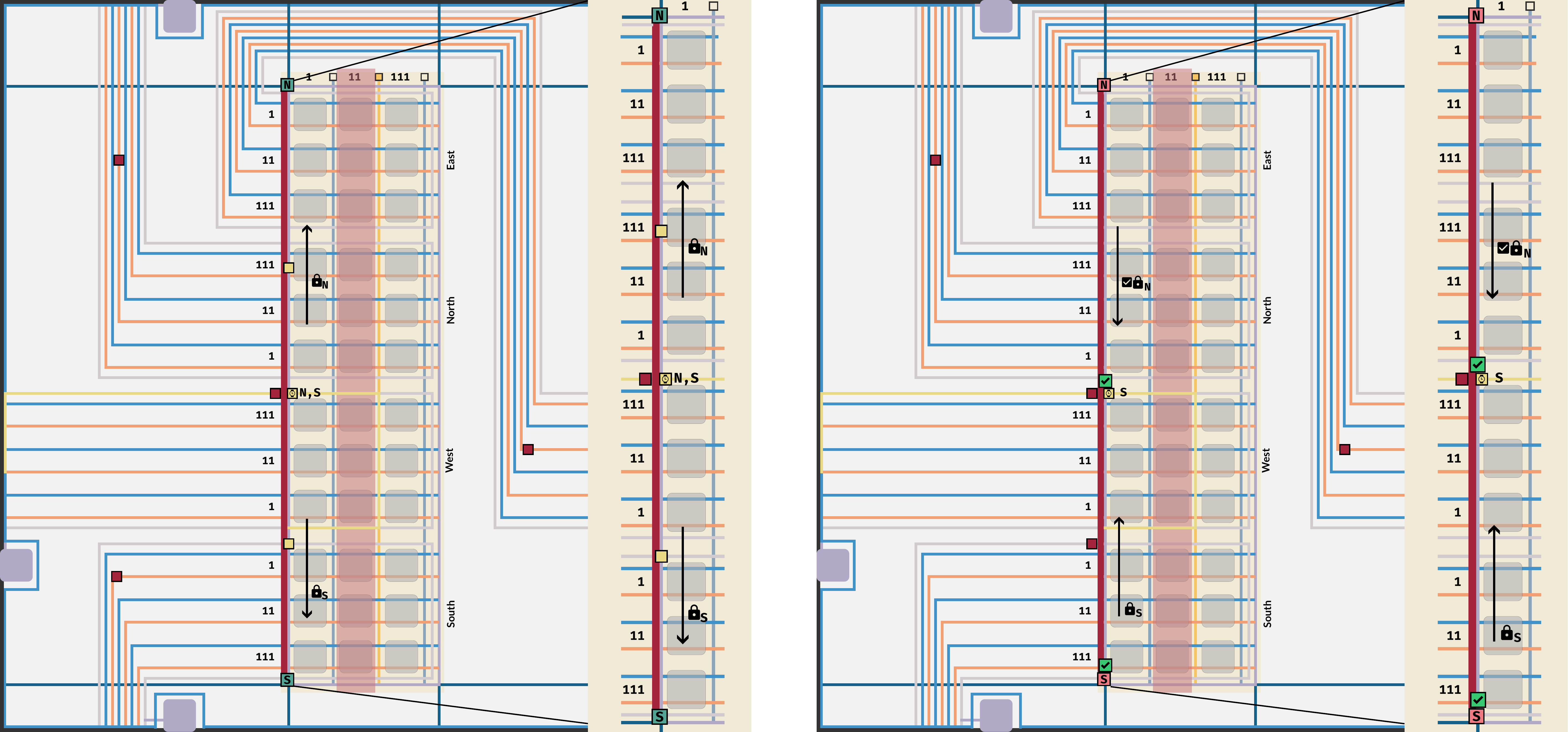}
    \caption{The lookup agent reaches table and locks the table}
    \label{fig:agent_lock_table_overview_attachment}
\end{figure}

\subsection{Checking Attachment}\label{subsec:checking_attachment}
Each datacell has an affinity door in the vertical active column wire next to it, indicating that there is an affinity between these two states in this direction.
If there is no affinity, there is a door with a no affinity state.
The initially selected possible state may not be the final state of the supertile so as not to preclude every possible state that the supertile may end up in.
This initial lookup is just to ensure we do not begin construction of a neighboring supertile if no tile can attach there in the system we are simulating.

\paragraph*{Lookup in Table Section.}
Once the signal reaches the edge of the table it will initiate the standard table locking process described in Section~\ref{sec:table_locking}, and visible in Figure~\ref{fig:agent_lock_table_overview_attachment}. The signal will then traverse the table until reaching the active state column and begins its descent down the active state column wire with the affinity doors; see figure \ref{fig:intersection_attachment_found_overview_attachment}.

The lookup agent may nondeterministically transition with any affinity door to a \emph{found} state so that the construction process can begin. It will traverse down to the state lookup exit wire and to the edge of the supertile.
The table stays locked.
Upon reaching the edge of the supertile, the agent transitions into the Copy Checkpoint.
The Copy Checkpoint is a stationary tile on the edge of the supertile, orchestrating the copy process.
See Figure~\ref{fig:intersection_attachment_found_overview_attachment} for the state lookup process at a high level.
If there are no affinities in that direction or the lookup agent never transitions with an affinity door, the agent will not change to the found state and simply unlocks the table after exiting from the bottom of the state lookup wire. No attachment is started in that case; see Figure~\ref{fig:no_attachment_overview_attachment}.

\begin{figure}[t]
    \centering
    \includegraphics[width=.9\textwidth]{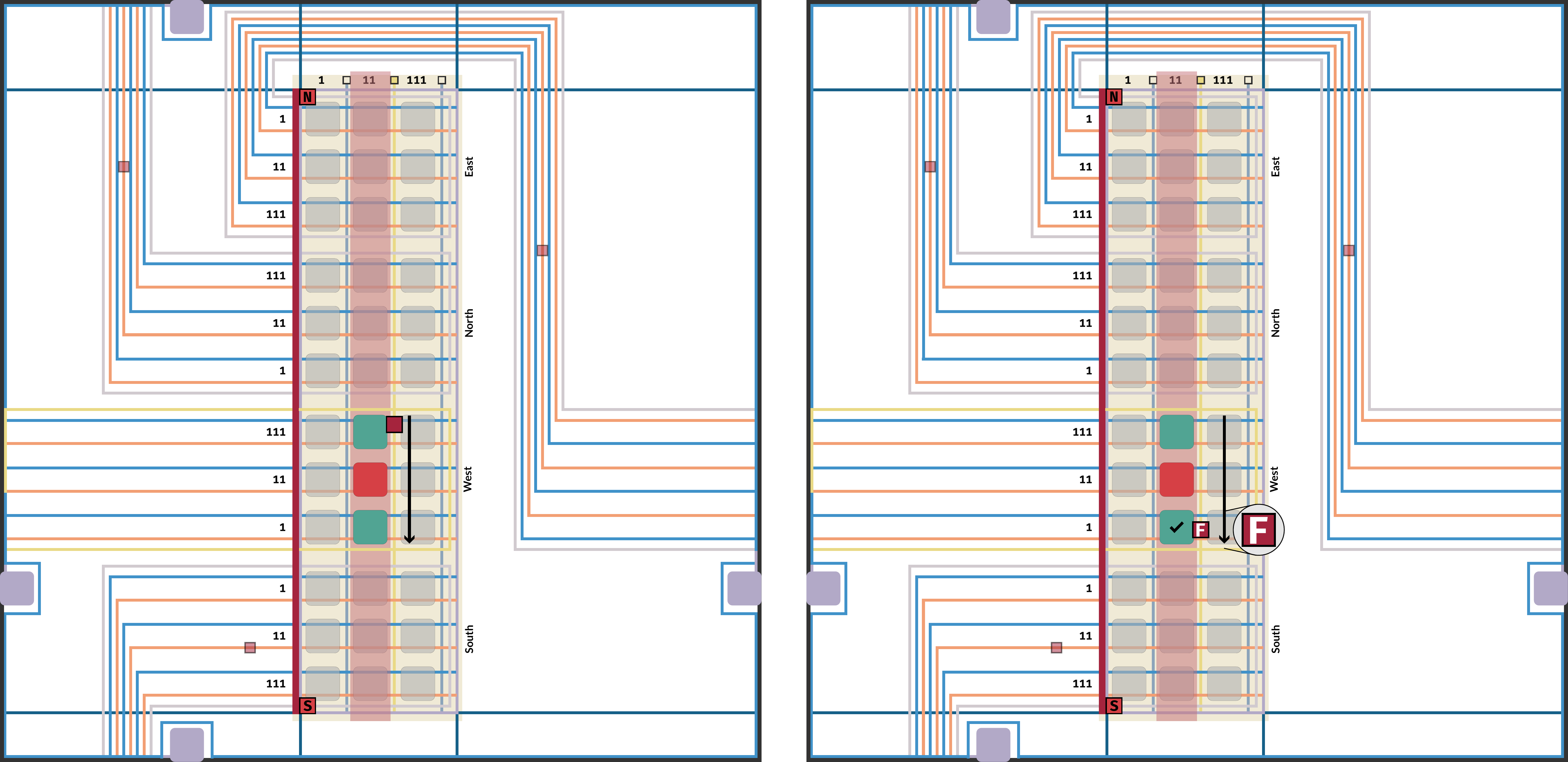}
    \caption{The lookup agent reaches the intersection with the active state and confirms there is an attachment possible for that direction}
    \label{fig:intersection_attachment_found_overview_attachment}
\end{figure}

\begin{figure}[t]
    \centering
    \includegraphics[width=.9\textwidth]{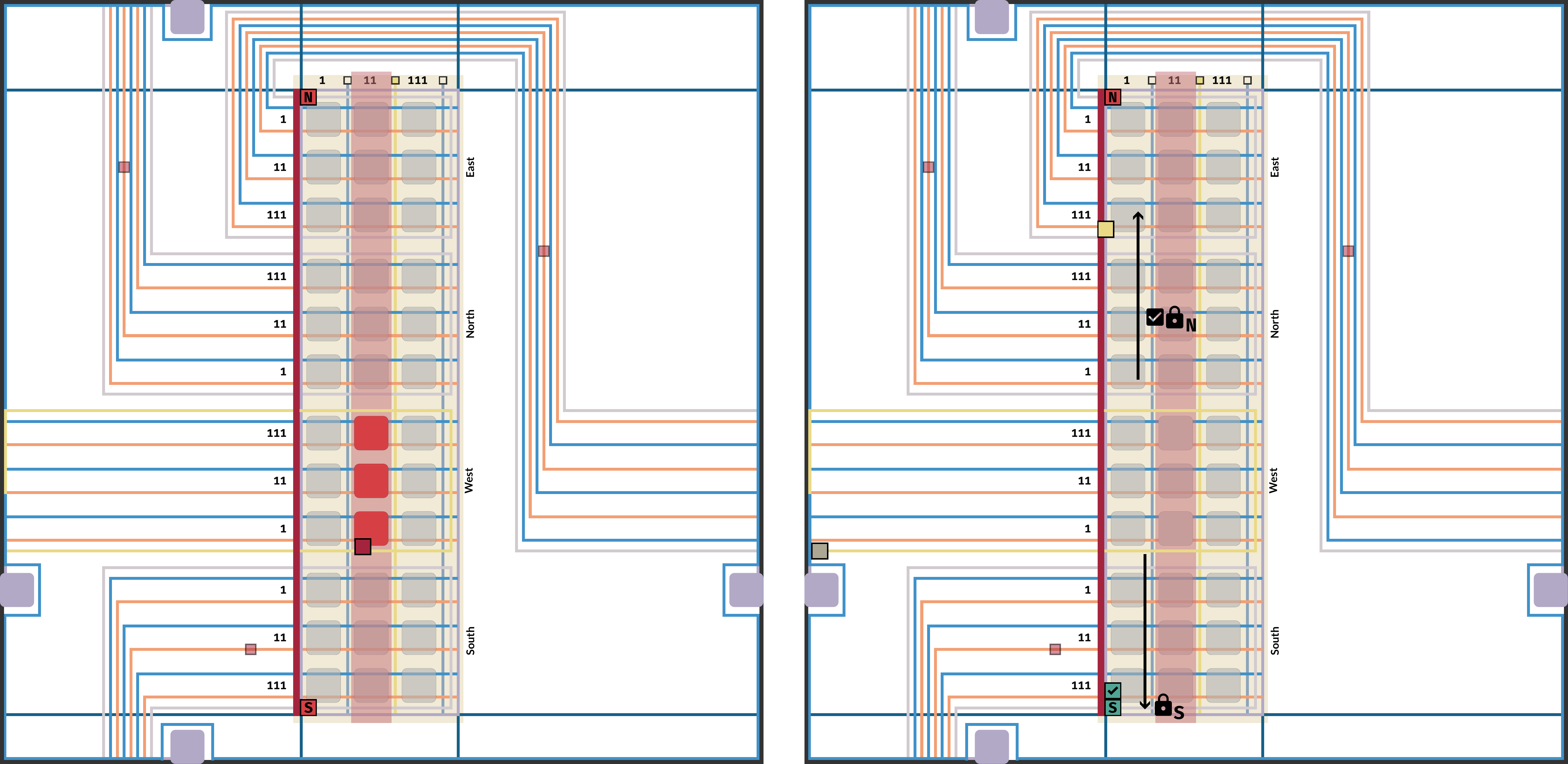}
    \caption{The lookup agent finds no attachment and unlocks the table, deleting itself when it reaches the edge of the supertile.}
    \label{fig:no_attachment_overview_attachment}
\end{figure}

We only begin construction if there is a possible attachment in that particular direction.
This does give a slight fuzz advantage over previous builds. A ghost tile (if simulating a system that requires or includes them) will not begin growing if the state has no affinities in that direction. 

\subsection{Preparing for Copying}\label{subsec:preparing_for_copying}
In order to properly copy the tile, we need to clear the interior wires, lock any further outside communication from coming in, and activate a number of processes.

\paragraph*{Locking the Supertiles Outer Frame for Construction.}
First, the Copy Checkpoint sends a locking agent around the outer supertile wire, which locks every outer door on the edges of the supertile, preventing any further agents from entering the supertile. 

\paragraph*{Clearing Table and Wires.}
Once the locking agent returns to the Copy Checkpoint, it will turn into a wiping agent. The main wiping agent will traverse the edge wire and spawn minor wiping agents to sweep every wire. They delete any waiting agents outside of the table. Moreover, the interior and edge of the table are reset to be entirely inactive. This process is shown in figure \ref{fig:lock_and_wipe_supertile_attachment_ov}. Should an agent trigger a locking process before it is wiped, the locking agents spawned will transition into wire tiles upon contact with an inactive door. The active state of the current supertile is stored within a containment area at the top of the active column to ensure it is spared from wiping. 

\begin{figure}[t]
    \centering
    \includegraphics[width=.9\textwidth]{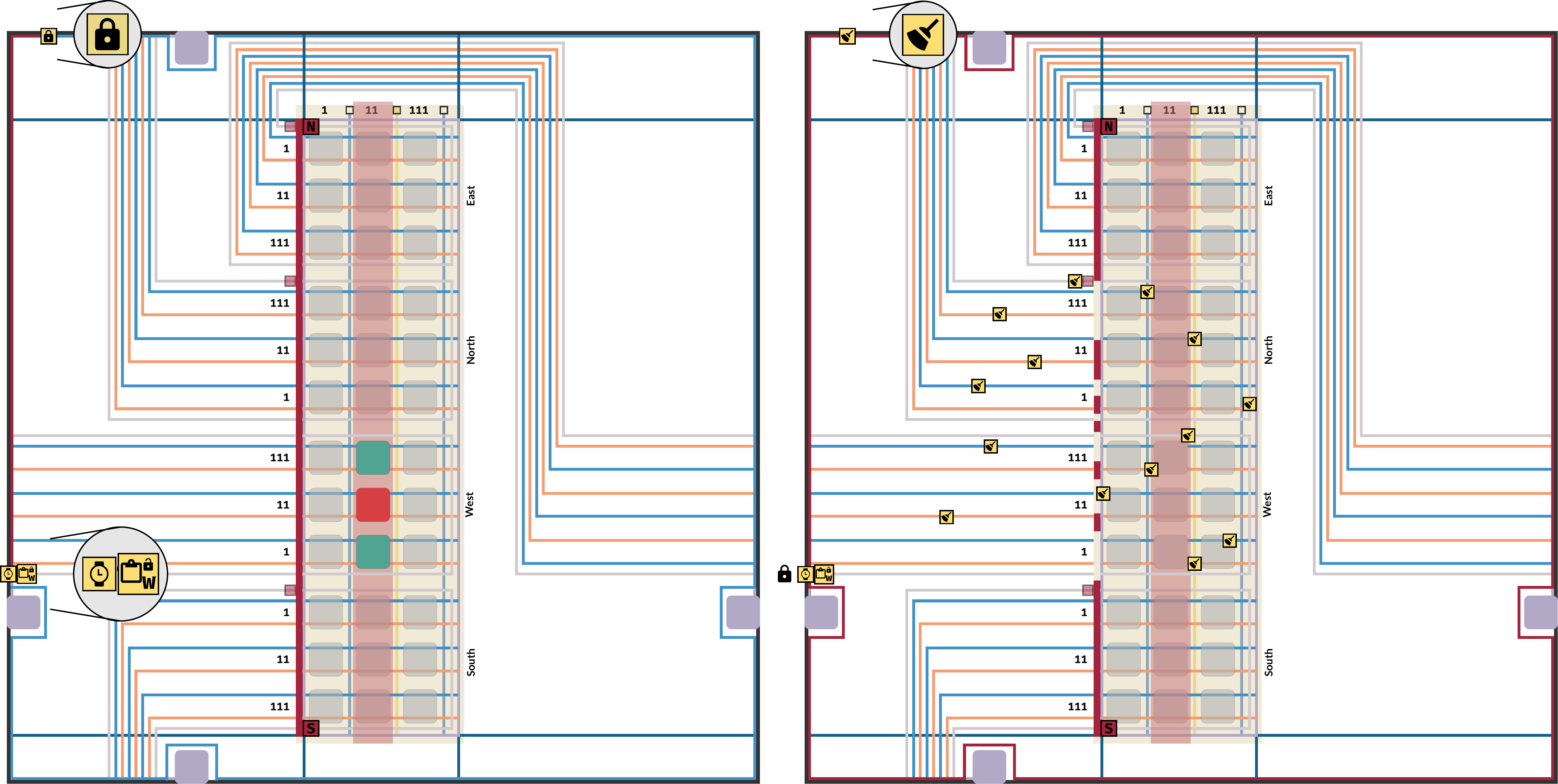}
    \caption{Copy Checkpoint (West) begins construction by locking then resetting/wiping the supertiles interior.}
    \label{fig:lock_and_wipe_supertile_attachment_ov}
\end{figure}

\subsection{Copying Supertile Outline}\label{subsec:copying_supertile_outline}
Now, the supertile is ready to start the copy process.
We will copy the supertile in several steps, piece by piece.
We start with the outline, then the table, and then the wires.
It could be that multiple adjacent supertiles are trying to build in the same location.
In this case, it is necessary to ensure that only a single supertile gains construction jurisdiction over this spot.

\paragraph*{Claiming Mirror Side.}
The mirror edge is the edge of the tile under construction that is immediately adjacent to the supertile initiating construction.
Once the wiping agent reaches the Copy Checkpoint, the Copy Checkpoint will spawn 2 claiming agents sent along the outer wire to the adjacent corners.
If building to the West, these claiming agents go to the North and South.
The purpose of these agents is to place and claim the mirror edge corners.
These are important to prevent construction conflicts.
Once the agents reach the corners, the northwest and southwest corners in our example, as seen in Figure~\ref{fig:placing_mirror_edge_attachment}, they open the supertile corner doors and try to build the corner of the neighboring tile.

They build the corner of the neighboring tile as follows.
The agent first goes through the door in the direction it wants to build.
Then, it attaches the first empty construction tile and then a second, transitioning the first into a door as it swaps into the second.
Whenever we say that a certain tile is attached or built, we mean that an empty construction tile is attached, which is then transitioned into the correct state.
After the wire tile is built, the agent swaps with the wire tile, and tries to build the crossover gadget.
It marks the crossover gadget as claimed by the East side where it came from.

It could also be the case that there already exists a corner crossover gadget.
This gadget can then either be claimed, or unclaimed.
If it is unclaimed, it is claimed, otherwise, the agent goes back to the Copy Checkpoint to report a failure.
If the crossover corner is successfully claimed, the agent also goes back, but this time to report a success.
If the Copy Checkpoint receives at least one failure, it relinquishes its claim to any successful corners (via agent) and aborts construction.

To abort a construction at this point, the only thing necessary is to unlock the doors on the outline wire, set the active state column in the table, and finally unlock the table.
These processes are the same as their counterparts at the start of construction.

\begin{figure}[t]
    \centering
    \includegraphics[width=.9\textwidth]{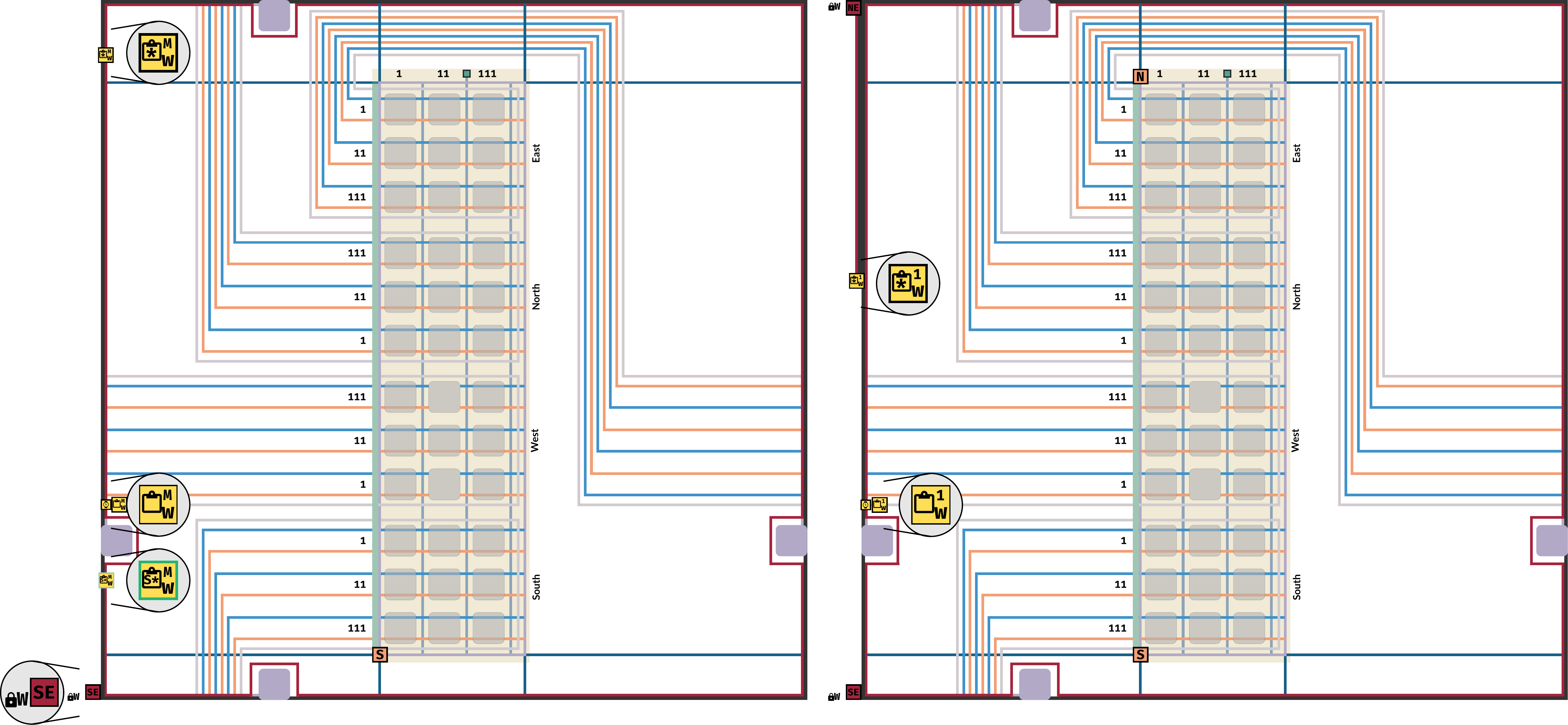}
    \caption{The Copy Checkpoint sends 2 claiming agents to claim the mirror edge. The Copy Checkpoint then sends a mirror edge agent to place the mirror edge.}
    \label{fig:placing_mirror_edge_attachment}
\end{figure}

When the Copy Checkpoint receives two successes, it knows that from the three other supertiles that could potentially try to build in this spot, only one might still be trying.
In this case, we initiate construction of the full mirror edge.
This is done as follows.
One mirror edge agent is sent to one of the corners that has just been claimed.
This agent will then build the outline wire of the mirror edge.
At the same time, it will attach (via a blank construction tile) \emph{mirror} tiles.
These tiles then transition with the tiles in the original supertile to mirror their state. Should the corner it is attempting to mirror be a crossover gadget door, the mirror tile will spawn a crossover copy agent, which when the subordinate copy agent recognizes this will instead back up, transitioning the tiles it traverses back from into blank construction tiles. When the crossover is activated, it will continue. 
In this way, the doors get placed in the correct positions.
Moreover, the mirror edge agent ensures intersections are properly constructed and construction doors are activated, see Figure~\ref{fig:placing_mirror_edge_attachment}.
Once it has reached the other corner, it goes back to the Copy Checkpoint, which can then initiate the next phase.

\paragraph*{Copying Placement General Notes.}
Before explaining the next phase, we will first detail the standard copy procedure.
This procedure is used to copy the rest of the supertile.
It uses a Copy Director, which acts in the original supertile and sends copies of tiles to the Placement Director, which is located in the newly constructed tile, and places the copies on the correct locations.
The copies are send over the outline wires and/or construction wires of the tiles.
For this to work, the route from the Copy Director to the Placement Director needs to be clear and doors along this path need to be set correctly to ensure copies of tiles end up in the correct spot.

The setup is done by a copy agent send out from the Copy Checkpoint.
It first places the Placement Director in the appropriate spot.
Then, it goes to place the Copy Director.
It takes the same path as the copied tiles will take.
While going over this wire, it ensures all wire tiles are pointing in the correct direction and doors that lead in the wrong direction are closed.

Once at the correct spot, it transitions into the Copy Director and starts copying tiles and send them to the Placement director via the path it just created.
As soon as all tiles of this part are copied, it goes to the Placement Director, deletes it, and finally returns back to the Copy Checkpoint, which can then start the next phase of copying.

\paragraph*{The Copy Process.}
For each tile that needs copying, the Copy Director follows the following scheme, visualized in Figure~\ref{fig:copy_director_process}.
First, the Copy Director sends a direction to the Placement Director (North/South/East/West).
Then, it swaps with the tile that needs copying.
This tile then spawns a copy of itself on the wire that also goes to the Placement Director.
Lastly, the Copy Director swaps back with the tile that now has copied itself.

The Placement Director ensures it is always at the end of the part that is built.
It first receives a direction. It then swaps with that direction tile which attaches an empty construction tile.
Then, when the copy arrives, the Placement Director swaps with the copy, and the copy can transfer its state to the newly attached empty construction tile.
Then, the Placement Director swaps back with the now copied tile and deletes it in the process.
This process is shown in Figure~\ref{fig:placement_director_process}.

Not every tile is copied over individually, to reduce the number of states, we copy the crossover gadget in one go. 
Instead of sending a copy of every tile in the crossover gadget, we send a single tile containing a template of full information of the crossover gadget. To stop the directionality from being an issue our copy director will send a second special direction tile before a crossover gadget. This way the agent may be in the middle of the gadget attaching blank tiles and transitioning the surrounding doors into them without knowing the direction from which the crossover came. Each door remain in waiting state until it has attached its handle. The placement director will lock any necessary doors when construction is complete. 

\begin{figure}[t]
    \centering
    \includegraphics[width=.9\textwidth]{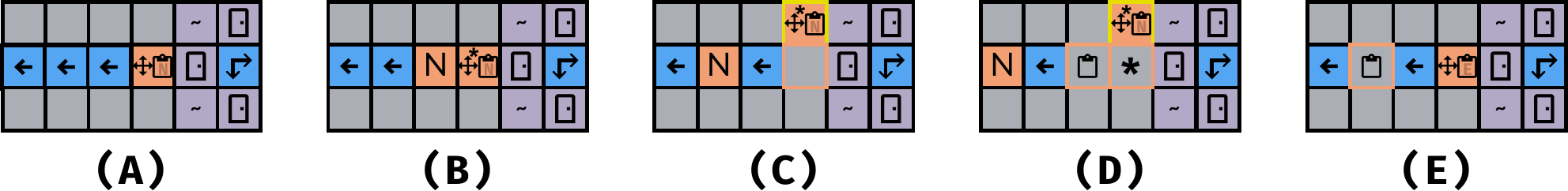}
    \caption{The general copy process.}
    \label{fig:copy_director_process}
\end{figure}

\begin{figure}[t]
    \centering
    \includegraphics[width=.8\textwidth]{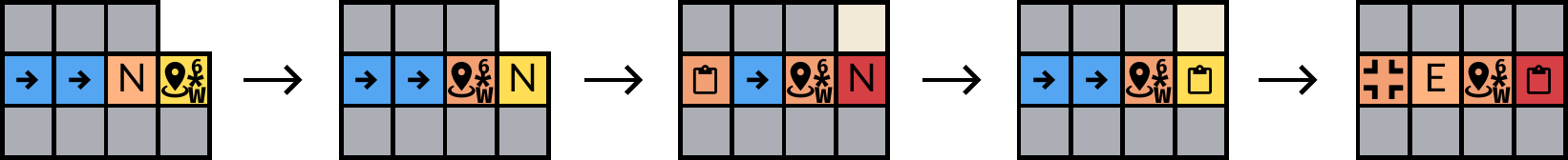}
    \caption{The placement process of a border tile up to the arrival of a crossover construction agent just before the previous border tile is deleted.}
    \label{fig:placement_director_process}
\end{figure}

\paragraph*{Constructing Adjacent Supertile Outline Wires.}
This copy process is used to build the other edges of the supertile, see Figure~\ref{fig:copying_adjacent_edges}.
At this point, only the mirror edge has been constructed.
We use the copy process to build one edge at a time.
For horizontal attachments, we first build the top edge, then the bottom edge.
For vertical attachments, we first build the left edge, then the right edge.
This is to ensure that if there is still another supertile that is trying to build in the same spot, we recognize this situation and deal with it accordingly.

To build these edges of the supertile, the Copy Checkpoint spawns a new copy agent.
This agent will put a Placement Director at the corner that was already built, then moves over the outline wire of the original supertile to the other side, where it will transfer into a Copy Director.

If there is still another supertile building in this spot, the two Placement Directors will eventually meet.
Nondeterministically, one of the two continues, while the other is removed.
The losing Placement Director sends a signal to its respective Copy Checkpoint, which then starts the abortion process similar as before.
The Placement Director that is left over will ignore and remove any copied tiles intended for the now removed Placement Director.

Along with the outline wire the transmission selection gadgets are added.
For a tile to start internal construction it must verify it has claimed all 4 corners. Only doors on its side are active all others are blocked.

\begin{figure}[t]
    \centering
    \includegraphics[width=.9\textwidth]{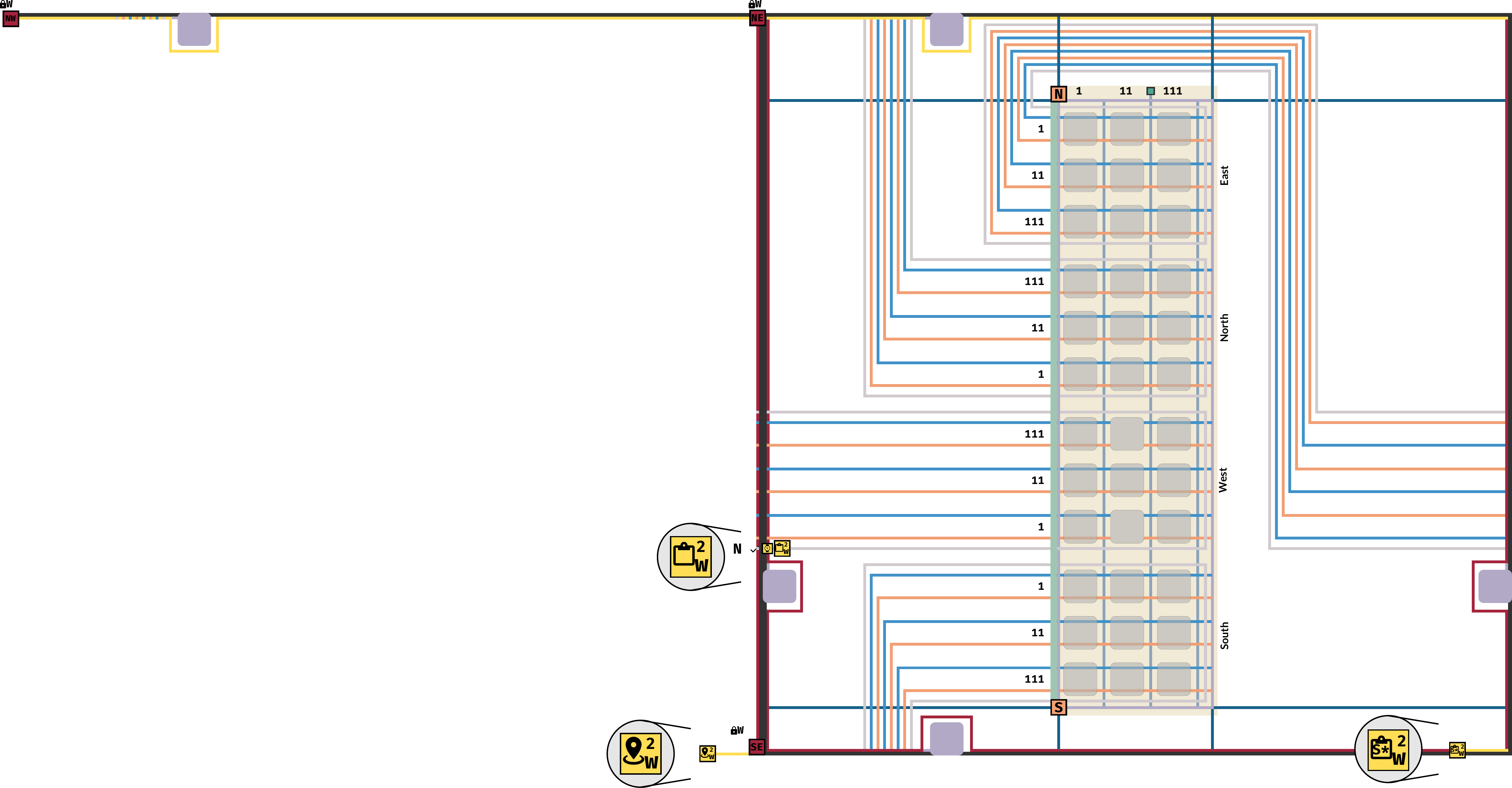}
    \caption{The copy director copies each adjacent edge. It first does the north, then the south edge.}
    \label{fig:copying_adjacent_edges}
\end{figure}

\paragraph*{Building the Far Side of the Outline.} Once we have confirmed claim of all four corners we then can build the opposing edge starting at the designated corner.
This again uses the normal copy process.
The Copy Checkpoint sends out a copy agent, which places a Placement Director.
Then, this copy agent traverses the outline wire to the appropriate place, locking doors on the way if necessary.
Lastly, it will transition into the Copy Director and start copying the last edge of the supertile, see Figure~\ref{fig:copying_farside_edge}.

\begin{figure}[t]
    \centering
    \includegraphics[width=.9\textwidth]{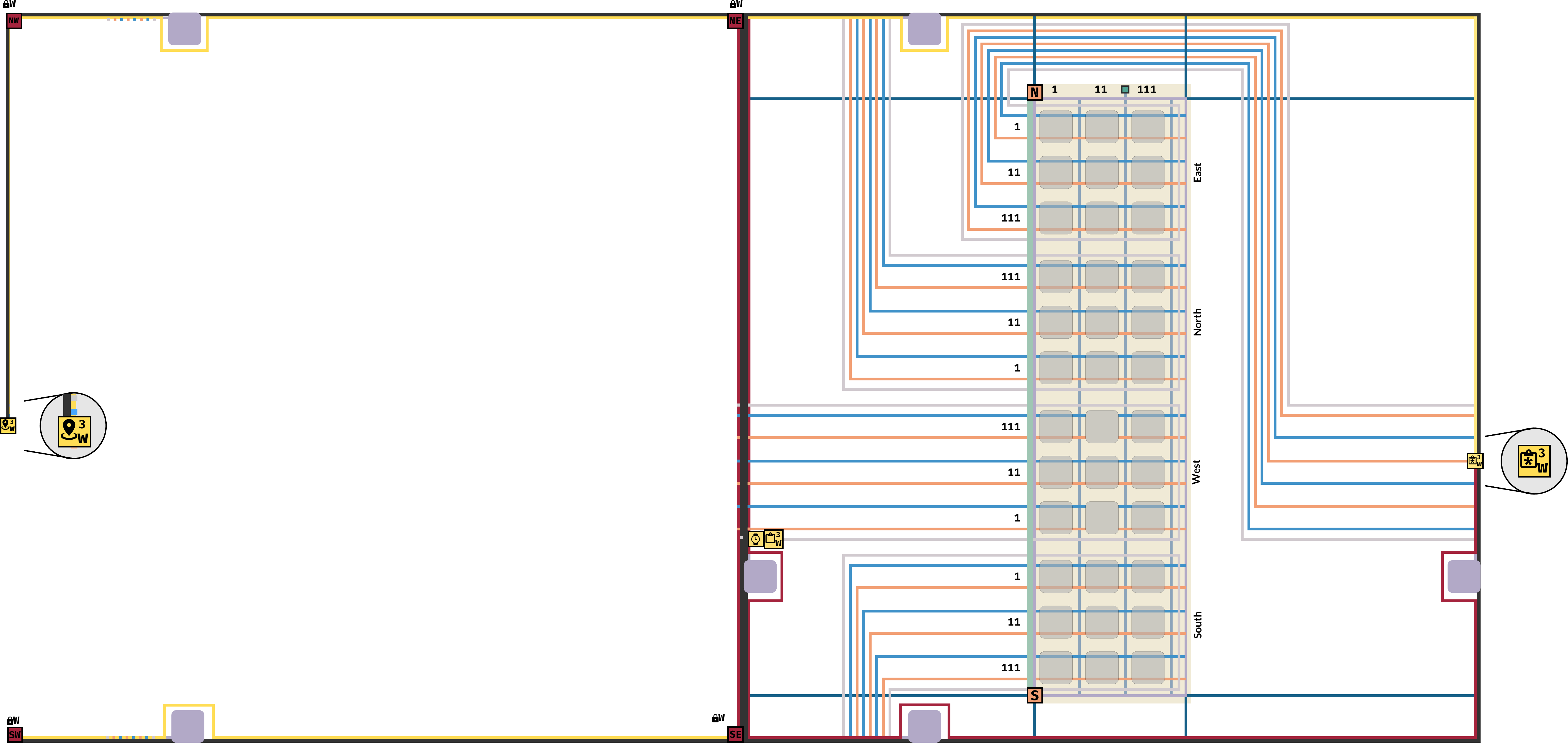}
    \caption{The copy director and placement directors copy the far side edge.}
    \label{fig:copying_farside_edge}
\end{figure}

\subsection{Construction Wires}\label{subsec:construction_wires}
Once construction of the outline of the new supertile is complete and confirmed by an agent at the Copy Checkpoint, we start copying the table.
Each table has eight construction wires.
These wires extend from the corners of the table, and mark the width and height of the table, plus its position within the supertile.

To copy the table, we first copy the construction wires.
For horizontal copying, we start by copying the horizontal construction wires, then the vertical ones.
For vertical copying, we do the opposite.
These wires are not only used to indicate the placement of the table within the supertile, but they will also be used to copy the contents of the table.

To copy them, we first open the doors connecting the respective wires to the outline wire. We then copy the wires using the normal copy process, starting at the far end. The horizontal and vertical process can be seen in figure \ref{fig:copying_horizontal_table_outline} and figure \ref{fig:copying_vertical_table_outline}. 

\subsection{Copying Table}\label{subsec:copying_table}
The construction wires already contain the outer edges of the table, including the table control edge on the eastern side of the table. Once these are in, we build the rest of the table. Importantly, the only variable aspects of a datacell are the width and the height of the transition storage. These depend on the system we are simulating.
% Once the construction wires are build, we copy the outer edges of the table structure. 

\begin{figure}[t]
    \centering
    \includegraphics[width=.9\textwidth]{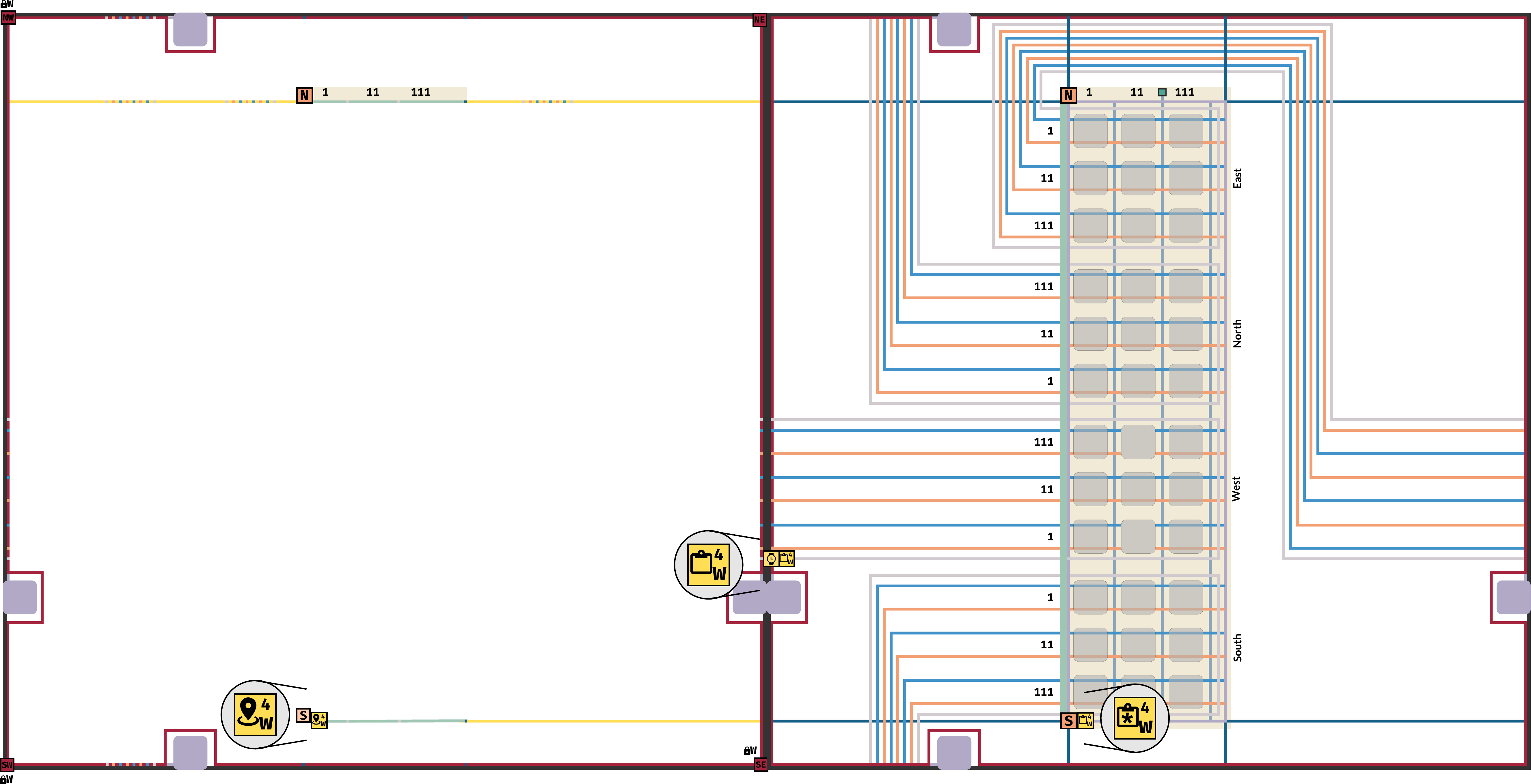}
    \caption{Copying the horizontal table outline.}
    \label{fig:copying_horizontal_table_outline}
\end{figure}

\begin{figure}[t]
    \centering
    \includegraphics[width=.9\textwidth]{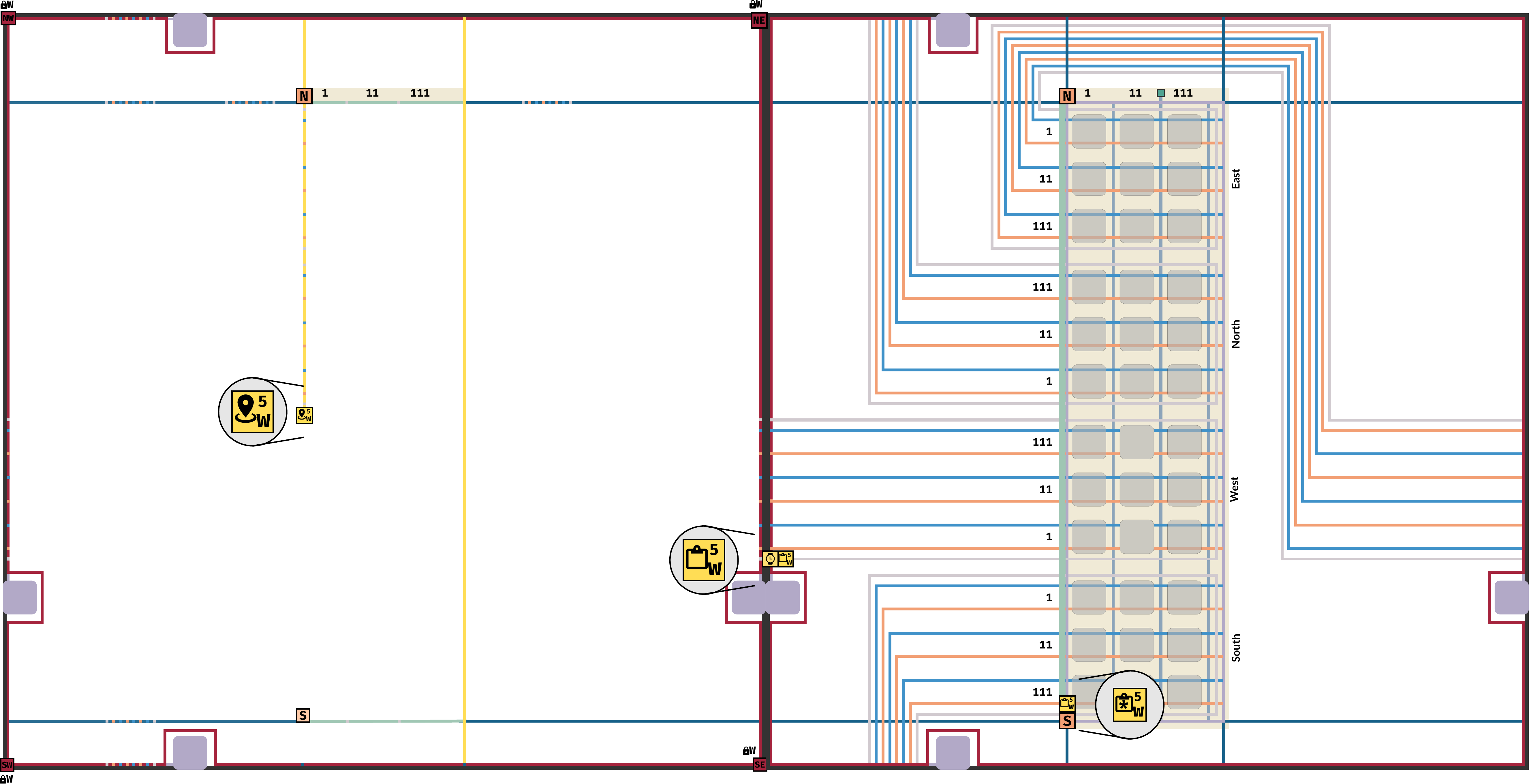}
    \caption{Copying the vertical table outline.}
    \label{fig:copying_vertical_table_outline}
\end{figure}

\paragraph*{Copying Horizontal Table Wires.} 
To transfer this information over to the new supertile, we first copy all the horizontal wires in the table.
These already contain the transition, affinity, and state lookup chute crossover doors.
We copy these over using the normal copy procedure, see Figure~\ref{fig:copying_table_rows}.
Both the Copy and Placement Directors start this procedure on the East side of the table, and the route that the copies take is via the construction wires.

Every time the Copy Director has finished a horizontal wire, it moves down to the next horizontal wire.
For every step that it takes, it sends a token to the Placement Director.

For every one of these tokens that the Placement Director receives, it goes one step forward.
The height of a datacell is implicitly transmitted 
 by the distance between the transition storage door on one wire and the end of transitions tile along the wire just below.

\begin{figure}[t]
    \centering
    \includegraphics[width=.9\textwidth]{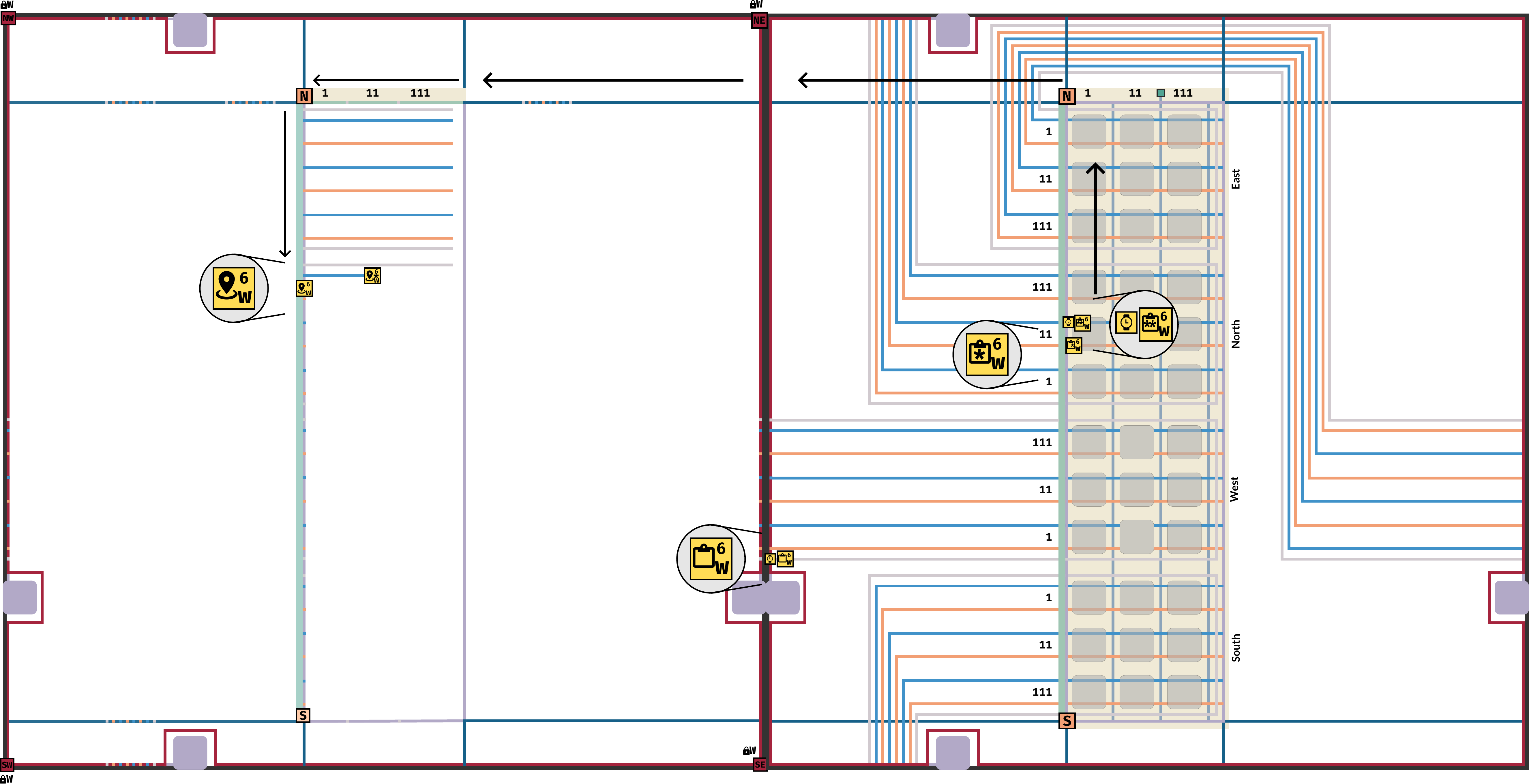}
    \caption{Copying the table row wires.}
    \label{fig:copying_table_rows}
\end{figure}

\paragraph*{Constructing Datacells.}
Next, the placement director will send subordinates down each table input row to construct datacells, reporting back as each is completed until it reaches the end of the table. 
Within the row, when the placement subordinate meets the datacell punchdown door it will traverse through the transition storage door adding doors to the west, wires to the south, and border tiles to the east, until it reaches the end of transitions tile below it. It will add the transition exit door to its east before returning to the top of the datacell, traversing through the affinity crossover to begin constructing the next datacell.  This can be seen at a high level in Figure~\ref{fig:constructing_macrocell_outlines}.

\begin{figure}[t]
    \centering
    \includegraphics[width=.9\textwidth]{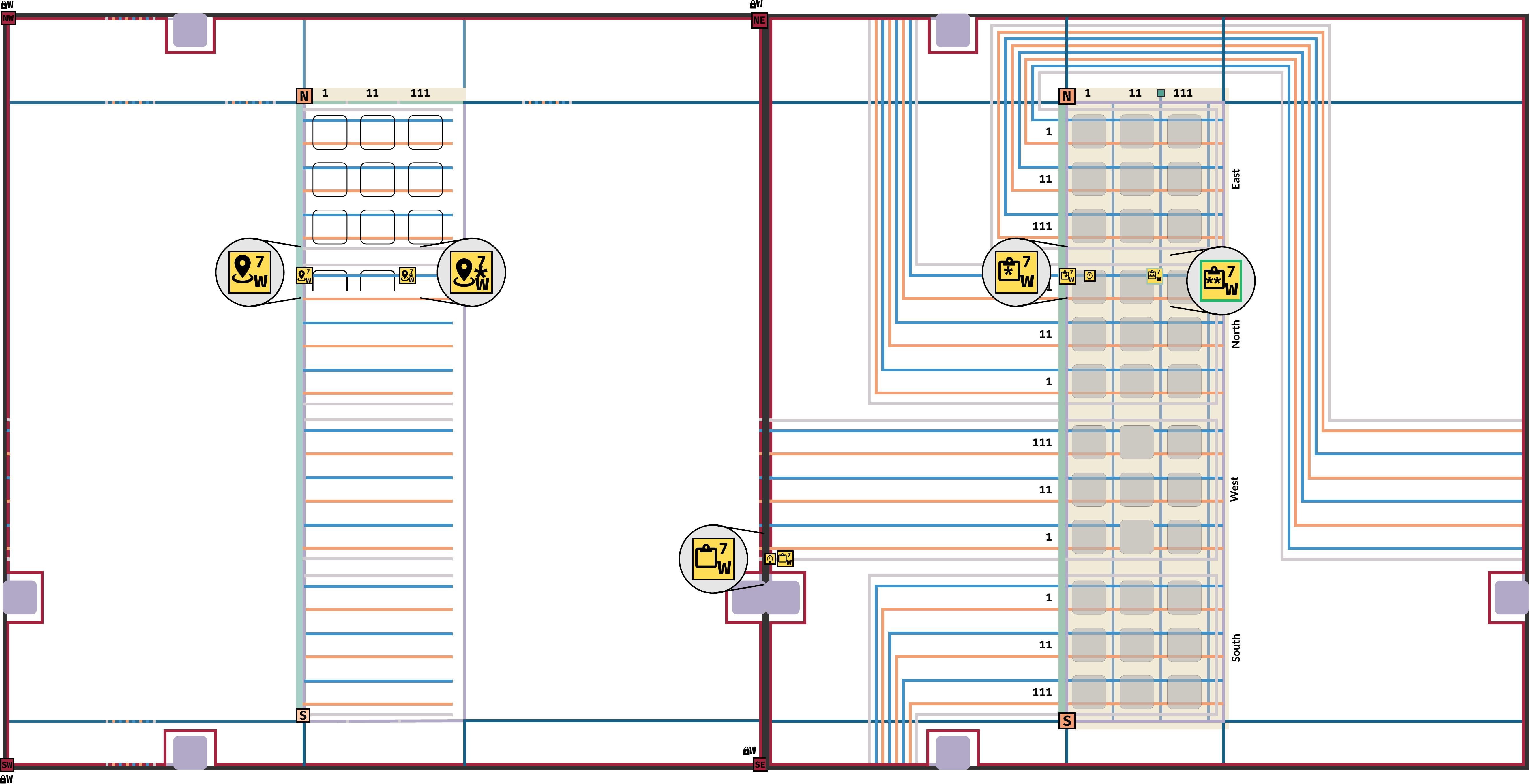}
    \caption{Constructing Datacell outlines.}
    \label{fig:constructing_macrocell_outlines}
\end{figure}

\paragraph*{Copying Transition Rules.} The process of copying datacell transition rules works as follows: Our agent transitions the transition row door into a copy activation state, which will swap with every unary tile, the end of data string state, and border tile state behind to transition each into a copy yourself state. Each will copy themselves onto the wire and then flip backwards until reaching the door and transitioning back into a normal state. The door will stay at the back of the row until the border tile returns, signifying it has copied itself. At this point, the door will traverse to the front of the row and activate the door below it to begin the same process. The door will also copy buffer tiles and includes rows that have no transitions in them and are comprised of filler blocks. 

When this is complete the fill marker is moved to the next datacell. When the row is completed the copy agent will confirm its completion to the copy director indicating that the next row can be started.

On the opposite side the data strings are activated for placement, and due to special placement states this allows them to attach their own blank construction tiles to the east. 

\begin{figure}[t]
    \centering
    \includegraphics[width=.9\textwidth]{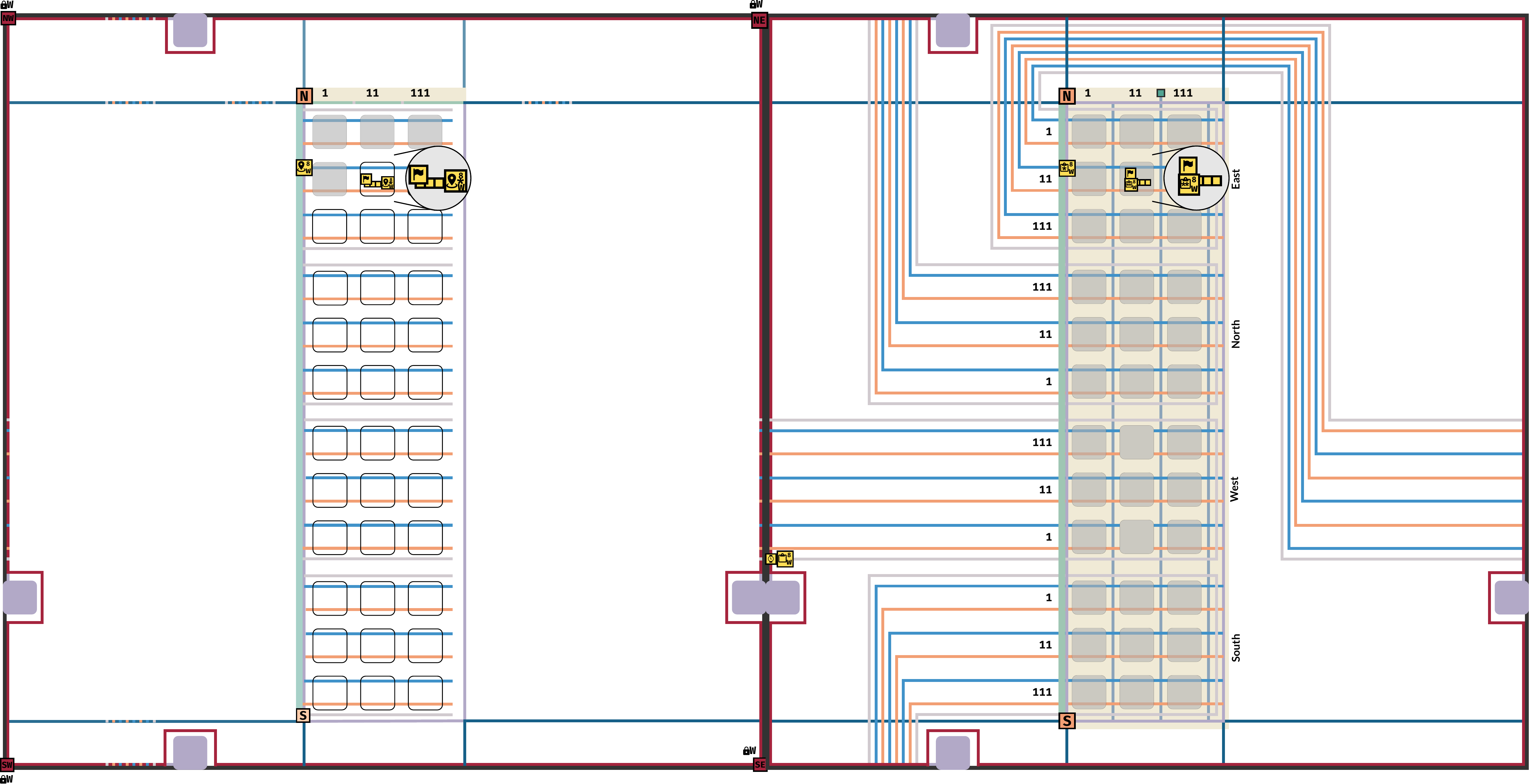}
    \caption{Filling datacells with transition rules.}
    \label{fig:filling_macrocell_transition_rules}
\end{figure}

\paragraph*{Copying Vertical Table Wires.}
As all of the necessary information has already been copied into the table, what is left is to copy the vertical wires, see Figure~\ref{fig:constructing_vertical_table_wires}. The placement director simply sends filling agents down the top of each column that will traverse south and place south wires and protective border tiles to their side as they go. When these agents reach the bottom of the table they reverse walk north until it reaches the top of the table where the placement director will absorb and check it off. As the placement director was at the final vertical wire when it finished sending the agents down it will wait there until the agent returns with confirmation that it has completed filling the southern wires. The placement director will then wait to proceed at each intersection until it has reached the control edge of the table. Once this occurs, the placement director will send a signal to the Copy Checkpoint that the copying of state transmission wires may begin. When a placement subordinate is traversing northward on the state lookup chute wires it will lock the appropriate doors to shut down the construction wires connecting chutes of different directions. 

\begin{figure}[t]
    \centering
    \includegraphics[width=.9\textwidth]{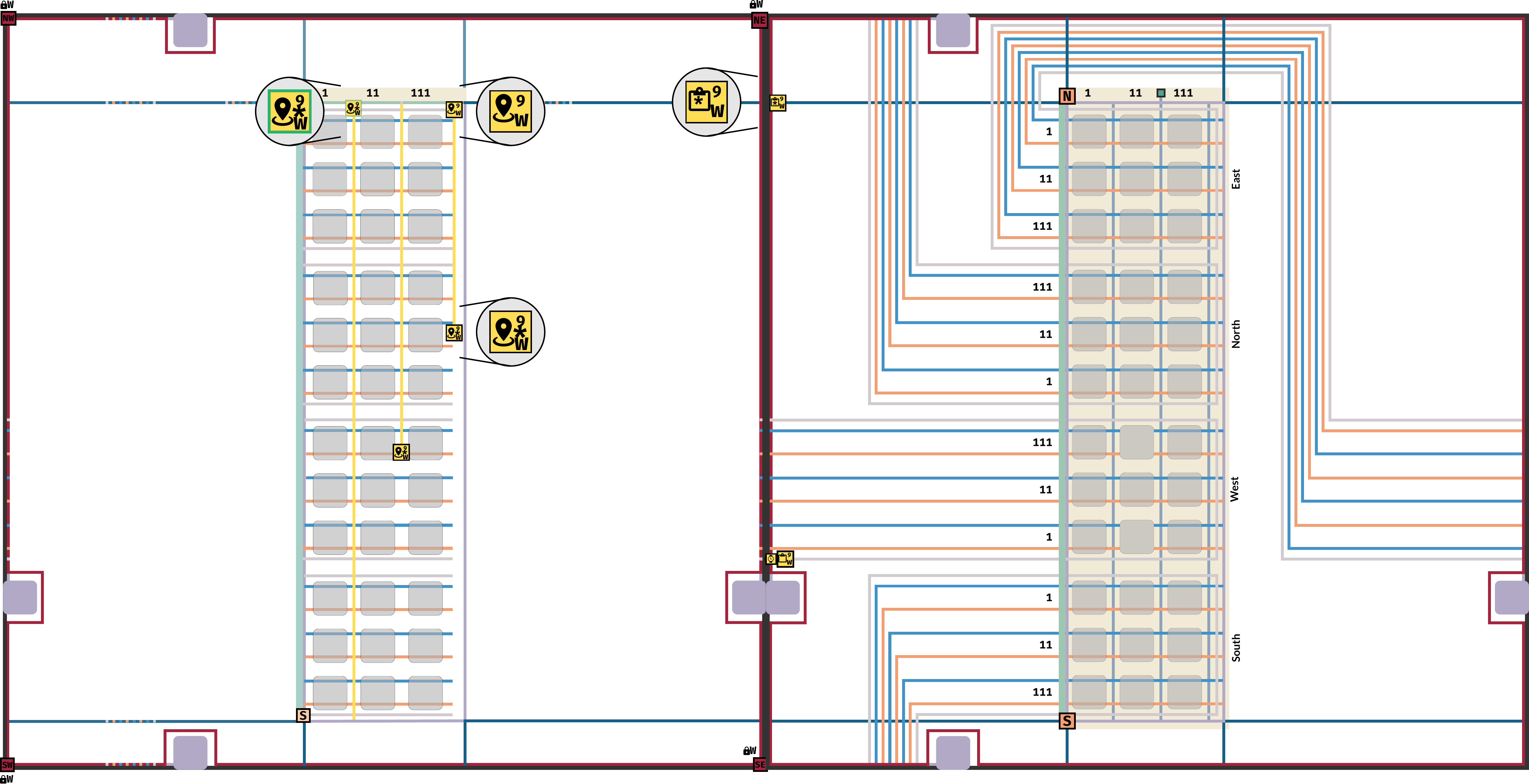}
    \caption{Constructing vertical table wires.}
    \label{fig:constructing_vertical_table_wires}
\end{figure}

\paragraph*{Copying State Transmission Wires.} After the table is complete the state transmission wires are copied, see Figure~\ref{fig:constructing_state_transmission_wires}. After the appropriate wires are set and locked, we begin with the wires entering from the east of the tile.
The Copy Directors for any direction will skip copying any vertical/horizontal construction wire and tile/table edge crossovers. 

\begin{figure}[t]
    \centering
    \includegraphics[width=.9\textwidth]{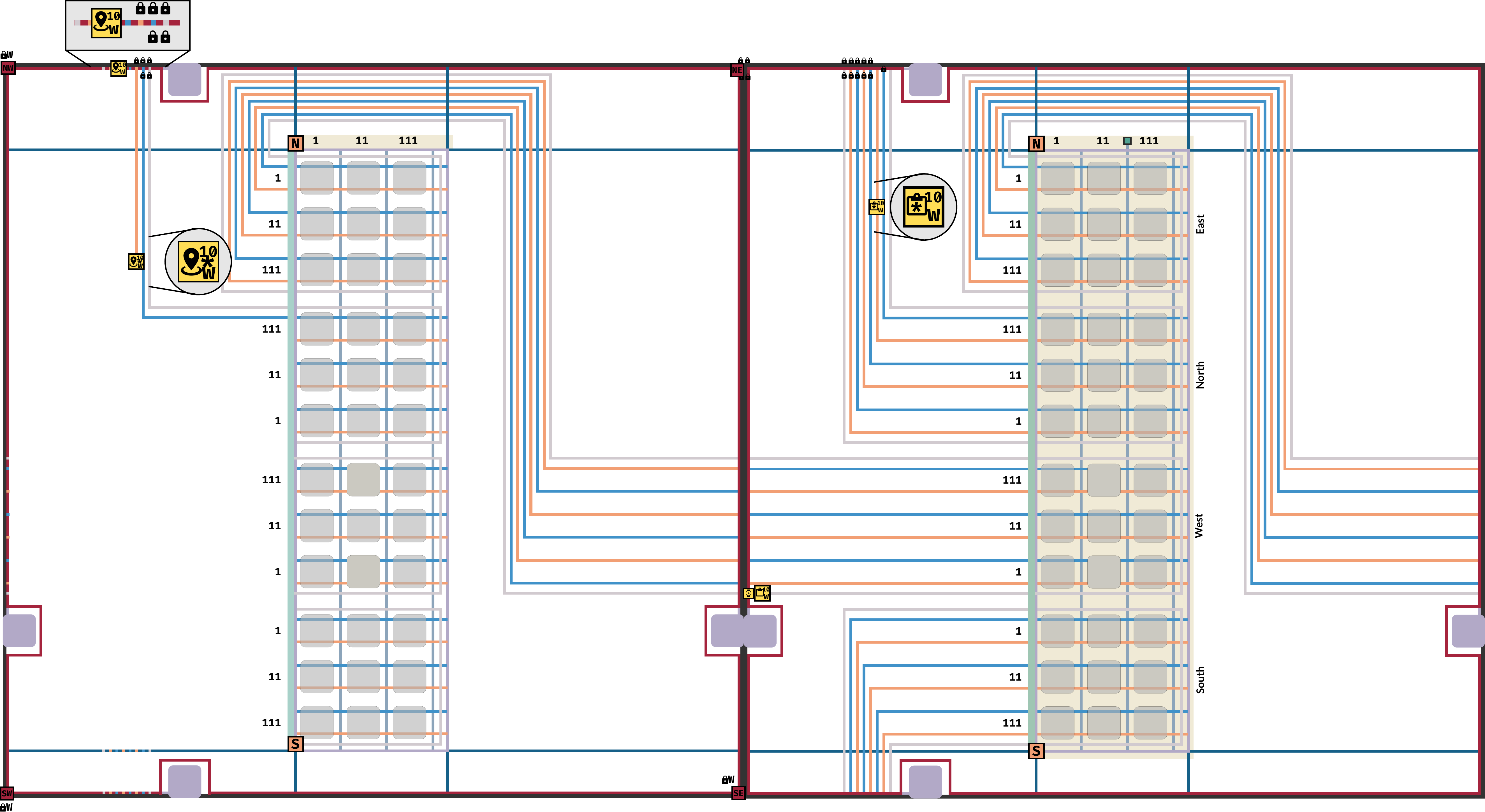}
    \caption{Constructing state transmission wires.}
    \label{fig:constructing_state_transmission_wires}
\end{figure}

\subsection{Activating Tile and Determining State}\label{subsec:activating_tile_and_determining_state}
Once the supertiles construction has finished the process of activating the supertile for use begins. First, the construction wires must be deactivated. 

\paragraph*{Construction Wire Deactivation.} After construction phase 10, phase 11 is started. To begin, phase 11 locking unlocking agents are dispersed throughout both tiles with the copy and placement directors, checking off their respective tiles intersections after each phase 11 locking unlocking agent reports back from the construction wire it was sent down until it returns to where it started. The copy and placement directors will not unlock the outer edges of their respective tiles yet.

\paragraph*{Reporting Construction Completion to Neighboring Construction Tile.} After the tile is confirmed to be complete, construction wires are deactivated, and table is activated, a signal is sent to the neighboring tile that was in charge of construction that it may reactivate most of its non-construction functions and its active state column, see Figure~\ref{fig:locking_construction_wires_reactivation}. 

\begin{figure}[t]
    \centering
    \includegraphics[width=.9\textwidth]{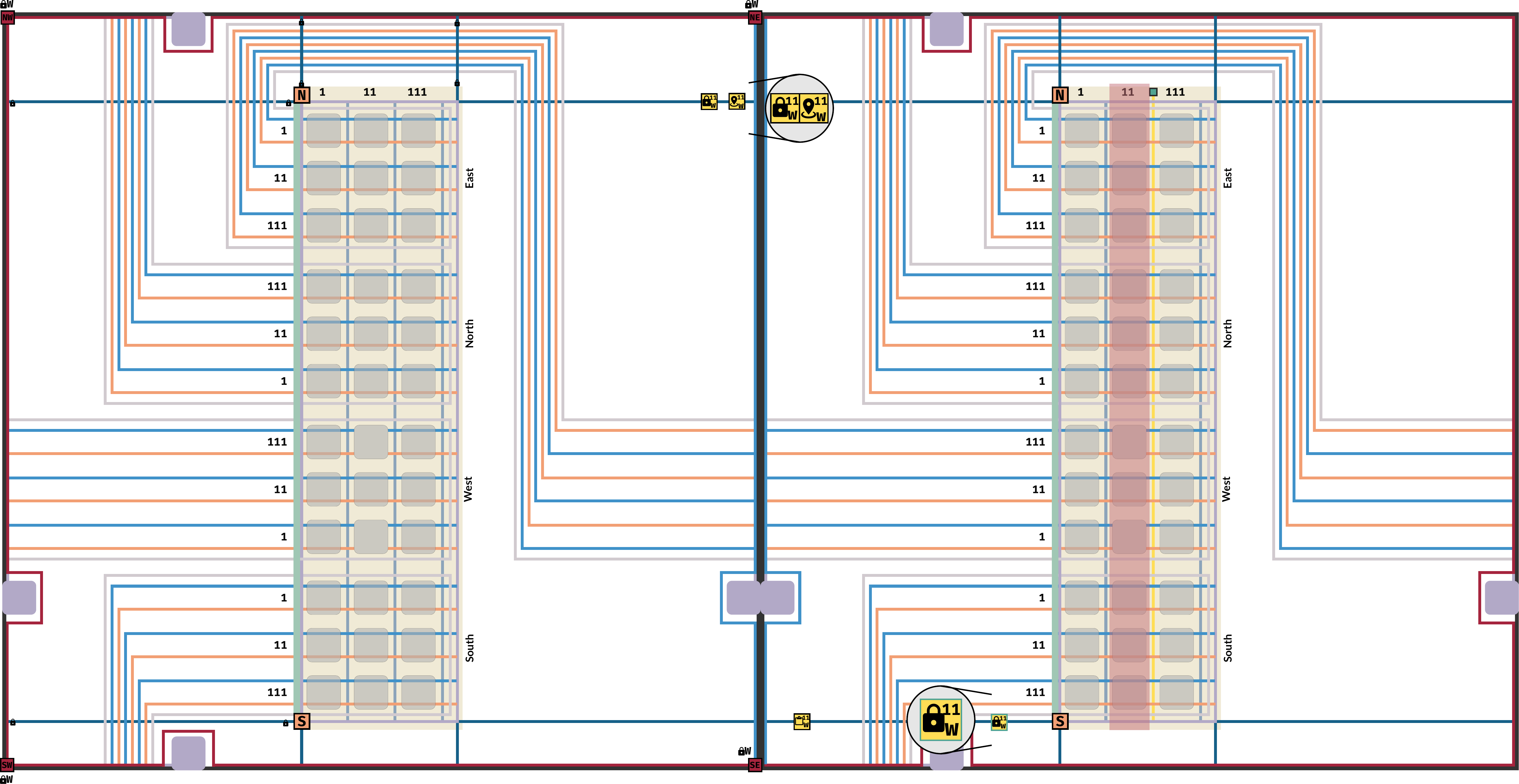}
    \caption{Locking construction wires and reactivating neighboring supertile.}
    \label{fig:locking_construction_wires_reactivation}
\end{figure}

\paragraph*{Activating Table.}
When it is confirmed that the construction wires have been locked, state transmission wires appropriately unlocked, and neighboring wires reactivated, the placement director will move into phase 12 where it activates the table edge and punchdown gadgets. 

\paragraph*{Requesting Tile State From Neighbors.} After the supertiles construction is complete it will send out a Requesting State Agent to its neighbors along its state lookup output wire for each direction. Upon reaching the edge of the supertile it will meet doors that indicate the supertile has no neighbor but it may traverse through them anyway. After the supertiles state has been selected and activated the has ``no neighbor doors" will be transitioned to standard tile edge crossover doors. If a transition request reaches these has no neighbor doors it will be dissolved on contact. Only the newly constructed supertile can transition its has no-neighbor doors and that of the tile next to it.

\paragraph*{Transmitting Neighboring Supertile States.} Once a New State Requesting Agent has entered the lookup chute in the active column in a neighboring supertile it will traverse to each row until it reaches a self-intersection with the active state for that direction. It will report the state of the neighboring tile back to the new tile table. 

\begin{figure}[t]
    \centering
    \includegraphics[width=.8\textwidth]{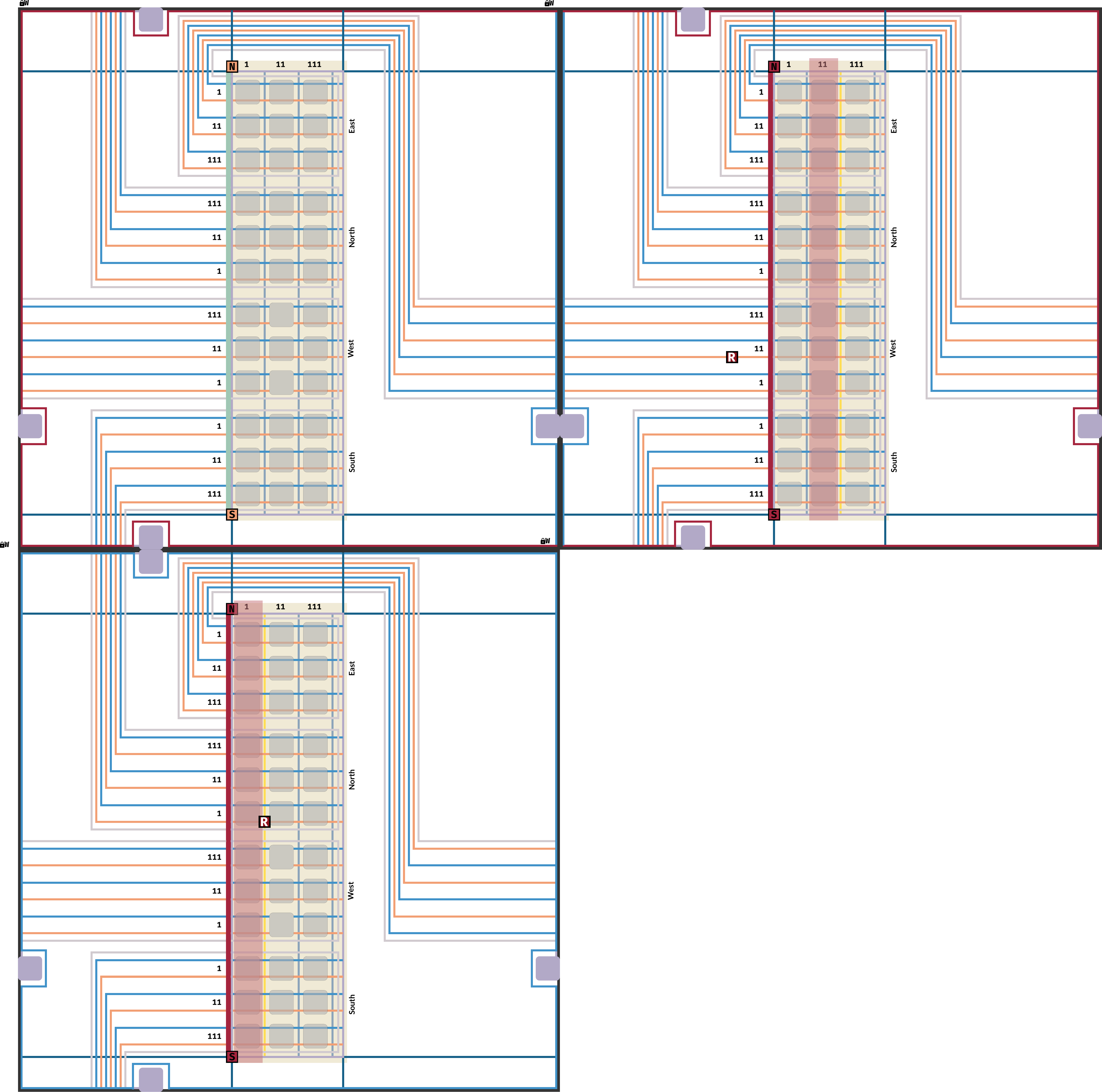}
    \caption{Receiving states from neighboring supertiles.}
    \label{fig:constructing_locking_construction_wires_reactivation}
\end{figure}

\paragraph*{Selecting (and Deselecting) Possible States.} After one of the New State Request - Neighbor Reporting Agents have won the table locking race it will enter the table and traverse each column. At the end of each datacell, in the southern slot of the vertical wire/datacell crossover gadget, there is an affinity (or no affinity) door that the agent may transition nondeterministically with to select the state of the new tile. This agent may traverse the row backward or forward at any time and may even completely exit the table and restart the locking race so that another directions New State Request - Neighbor Reporting Agent may enter the table and potentially select the state, see Figure~\ref{fig:constructing_locking_construction_wires_reactivation}.

\paragraph*{Activating Column and Regular Table.}
Once a state is selected state column activation agents are sent to the north and south of the initial selection door, turning each incoming datacell punchdown door/vertical intersection crossover into an active superstate mode and storing the state in the state storage box at the top of the column, see Figure~\ref{fig:selecting_state_attachment_ov}. 

Only after a state is selected and column confirmed to be activated in both directions does the New State Request - Neighbor Reporting Agent Eraser Door Agent Activator spawn and traverse to the top of the table and over to the left edge where it changes each of the  table control edges inner doors to its normal active state. This causes New State Request - Neighbor Reporting Agents to dissolve upon contact (leaving an omni-directional wire tile behind) and triggers the table outer doors to unlock.

\begin{figure}[t]
    \centering
    \includegraphics[width=.6\textwidth]{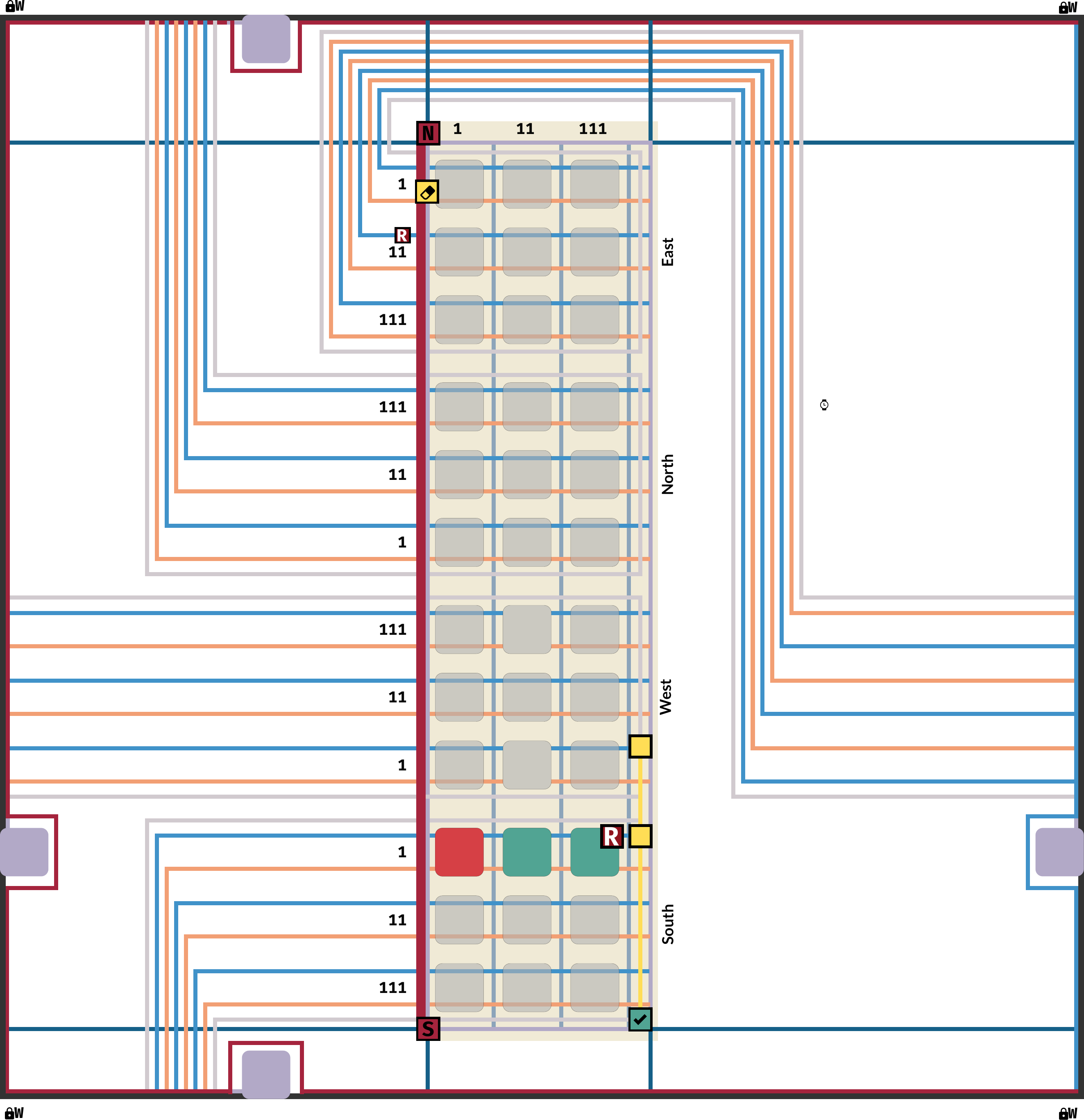}
    \caption{Selecting the state of the supertile.}
    \label{fig:selecting_state_attachment_ov}
\end{figure}

\paragraph*{Unlocking Neighboring Supertile and Testing for Neighbors.} When the New State Request - Neighbor Reporting Agent Eraser Door Agent Activator has reached the table's south marker tile it will send an Unlock Neighbor Outline and Test for Neighbors Agent. This agent will also unclaim corners of the new supertile. The agent will traverse around the edges of both supertiles unlocking them or testing for neighbors if the doors say they have none. 

\begin{figure}[t]
    \centering
    \includegraphics[width=.7\textwidth]{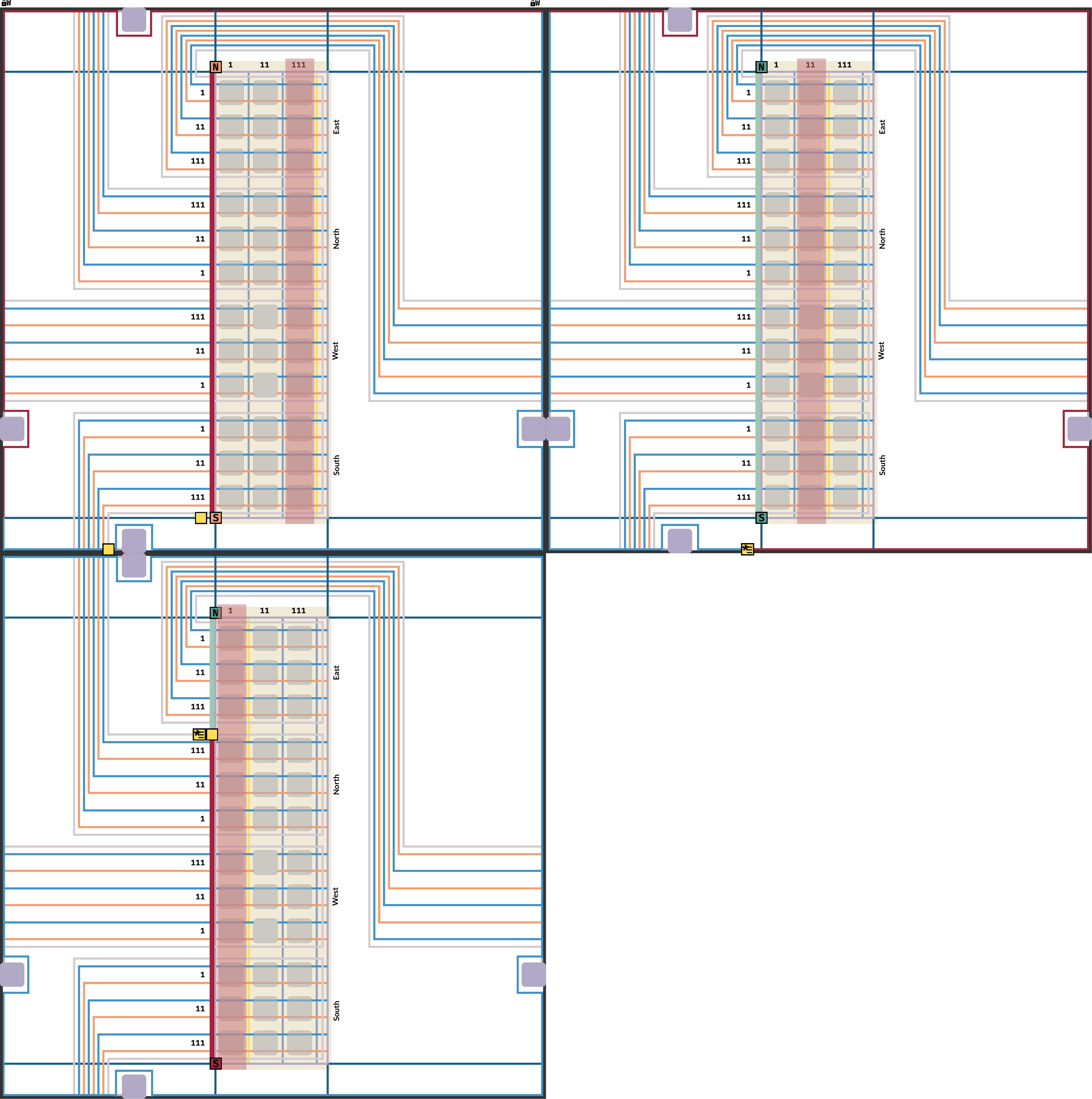}
    \caption{Testing for neighbors and unlocking supertiles.}
    \label{fig:testing_neighbors_attachment_ov}
\end{figure}

\paragraph*{Sending Out New Supertile State.} When the Unlock Neighbor Outline and Test for Neighbors Agent returns through the southern construction wire door (which it may traverse due to its special state) it will change to a State Transmission Trigger Agent that will traverse the active state column and send out State Transmission Agents at each self-intersection. The first State Transmission Agent to reach the inside edge of the table will unlock it, see Figure~\ref{fig:sending_new_state_attachment_ov}.

\begin{figure}[t]
    \centering
    \includegraphics[width=.8\textwidth]{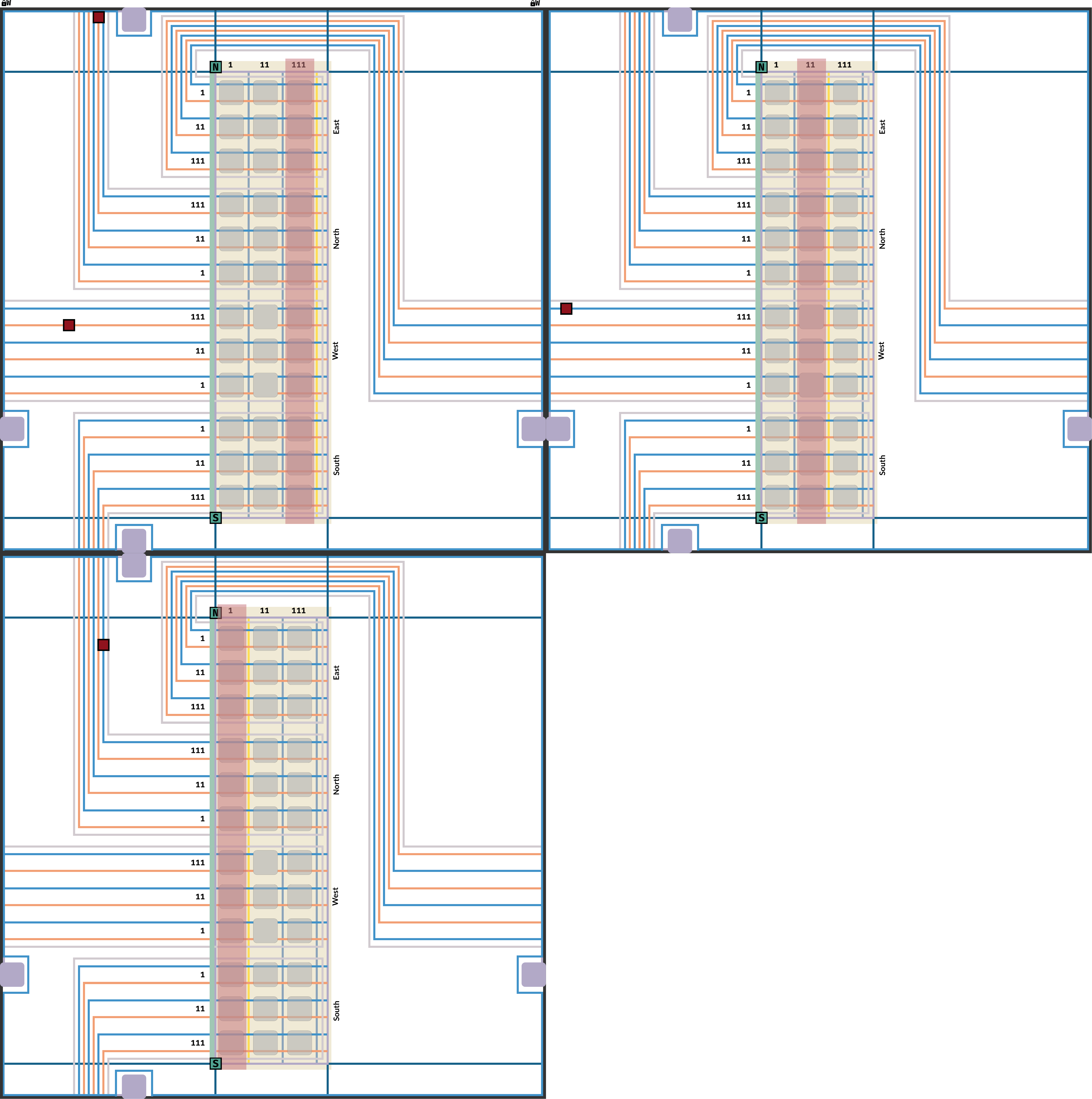}
    \caption{Sending out new state to neighbors.}
    \label{fig:sending_new_state_attachment_ov}
\end{figure}
\section{Transitioning Tiles}\label{sec:transitioning_tiles}

The transition process starts once a state notification agent from a neighboring supertile reaches another supertiles table. The table is locked by standard procedure, and if the lock is successful, the agent is admitted into the table.

\subsection{Finding Intersection}\label{subsec:finding_intersection}
The wire this agent is entering on implicitly encodes the state of the neighboring supertile.
To determine whether there is a transition, it needs to find the active state column.
The active state column has special door states, so when the agent reaches the active state door, it is no longer able to traverse further into the table.
Reaching this door ensures that the agent will have the chance to transition with the transition storage door below, which is only unlocked (if a transition exists) in the active state column.
If there are no transitions then the transition storage area door will be in a no transitions available state, see Figure~\ref{fig:transition_p1}.

If there are transitions and the state notification agent does not choose to back out of the transition process and table altogether, then the process to prepare both supertiles for transitioning begins. 

\begin{figure}[t]
    \centering
    \includegraphics[width=.7\textwidth]{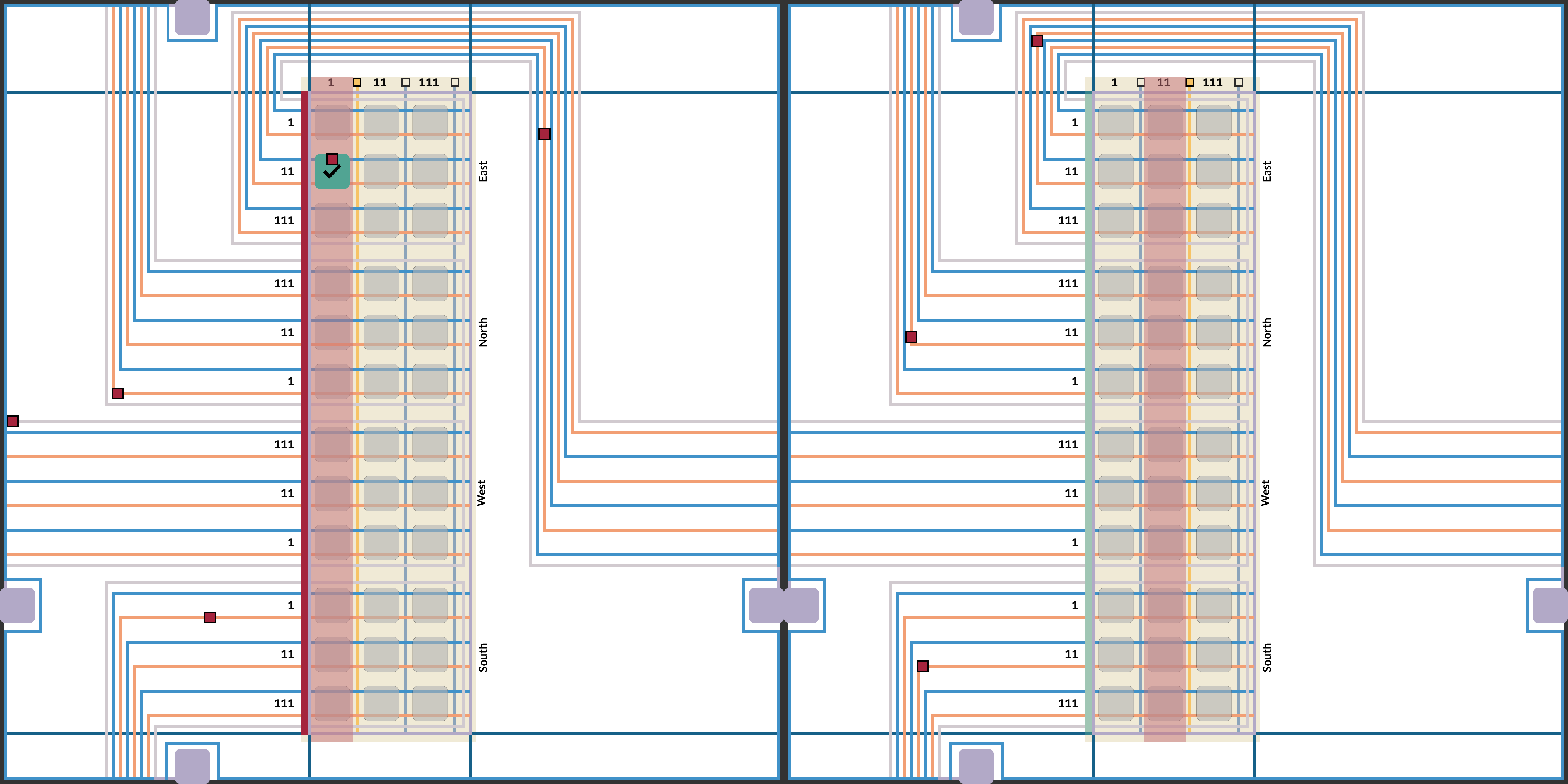}
    \caption{An agent discovers the existence of a transition with its neighbor.}
    \label{fig:transition_p1}
\end{figure}

\subsection{Transmitting Intention to Transition}\label{subsec:transmitting_intention_to_transition}
First, it must be confirmed that the neighboring supertile that sent the state notification agent is still in the state it was when it sent the agent. The agent confirming the state locks the tile into its current state. 

This process starts with the State Notification Agent transitioning into a Transition Preparation Agent—Confirm Neighbor when swapping with the unlocked transition storage area door in the active column. During the swap, the transition storage area door has ``awaiting confirmation'' appended to its state.

The Transition Preparation Agent—Confirm Neighbor traverses down the storage area wire and out the transition storage exit door. As it cannot swap with the south door of the datacell's bottom crossover door, thus we can ensure it exits out of the output wire in the same state and in the same direction it came from.

The Transition Preparation Agent—Confirm Neighbor will leave the table locked as it exits. If we find the neighboring table locked, the transition is rejected, and an agent is sent to reset the transition storage area door and unlock the table afterward. 

\paragraph*{Rejection of Neighbor Tile State}
Once the Transition Preparation Agent has locked the neighboring supertiles table and traversed the columns to the active state, it will enter through the transition storage door as within the other datacell. As the Transition Preparation Agent - Confirm Neighbor became Transition Preparation Agent - Confirm Self-Intersection when swap-transitioning with the current datacells transition storage door, it will then traverse to the bottom of the transition storage area where the self-intersection marker tile sits. 
If the self-intersection marker tile is instead a not self-intersection tile, the Transition Preparation Agent will reject the transition, traversing back to the transition storage entrance door and deselecting it, then walking out of the datacell and unlocking the table upon exit. Once it reaches the initial supertile, it will enter the table (still unlocked at that particular door) and remove the ``awaiting confirmation'' designation from the active states transition storage area entrance door, finally locking the table on its way out. 

\paragraph*{Confirmation Neighbor Tile State}
However, if the lookup is confirmed to be at the ``self-intersection,'' we know the neighbor's state has not changed. Thus we instead send a Transition Preparation Agent - Neighbor Confirmed back to the originating supertile. In addition, we send a wire setting agent to open the way from our datacell down to the state lookup chute exit wire for the respective direction and an agent to follow that to trigger the transition selection gadget, see Figure~\ref{fig:transition_p2}.

\begin{figure}[t]
    \centering
    \includegraphics[width=.7\textwidth]{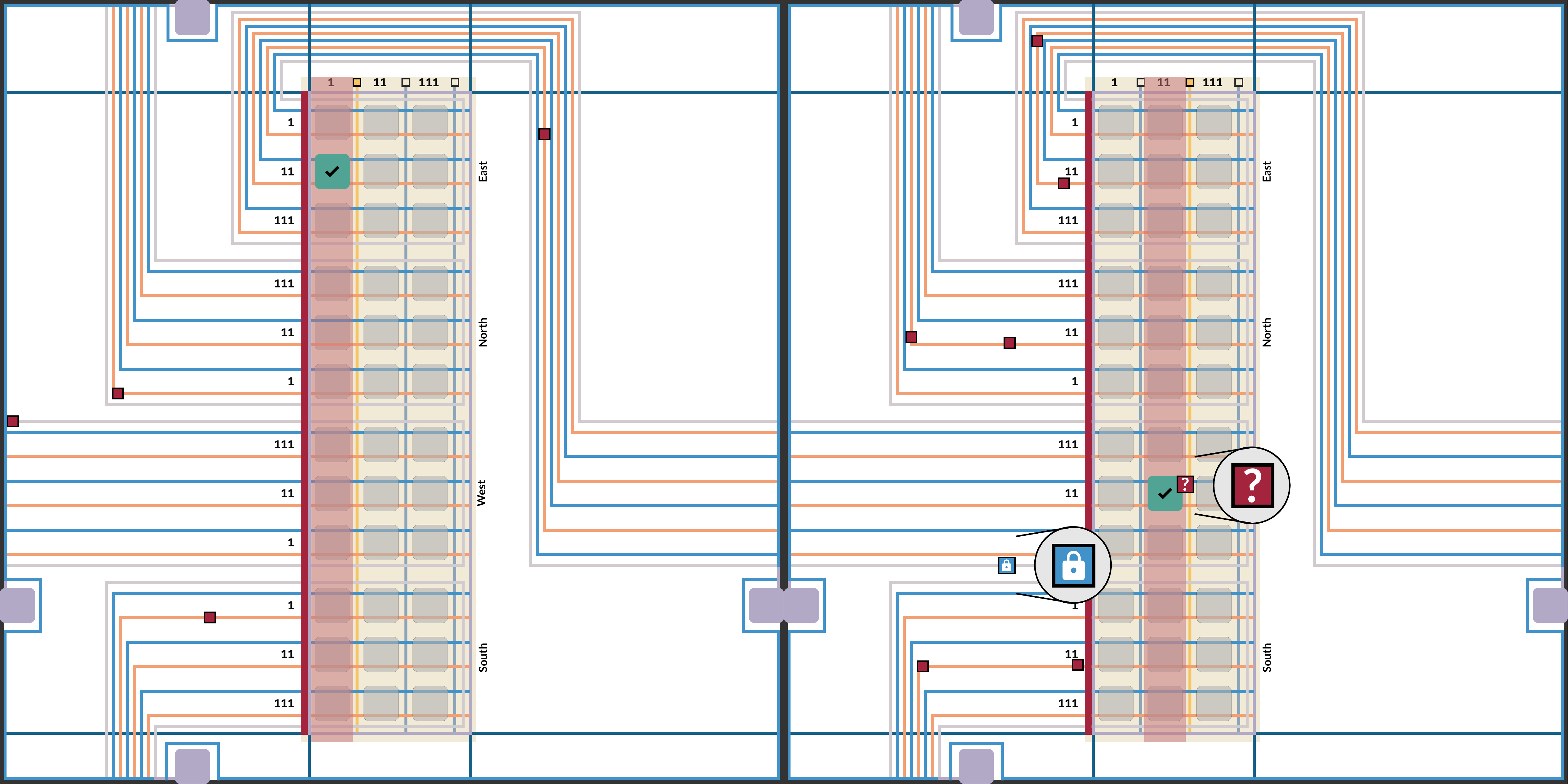}
    \caption{The agent checks whether the neighboring supertile is still in the same state and locks the neighbor's table.}
    \label{fig:transition_p2}
\end{figure}

\paragraph*{Copying Transition Rules}
Each supertile begins copying its transitions largely using the same method as during attachment. First, the door is activated, which activates the unary (or filler) tile behind it, which then transition-swaps with the door into a spawn copy state, which will spawn into the wire outside the door. Once the number has successfully copied onto the wire it will flip through its row until it hits the end of data row tile where it will return to its normal inactive state. Once the door has cycled through all of the data string tiles, it will shift to a finish buffer cycling state and stop opening, simply transitioning with buffer tiles so that they flip behind the data string as before until the beginning of the data string reaches the door again. Once this occurs, the current door will trigger the door below it to begin its copying process. This is continued until all data strings have been copied into the wire and are heading to the transition selection gadget, as in Figure~\ref{fig:transition_p3}. 

\begin{figure}[t]
    \centering
    \includegraphics[width=.7\textwidth]{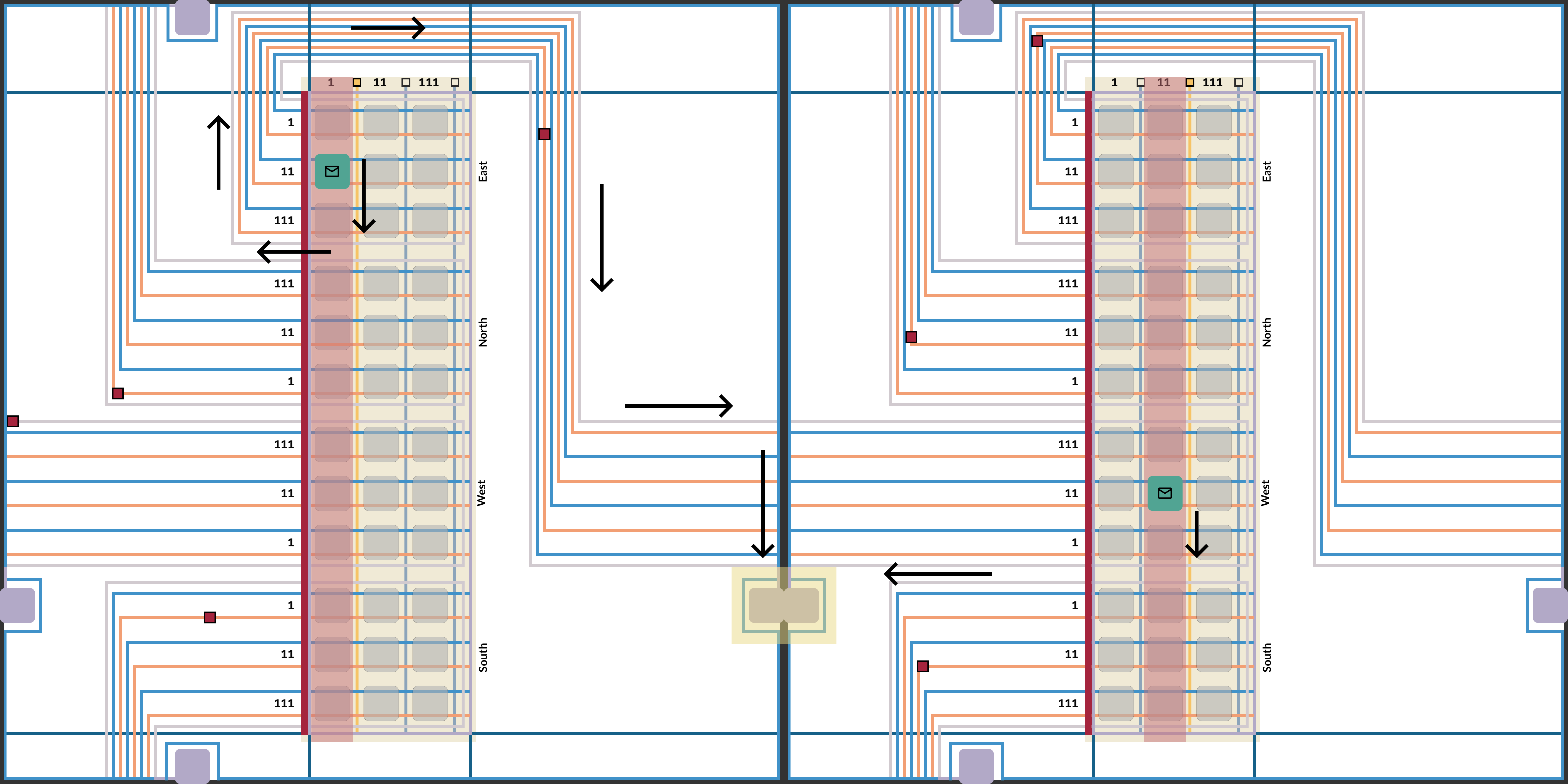}
    \caption{The transitions are sent to the transition selection gadget.}
    \label{fig:transition_p3}
\end{figure}

\paragraph*{Filling the Transition Selection Gadget.}
For supertiles transitioning with a neighbor to their east or north, a wire setting/locking agent must be sent by the Transition Director upon reaching the edge of the tile, as the transition selection gadget is on the opposite side of their lookup chute exit. 

Once the Transition Director reaches the transition selection gadget, the entrance door and the first row will be activated to begin overwriting the blank filler tiles within the gadget rows. The fill door will allow one data string to go through before it needs to be unlocked by the Transition Director again. After each row is filled, the wire tile next to the door below it is unfrozen. As we are filling top to bottom, this ensures data strings do not fill improperly to doors below. 

\begin{figure}
    \centering
    \includegraphics[width=.45\textwidth]{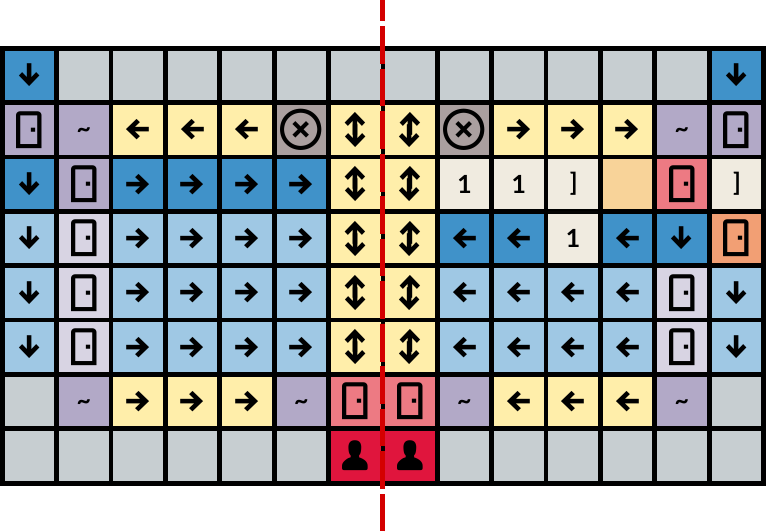}\hfil
    \includegraphics[width=.45\textwidth]{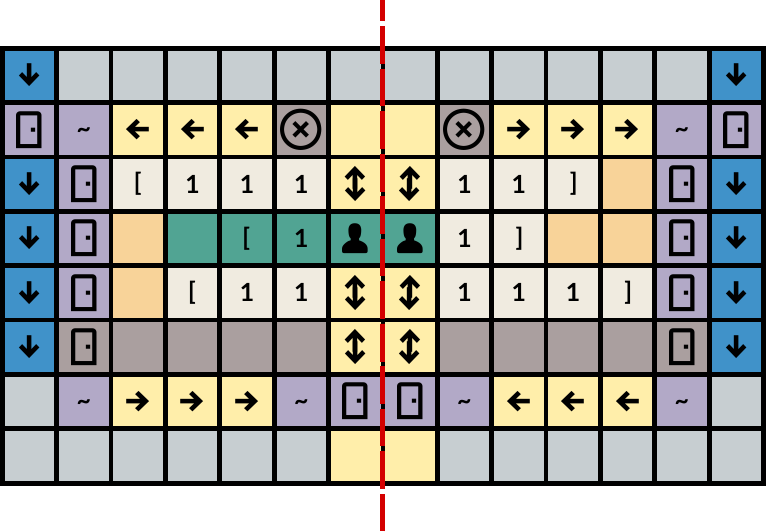}
    \caption{Left: The transition selection gadget is filled row by row. Right: Transition Selection gadget selects a transition to take.}\label{fig:transition_selection}
    
\end{figure}

\paragraph*{Selecting the Transition.}
Once all of the data strings are properly in their rows the transition director will activate the selection agent for its half of the transition selection. The agent will traverse on the transition border wire parallel to (and touching) the other supertiles transition border wire. The agents randomly walk up and down the transition selection wire.

Once both have been activated, it is possible for them to align next to one another at one of the transition rules and have the chance to select, see Figure~\ref{fig:transition_selection}. When a transition is selected, the two agents must double transition, instantly choosing the new state of their respective supertiles simultaneously. When determining the supertile of a state, the highest priority is if a row/column has been selected in the transition selection gadget, and then if there is none, the active state column of the supertile. 

The agents may also choose an abort transition row/column double transition option at any time. 

Once a transition is selected, the data string is copied out of the appropriate row/column it was being held in and led by a transition director to the top state lookup chute wire. The transition director has special override authority at the table border to open this top wire and allow the data string to traverse through the table.

\subsection{Transitioning States}\label{subsec:transitioning_states}
Each supertile is in charge of completing its respective half of the transition rule. 

\paragraph*{Punch Down Mechanism and Data String Traversal.}
The data string will traverse the input wire however many columns are specified by the data string. The punchdown tile will transition with a 1 tile turning the 1 tile into an east wire tile, thus decrementing the data string, and the punchdown tile will tell the door handle to unlock the door. Once the door swaps and transitions with the end-of-data string tile, it will reset fully with its door handle, return to its locked state, and tell the punchdown gadget to reactivate through its handle.  

\paragraph*{Deselecting the Previous State.}
The data string ends in an end-of-data string tile that, when it is punched down, updates that column as the new state. Before this can occur, the old column must be deactivated. Thus 2 agents are sent, one to the east and one to the west to search for and deselect the old active column. 

When one reaches the old active column, it will spawn agents to traverse north and south, deactivating active state doors. When they reach the bottom of the table, they report back to the Deselection Agent, who reports back to the state update pending door, waiting for confirmation of deselection (and no column from the other direction).

\paragraph*{Activating New State Column.} The activation of the column looks much the same as the deselection of the state column; see Figure~\ref{fig:selecting_deselecting_cols}. The state update pending agent (waiting next to the affinity door) sends agents north and south to activate each column door as the active state and fill the state storage tile at the top with the new active state.

If the column selected is the current active state column, then only the wipe transition selection gadget agent is sent. 

\begin{figure}[t]
    \centering
    \includegraphics[width=.7\textwidth]{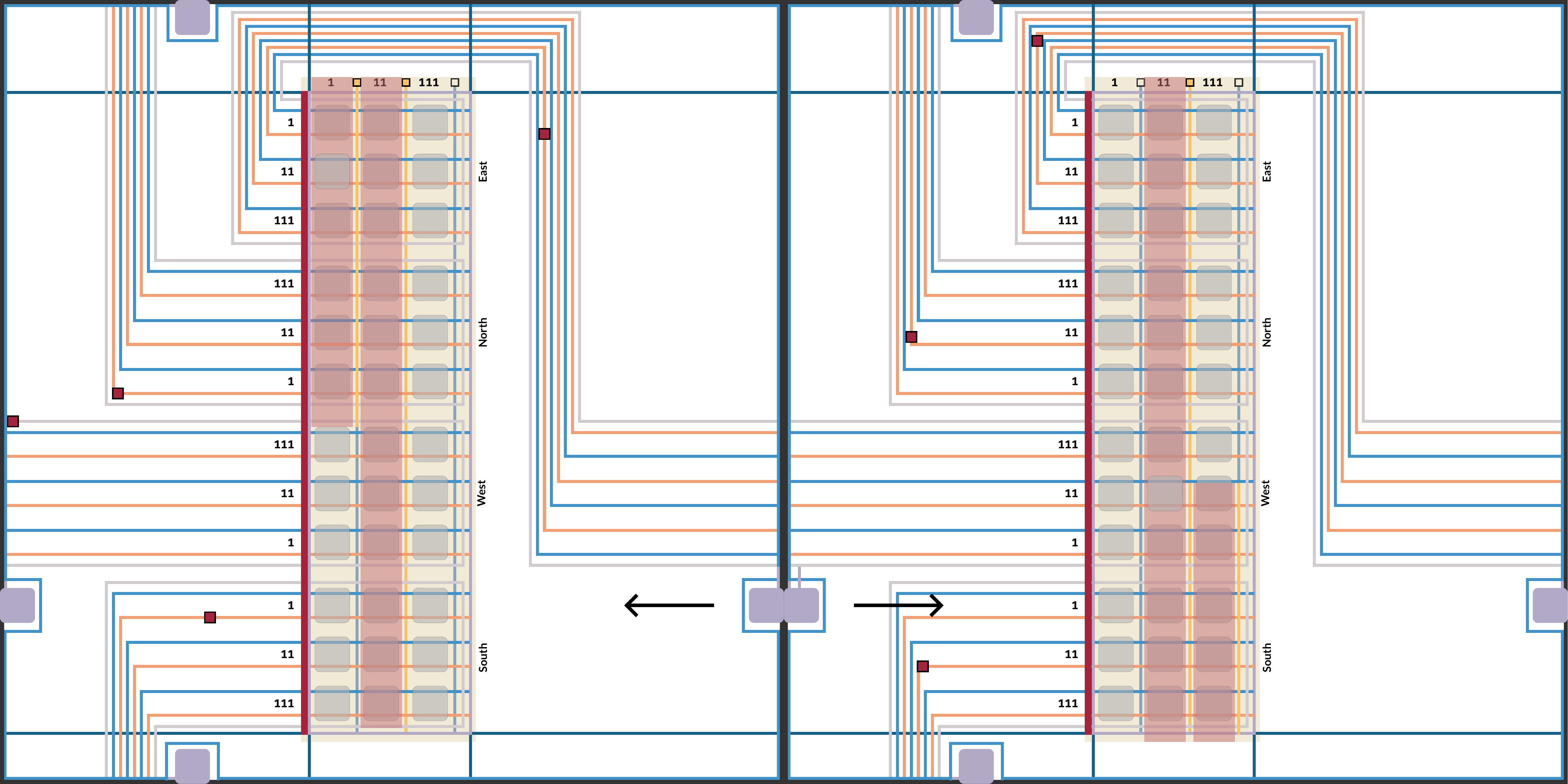}
    \caption{\label{fig:selecting_deselecting_cols}The supertiles independently transition by first deselecting the old column and then selecting the new one.}
\end{figure}

\paragraph*{Wiping Transition Selection Gadget.}
Once this has been reported to be complete, the door will spawn a wipe transition gadget agent, which will traverse through the new active state column to the bottom of the state lookup chute wire for its direction and travel to the transition selection gadget. 
Upon arrival, the agent will enter the transition selection gadget and trigger the rows/columns to cycle wipe each of the tiles in its slot. This works much like copying for the filler tile cycling; except instead, it is erasing values. After the wiping agent has checked off all slots to be complete, a recapture of the agent will be triggered. When recapturing the agent has been confirmed it can then exit the transition selection gadget, resetting the gadget door and any wires set/doors locked by the transition process at the edge of the tile. 

\paragraph*{Sending Out New State.}
Once the wiping agent has returned to the active state column, the column is triggered to send out the new active state at each self-intersection. The table is unlocked by the first state transmission agent that reaches it.

\section{Metrics}\label{sec:metrics}

As the vast majority of states are dedicated to the operation and copying of crossover doors we cover this aspect first. Next is agents and finally other gadgets states. 

The number of states was calculated as follows: 

Nearly all crossover gadgets have four doors, each of these four doors may have been set to one of two directions (standard and reversed) each one of these doors has the following states: active, waiting, open, reset, pushback, and locked. Thus, there are 7 states for standard operation over 8 possible doors making 56 per crossover gadget type as standard. All doors use the same handle set adding only 12 states to the total calculation. In total 42 crossover gadgets are necessary coming to 2352 states for standard operation. 

Additionally, during the copying process the majority of crossover gadgets have a copy yourself state for each door (though the doors enter a state reset when complete unlike other copy processes) and a copy agent for 9 additional states per type. For placement we transition with the first blank tile it swaps with to a wire tile, then we overwrite the leftover second placement tile, and lastly each of the four doors are put into their respective blank tiles, adding 5 additional states (other agents will set them to standard or reverse as necessary). As such the copying process of each crossover door requires 15 states and thus 630 for the entire construction.

There are 16 non-crossover doors in the system that each have the above standard operations (and half have independent handles) adding to 120 states.
Crossover doors have special states to indicate that there is no neighboring door, adding 96 states to the incoming and outgoing state transmission crossovers for each direction.
Corner crossover doors can be claimed by a neighboring supertile, adding 64 states.

\subsection{Agents}
Agents locking and unlocking the table requires 30 agent states and 5 additional door states. 
Initiating transitioning, copying the data strings, and ensuring they reach the appropriate locations requires an additional 31 agent states, and 24 door states. 
Selecting the new state within the gadget, copying out the data string for the new state, and wiping or aborting the transition requires 34 states.

\subsection{Copying States}
\subparagraph{General Copy States and Agents.} 
Each direction of placement tiles requires an inactive, active, complete, and special crossover double state, adding to 16 states. 

\subparagraph{Copying Crossovers.} Each of the crossover doors (regular and reversed) must have a state indicating they should copy themselves, a state indicating to spawn the same agent, an agent which must traverse 2 steps to the center of the crossover gadget then check off that the doors (not reversed) have been attached at each side. As this requires 15 states and there are 42 crossovers, this adds 630 states.  

\subparagraph{Locking Agents.} Each step that requires a locking agent needs a spawn/waiting state for the copy director and the locking agent itself needs an active, lock door 1, lock door 2, exit crossover, and locking complete state. Nearly every locking agent also needs the copy director and an unlocking agent with the same states for a total of 14 states. There are a total of 10 phases that require locking agents, but the state transmission wire construction needs these for each side. In addition, there are 30 other miscellaneous states that are used across various phases. This brings the total of these to 212 states associated with copy locking. 

\subparagraph{Placement Director.} The placement director has an awaiting direction tile, an overwrite completed direction tile, a waiting state tile, an overwrite state tile, a waiting crossover agent completion, overwrite crossover agent, lock door 1, lock door 2, exit crossover, and complete states, making 10 states over 4 construction directions for 40 states. This standard version applies to 4 phases, but cycling is done 11 times due to subphases for a total of 462 states.

Additionally, the copy director and/or placement director will spawn placement directors or subordinate placement directors and wait for their completion 22 times.
Doing this for 4 directions for 44 states per directions makes for 176 states added.

\subparagraph{Aborting Process.} The abort construction process (not including reactivation) takes 10 states to overwrite, wipe, and inform the copy checkpoint/director for each direction, adding a total of 40 states.

\subparagraph{Traversing Opposite.} In 13 phases and subphases the copy director must traverse to the opposite side of the tile. Adding 52 states.

\subparagraph{Datacell Outlines.} The subordinate prime placement director must be spawned, place a wire to the south, door to the west and a border tile to the east before moving on, when it runs into a no state tile to its south it will instead place an exit door and mark itself complete. As this doesn't depend on the direction, it only adds 6 states. 

\subparagraph{Filling Datacell.} Copying each transition rule requires the copy director to activate each tile for copying, flipping through them without sending direction tiles at this phase; they will mark themselves complete in addition to the copy director, the placement director and tiles do in this in reverse on the opposite side. With the necessary checkpoints included this adds 22 states. 

\subparagraph{Vertical Table Wires.} In addition to the check off states (counted above) the south traversal agents need to skip crossovers and the final one needs to lock on the way up adding 6 states. 

\subparagraph{State Transmission Wires.} As each copy and placement director needs to check off first and last for each direction and crossovers need to be skipped there are 32 states added. 

\subparagraph{Reactivating Neighboring Supertile.} The agent must delete the checkpoint, traverse to the top of the table to spawn a generic sub-agent that doesn't depend on construction direction, unlock the table, check for where the active state column is, spawn an activation agent, and let the newly finished tile know this process is complete adding 12 new states. 

\subparagraph{Activating New Table.} In the new supertile the table doors must be moved into special door states added to the east and west of each table edge state transmission wire crossover. This adds 14 new states. 

\subparagraph{Requesting, Receiving, and Selecting States.} Requesting and receiving states requires and agent to send them from each direction in the active state column, the state requesting agents themselves, and special state transmission agents. Selecting the state requires abort and select, if the state is not a full state then an additional special agent is required, adding 9 states.

\subparagraph{Activation, Unlocking, and Transmitting.} The activation of the new column, doing a special unlock of the table the self and neighboring tiles outlines and transmission of the new state adds 11 new states.

\subsection{Final Count}
There are an additional 40 miscellaneous states used in the construction bringing the total number of states to 4600, including 2600 non-copy states for our final ACA state count.

\section{Correctness of Construction}\label{sec:seeded_results}
Here we give proofs of correctness. We first (re)state our main lemma. 

\IUone*

We prove this via the following lemmas which each satisfy a condition of simulation. We start with a helper Lemma.

\begin{lemma}\label{lem:onestep}
    For any assemblies $A \in \prodasm{\Gamma}$ and $A_U \in \prodasm{\Gamma_U}$ such that $A = R^*(A_U)$, any assembly $B_U$ such that $A_U \to_1 B_U$ satisfies either $R^*(A_U) = R^*(B_U)$ or $R^*(A_U) \to_1^\Gamma R^*(B_U)$.
\end{lemma}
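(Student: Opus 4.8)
The plan is to prove this by a case analysis over the single production step $A_U \to_1 B_U$, showing that $R^*$ is invariant under every step of $\Gamma_U$ except two designated \emph{commitment} steps, and that each of those induces exactly one legal step of $\Gamma$. The argument hinges on pinning down what information $R$ actually reads: a supertile represents the state recorded in the state-storage box at the top of its \emph{active state column}, unless a row/column of the \emph{transition selection gadget} has been selected, in which case that selection takes priority; a supertile that is still under construction, or built but not yet state-selected, maps to $\phi$. Since this is the only data $R$ consumes, any step that leaves both the active-column storage and the gadget selection of every macroblock unchanged leaves $R^*$ unchanged, giving $R^*(A_U)=R^*(B_U)$.

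First I would dispatch the attachment steps of $\Gamma_U$. Every such step attaches a blank construction tile (later transitioned into its role), and by construction this tile belongs to a supertile that is still under construction and hence already maps to $\phi$; it fills no state-storage box and selects no gadget row. Thus $R^*(A_U)=R^*(B_U)$ for all attachment steps, so only transition steps of $\Gamma_U$ can alter the mapping. I would then partition the transition steps. The overwhelming majority are \emph{internal}: agents traversing wires, doors opening/resetting (Figure~\ref{fig:door_in_action}), punchdowns, table and wire copying during attachment (Section~\ref{sec:attachment}), table locking/unlocking, and the filling and wiping of the transition selection gadget. None touches a state-storage box or toggles a gadget selection, so each fixes $R^*$.

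This exhaustive sweep over the gadget operations is the main obstacle: it requires checking, gadget by gadget, that the bookkeeping inside a transition---in particular deselecting the old active column and activating the new one (Figure~\ref{fig:selecting_deselecting_cols})---is invisible to $R^*$. It is, precisely because of the priority clause: throughout the window from rule selection to gadget wipe, $R$ reads the gadget's selected row/column rather than the active column, the entire column rewrite happens inside this window, and the gadget is wiped only after the active column already stores the new state, so the wipe is a no-op for $R^*$. The same priority discipline is what must be verified for the state-selection phase of attachment, ensuring no spurious intermediate change of mapping.

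Finally I would handle the two commitments. (a) During attachment, the state-requesting agent nondeterministically transitions with an affinity door to choose the new supertile's state (Figure~\ref{fig:selecting_state_attachment_ov}); this single step carries the macroblock from $\phi$ to a concrete $\sigma$. Because a door is selectable exactly when it records a $\Gamma$-affinity between the candidate state and a present neighbor and the candidate is eligible to attach in $\Gamma$, at temperature $1$ a single such affinity suffices, so $\sigma$ is attachable to $R^*(A_U)$ and $R^*(A_U)\to_1^\Gamma R^*(B_U)$ is a valid attachment. (b) During transitioning, the two selection agents align at a stored rule and perform the single double transition (Figure~\ref{fig:transition_selection}); by the priority clause this simultaneously rewrites the represented states of both adjacent supertiles. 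Since the gadget was filled only with rules copied verbatim from the datacells, the induced change is the application of a rule of $\Delta$, so $R^*(A_U)\to_1^\Gamma R^*(B_U)$ is a single transition. In every case either $R^*(A_U)=R^*(B_U)$ or $R^*(A_U)\to_1^\Gamma R^*(B_U)$, as claimed.
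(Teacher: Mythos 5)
Your proposal is correct and follows essentially the same route as the paper's own proof: classify every step of $\Gamma_U$ as mapping-invariant (attachments of blank construction tiles and all internal agent/gadget bookkeeping, which never alter the active column or a gadget selection) versus the two commitment steps (the affinity-door selection during attachment and the double transition in the selection gadget), each of which realizes exactly one legal step of $\Gamma$. Your write-up is in fact more explicit than the paper's terse argument---notably in invoking the priority clause to show the column rewrite inside a transition window is invisible to $R^*$---but the decomposition and key ideas are the same.
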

\begin{proof}
    An attachment can never change a mapping because if a supertile is incomplete it maps to the empty state. Once the datacell has been built it sends a signal to it's neighbors. Its neighbors will respond by sending an agent which walks into the table. If it reaches an intersection in the table where there is an affinity rule it immediately changes the mapping to the new state simulating an attachment. The next available transitions mark the remaining tiles in the active state column. 

    Until a superstate transition is selected none of the changes that can be made in the supertile change the mapping since they do not change the active state column. 
\end{proof}

\paragraph*{Equivalent Production.}

\begin{lemma}\label{lem:1a}
    For any assembly $A_U \in \prodasm{\Gamma_U}$, the assembly $R^*(A_U) \in \prodasm{\Gamma}$.
\end{lemma}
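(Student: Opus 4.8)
The claim is that every producible assembly of the simulator $\Gamma_U$ maps, under $R^*$, to a producible assembly of the target system $\Gamma$. The natural approach is induction on the length of a production sequence in $\Gamma_U$, using the helper Lemma \ref{lem:onestep} as the single-step engine. Let me sketch this.

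\begin{proof}[Proof plan]
The plan is to induct on the number of production steps needed to reach $A_U$ from the seed $s_U$ of $\Gamma_U$. For the base case, I would argue that $R^*(s_U)$ is exactly the seed $s$ of $\Gamma$ (this is how the seed supertile is configured), and $s \in \prodasm{\Gamma}$ by definition. For the inductive step, suppose $A_U \in \prodasm{\Gamma_U}$ is reached in $k$ steps and, by the inductive hypothesis, $R^*(A_U) \in \prodasm{\Gamma}$. Consider any $B_U$ with $A_U \to_1 B_U$. By Lemma \ref{lem:onestep}, either $R^*(A_U) = R^*(B_U)$ or $R^*(A_U) \to_1^\Gamma R^*(B_U)$. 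In the first case $R^*(B_U) = R^*(A_U) \in \prodasm{\Gamma}$ directly; in the second case, since $R^*(A_U) \in \prodasm{\Gamma}$ and a single production step preserves producibility, $R^*(B_U) \in \prodasm{\Gamma}$ as well. Either way $R^*(B_U) \in \prodasm{\Gamma}$, completing the induction.
\end{proof}

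The only genuine subtlety, and the step I would treat most carefully, is the base case: I need $R^*$ to be well-defined on $s_U$ and to actually return the intended seed $s$ of $\Gamma$. This requires that the seed supertile, even before any dynamics fire, already encodes a valid state of $\Gamma$ under the $m$-block representation function — i.e., its active state column is set so that $R$ reads off the correct initial state. I would point to the construction of $s_U$ in the earlier sections to justify this. A secondary point worth a sentence is that $R^*$ is only a partial function, so I should note that $R^*(A_U)$ is in fact \emph{defined} for every producible $A_U$; this follows because Lemma \ref{lem:onestep} presupposes $R^*(A_U)$ exists and the induction threads this definedness forward from the well-defined base case. I do not expect any hard obstacle here — the real work has been pushed into Lemma \ref{lem:onestep}, and this lemma is essentially the routine ``lift single-step correctness to full producibility'' argument. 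The forward direction (this inclusion) is the easy half of establishing equivalent productions; the harder converse, that every producible assembly of $\Gamma$ is hit by some $R^*(A_U)$, will be the companion lemma requiring the actual construction dynamics.
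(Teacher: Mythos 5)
Your proposal is correct and follows essentially the same route as the paper: induction on the production sequence from the seed $s_U$ (whose supertiles are built to represent the tiles of $s$, so $R^*(s_U)=s$), with Lemma~\ref{lem:onestep} supplying the single-step case analysis. Your side remark on the well-definedness of the partial function $R^*$ corresponds to the paper's opening classification of producible supertiles (empty, active column only, or active column with selected transition), so nothing is missing.
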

\begin{proof}
    Any producible supertile either (1) maps to a empty state, (2) has only an active column which signifies the state in $\Sigma$ it represents, or (3) has an active column and a selected transition in which case it maps to the state after the transition. 

    We will use induction along with Lemma \ref{lem:onestep} to prove that all assemblies are producible. For our base case we consider the seed in both systems.
    We replace each tile in the seed $s$ by supertiles representing that tile to get seed assembly $s_U$. 
    Then by Lemma \ref{lem:onestep} every move we make on assemblies $A_U$ in $\prodasm{\Gamma_U}$ creates an assembly $B_U$ which represents an assembly $B$ that is reachable by $A$ in $\Gamma$.
\end{proof}

\begin{lemma}\label{lem:1b}
    For all $A_U\in \prodasm{\Gamma_U}$, $A_U$ maps cleanly to $R^*(A_U)$ with 1-fuzz.
\end{lemma}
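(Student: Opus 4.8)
The plan is to characterize exactly which macroblocks of a producible assembly $A_U$ are non-empty yet map to the ghost tile $\phi$, and then to show that each such macroblock lies within one macroblock (in grid coordinates) of a macroblock mapping to a genuine state of $\Sigma$. Since the simulation is \emph{clean} precisely when the fuzz bound $c=1$ holds, establishing this distance bound of $1$ for every producible assembly is exactly what is required.

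First I would pin down the empty-versus-real dichotomy. By the case analysis already used in Lemma~\ref{lem:1a}, a completed supertile always carries an active state column and therefore maps to a real tile (cases (2) and (3)), whereas a supertile maps to $\phi$ precisely when it has no selected active column---that is, when it is still under construction and has not yet executed the ``Requesting Tile State From Neighbors'' and state-selection steps of Section~\ref{sec:attachment}. Together with Lemma~\ref{lem:onestep}, which guarantees that no intermediate construction step alters a supertile's mapping until its active column is chosen, this shows the only non-empty macroblocks mapping to the ghost tile are those holding partially-constructed supertiles.

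Second, I would argue that every tile placement performed during the attachment process of Section~\ref{sec:attachment} stays confined to the single target macroblock adjacent to the builder. Each phase---claiming the mirror edge, copying the outline, copying the construction wires, building the table and datacells, filling transition rules, and copying the state-transmission wires---places or transitions tiles only within the one $m \times m$ region being constructed, so a partial supertile never spills into a macroblock other than its own. Consequently each non-empty $\phi$-mapping macroblock is a single, well-defined grid cell.

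The crux, and the main obstacle, is showing that every such partial supertile sits at macroblock distance exactly $1$ from a complete supertile. Construction is always initiated by an existing supertile (the Copy Checkpoint / builder), which remains fully constructed throughout the process: its active state is preserved in the state-storage box during wiping and is only reactivated after the new supertile reports completion. Hence the builder's macroblock maps to a real tile and is adjacent to the supertile under construction, giving $c=1$. The delicate part will be the concurrent-construction scenario described in Section~\ref{sec:attachment}, in which several neighboring supertiles may attempt to build in the same spot; I must verify that the corner-claiming protocol and the dueling-Placement-Director mechanic guarantee that the single contested macroblock always remains adjacent to at least one complete builder, and that the abort procedure cleanly restores every losing builder to a complete state without orphaning a partial supertile elsewhere. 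Combining these observations, every non-empty $\phi$-mapping macroblock of $A_U$ is within distance $1$ of a real-mapping macroblock, so $A_U$ maps cleanly to $R^*(A_U)$ with $1$-fuzz.
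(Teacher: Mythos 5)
Your proposal is correct and follows essentially the same argument as the paper: ghost-mapping macroblocks are exactly the partially-constructed supertiles, construction is confined to the single target macroblock, and each such block is adjacent to its builder, which remains complete and properly mapped (its active state being preserved during wiping) throughout construction. The paper's own proof is just a terser version of this, adding only the explicit remark that the seed assembly $s_U$ maps cleanly by construction.
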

\begin{proof}
    The seed $s_U$ we create maps cleanly to the original seed $s$ as we only place supertiles in locations where tiles take place. 
    
    Each ghost tile is built from a neighbor boundary first. Once the boundaries are built, the ghost tile copies the contents of the supertile. It is not until the supertile is complete and has selected a state that it begins to attempt to build neighboring ghost tiles. Therefore each ghost tile is adjacent to at least one properly mapped tile.
\end{proof}

\paragraph*{Equivalent Dynamics.}

\begin{lemma}\label{lem:2a}
    %Rewrite in Lemma form
    For all $A,B \in \prodasm{\Gamma}$ such that $A\to^{\Gamma} B$, it holds that for all $A_U$ such that $R^*(A_U) = A$, we have $A_U\to^{\Gamma_U} B_U$ for some $B_U\in \prodasm{\Gamma_U}$ with $R^*(B_U) = B$.
\end{lemma}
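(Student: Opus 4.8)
The plan is to establish the non-committal modeling requirement by reducing $A \to^{\Gamma} B$ to a chain of single production steps and then exploiting the full reversibility of the simulator's gadgets up to each committing operation. First I would write $A = C_0 \rightarrow^{\Gamma}_1 C_1 \rightarrow^{\Gamma}_1 \cdots \rightarrow^{\Gamma}_1 C_k = B$ and argue by induction on $k$. The base case $k=0$ (where $A=B$) is immediate since $A_U \to^{\Gamma_U} A_U$ and $R^*(A_U)=A=B$. For the inductive step it suffices to prove the \emph{single-step} version: for every $A_U$ with $R^*(A_U)=A$ there is a $C_{1,U}$ with $R^*(C_{1,U})=C_1$ and $A_U \to^{\Gamma_U} C_{1,U}$. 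Applying the inductive hypothesis to this particular $C_{1,U}$ — which is legitimate precisely because both claims are universally quantified over all preimages — and composing the two derivations yields $B_U$. This composition is exactly where the universal quantifier of Definition~\ref{def:models} is essential.

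The heart of the proof is a canonicalization claim: from any $A_U$ with $R^*(A_U)=A$, the simulator can reach an \emph{idle} assembly $\hat{A}_U$ with $R^*(\hat{A}_U)=A$ in which every non-empty supertile has a set active column and an unlocked table, and every gadget is reset with no operation in progress. The justification is the reversibility built into the construction: because $R^*(A_U)=A$, no in-progress process in $A_U$ has yet executed its single committing step (the state-selection in attachment, or the simultaneous double-transition of Section~\ref{sec:transitioning_tiles}), so by Lemma~\ref{lem:onestep} every such process is still in its pre-commit, abortable regime. I would then invoke the abort and reset options described throughout Sections~\ref{sec:attachment} and~\ref{sec:transitioning_tiles} — the failure branch of attachment, the abort row/column of the transition selection gadget, the resetting of doors, the wiping of wires, and the stated no-deadlock guarantee for agents waiting on a locked table — to drive each process back to idle, sequencing the aborts over the (largely disjoint) regions they occupy.

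Finally, from the idle $\hat{A}_U$ I would realize the single step $C_0 \rightarrow^{\Gamma}_1 C_1$. If it is an attachment of a tile at block $p$, the adjacent supertile initiates construction, builds the new supertile as a ghost (which maps to the empty state throughout, leaving the mapping unchanged until selection), and the nondeterministic state-selection then chooses exactly the state attached in $C_1$; this choice is available because the attachment is valid in $\Gamma$, so the corresponding affinity door is present in the lookup table. If it is a transition of the adjacent pair at blocks $p_1,p_2$, those two idle supertiles run the seven-phase process and the selection agents nondeterministically align on the data string for the specific rule $((\sigma_1,\sigma_2),(\sigma_3,\sigma_4),d)$, committing via the double-transition. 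In either case Lemma~\ref{lem:onestep} guarantees the only mapping change is the simulated step, so $R^*(C_{1,U})=C_1$.

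The main obstacle is the canonicalization claim: establishing rigorously that \emph{every} combination of concurrent, partially completed gadget processes can be unwound to one idle configuration without ever altering the mapping and without introducing new construction conflicts. This is where the reversible design of each gadget, together with the fact that commitment always occurs in exactly one irreversible step, must be leveraged; given Lemma~\ref{lem:onestep} and the built-in nondeterminism of the selection gadgets, the remaining steps are routine.
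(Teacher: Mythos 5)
Your proposal is correct and, at its core, takes the same route as the paper: the paper's own proof also reduces to realizing a single production step of $\Gamma$, and argues exactly as you do that an attachment is realized by the lookup agent nondeterministically choosing (or declining) an affinity door, and a transition by the two selection agents aligning on a rule or aborting. What you add is scaffolding the paper leaves implicit: the paper states the lemma for $A \to^{\Gamma} B$ but proves only the case $A \to^{\Gamma}_1 B$, never spelling out the induction on the length of the production sequence or the observation that the universal quantification over preimages is what makes the composition legitimate; and it picks ``an arbitrary $A_U$'' without ever addressing preimages that contain partially completed gadget activity, which is precisely what your canonicalization claim isolates. That claim is indeed where all the real work lies, and it is the part neither you nor the paper proves in detail. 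One small correction to your justification of it: from $R^*(A_U)=A$ you cannot conclude that every in-progress process is pre-commit. A process may already have committed (its effect is then part of $A$) and be in a post-commit cleanup phase --- wiping the transition selection gadget, unlocking tables, transmitting the new state. Such processes cannot be aborted backward to idle; they must instead be run \emph{forward} to completion, which is harmless since by Lemma~\ref{lem:onestep} those steps preserve the mapping. With canonicalization restated as ``abort everything pre-commit, complete everything post-commit,'' your argument goes through and is, if anything, a more honest account of what the paper's two-paragraph proof is implicitly assuming.
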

\begin{proof}
    Consider any pair of assemblies $A, B \in \prodasm{\Gamma}$ such that $A \to_1^{\Gamma} B$. Pick an arbitrary $A_U$ such that $R^*(A_U) = A$. If this transition was achieved via an attachment the agent selects the active tile column by traversing the datacells at an intersection. It may also chose to not stop at the intersection and continue on or go backwards to select another tile. This allows $A_U$ to achieve any attachment performed by $A$.
    
    For transitions, all available rules will be loaded up into the transition selection gadget. If the two agents meet they may select the transition and instantly change the mapping of both tiles, transitioning from $A_U$ to $B_U$ based on our mapping. However, the non-deterministic process may not select a transition at all and will allow the agents to keep walking to select any transition, or abort.    
\end{proof}

\begin{lemma}\label{lem:2b}
    If $A_U \to^{\Gamma_U} B_U$ for some $A_U,B_U \in \prodasm{\Gamma_U}$, then $R^*(A_U) \to^{\Gamma} R^*(B_U)$ or $R^*(A_U) = R^*(B_U)$.
\end{lemma}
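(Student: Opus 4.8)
The plan is to observe that this lemma is exactly the \emph{follows} requirement $\Gamma \dashv_R \Gamma_U$, and that its single-step version has already been established as Lemma~\ref{lem:onestep}. The entire argument is therefore a straightforward induction on the length of a production sequence witnessing $A_U \to^{\Gamma_U} B_U$, applying Lemma~\ref{lem:onestep} to each individual production step and then stitching the results together using transitivity of $\to^{\Gamma}$ together with its reflexivity ($A \to^{\Gamma} A$ for every $A$). First I would unpack the definition of producibility: $A_U \to^{\Gamma_U} B_U$ means there is a finite sequence $A_U = C_0 \to_1 C_1 \to_1 \cdots \to_1 C_k = B_U$ in which every $C_i \in \prodasm{\Gamma_U}$ and each $C_i \to_1 C_{i+1}$ is a single attachment or transition step.

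The base case $k = 0$ gives $A_U = B_U$, so $R^*(A_U) = R^*(B_U)$ and the second disjunct holds. For the inductive step, the key preliminary observation is that Lemma~\ref{lem:onestep} requires its hypothesis $A = R^*(A_U) \in \prodasm{\Gamma}$, so before invoking it at the step $C_i \to_1 C_{i+1}$ I would first apply Lemma~\ref{lem:1a} to conclude $R^*(C_i) \in \prodasm{\Gamma}$ (and that $R^*(C_i)$ is defined at all, since $R^*$ is only a partial function). With this in hand, Lemma~\ref{lem:onestep} yields that $R^*(C_i) = R^*(C_{i+1})$ or $R^*(C_i) \to_1^{\Gamma} R^*(C_{i+1})$. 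Combining either alternative with the inductive hypothesis $R^*(C_0) \to^{\Gamma} R^*(C_i)$ and appending the (possibly trivial) step $R^*(C_i) \to^{\Gamma} R^*(C_{i+1})$ gives $R^*(C_0) \to^{\Gamma} R^*(C_{i+1})$, completing the induction and producing $R^*(A_U) \to^{\Gamma} R^*(B_U)$. Since $\to^{\Gamma}$ is reflexive, this single conclusion already subsumes the stated disjunction: if no step along the way changed the mapping then $R^*(A_U) = R^*(B_U)$, and otherwise $R^*(A_U) \to^{\Gamma}_{\geq 1} R^*(B_U)$.

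The only real subtlety, and the step I expect to require the most care, is the bookkeeping that ensures the hypotheses of Lemma~\ref{lem:onestep} genuinely hold at \emph{every} intermediate assembly $C_i$ — namely that each $R^*(C_i)$ is both defined and producible in $\Gamma$. This is precisely what Lemma~\ref{lem:1a} guarantees for all producible assemblies of $\Gamma_U$, so the obstacle is resolved by ordering the lemmas correctly rather than by any new construction-level reasoning. All of the genuinely difficult content — verifying that a single move of $\Gamma_U$ either leaves the mapped assembly fixed or corresponds to exactly one legal attachment or transition of $\Gamma$, which hinges on the fact that a supertile's mapping only changes at the instant an active-column attachment completes or a transition is jointly selected — is already discharged inside Lemma~\ref{lem:onestep}, leaving the present lemma as its routine transitive closure.
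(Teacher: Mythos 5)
Your proof is correct, but it takes a different route from the paper's. The paper proves Lemma~\ref{lem:2b} by arguing directly at the construction level: it observes that steps not changing the mapping satisfy $R^*(A_U)=R^*(B_U)$; that a ghost supertile can only come to map to a real state once its active state column is built and activated, which happens only when a neighboring supertile had the appropriate affinity; and that a transition can only change the mapping when both selection agents jointly commit to a rule that was loaded from the table, which by construction contains only legal transitions of $\Gamma$. In other words, the paper essentially re-derives the content of Lemma~\ref{lem:onestep} inside the proof of Lemma~\ref{lem:2b}, and leaves the multi-step chaining implicit. You instead treat Lemma~\ref{lem:2b} as the transitive closure of Lemma~\ref{lem:onestep}: induct on the length of the production sequence $A_U = C_0 \to_1 \cdots \to_1 C_k = B_U$, use Lemma~\ref{lem:1a} to discharge the hypothesis that each $R^*(C_i)$ is defined and lies in $\prodasm{\Gamma}$ (this is not circular, since Lemma~\ref{lem:1a} is proved from Lemma~\ref{lem:onestep} alone), and chain the per-step conclusions using reflexivity and transitivity of $\to^{\Gamma}$. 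Your version is the more rigorous of the two on the multi-step issue --- the paper's prose reads as if the lemma concerned only single steps --- and it is more modular, since all construction-level reasoning stays quarantined inside Lemma~\ref{lem:onestep}; what the paper's version buys is a self-contained restatement of why the construction enforces validity, at the cost of redundancy with its own helper lemma.
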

\begin{proof}
    If a attachment or transition does not change its mapping then we satisfy $R^*(A_U) = R^*(B_U)$. For a ghost tile to transition to a valid mapped tile, it must have an active state column.
    This active state column is only build and actually activated if there was a neighboring supertile that had the appropriate affinity.

    For a transition the agents must both match and find the same transition in order to change the mapping of the tile. Only proper legal transition may be placed in the table so all of these must be valid transitions from $R*(A')$ to $R^*(B)$.
\end{proof}

\paragraph*{Transitivity of Simulation.}
Here we show the definition of simulation is transitive, and hence we may chain many simulations together. It is possible that chaining 1-fuzz simulations results in an increase in fuzz by a constant factor. However, in our case we preserve 1-fuzz which we prove in Theorem~\ref{the:chaining_1-fuzz}.

\begin{lemma}
    The definition of simulation is transitive. 
    If each simulation is 1-fuzz and has scale factor larger than $1$ then the resulting simulation has at most 3-fuzz.
\end{lemma}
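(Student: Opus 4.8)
The plan is to fix three systems and compose the two given simulations into one. Let $\Gamma_1$ simulate $\Gamma_2$ under representation $R_{12}$ at scale $m_{12}$ with $c_{12}$-fuzz, and let $\Gamma_2$ simulate $\Gamma_3$ under $R_{23}$ at scale $m_{23}$ with $c_{23}$-fuzz. I would take the composite representation $R_{13}$ of scale $m_{12}m_{23}$ defined so that $R_{13}^* = R_{23}^* \circ R_{12}^*$: read an $m_{12}m_{23}$-block of a $\Gamma_1$-assembly as an $m_{23}\times m_{23}$ grid of $m_{12}$-blocks, apply $R_{12}$ to each to obtain an $m_{23}$-block over $\Sigma_2$, and then apply $R_{23}$. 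The structural fact that makes this legal is that equivalent production of the first simulation gives $R_{12}^*(A)\in\prodasm{\Gamma_2}$ for every $A\in\prodasm{\Gamma_1}$, so the hypotheses of the second simulation apply to every intermediate assembly produced.

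Next I would verify the three defining conditions of simulation for $R_{13}$ by chasing definitions through both layers. Equivalent production is immediate, since $\{R_{13}^*(A):A\in\prodasm{\Gamma_1}\} = \{R_{23}^*(B):B\in\prodasm{\Gamma_2}\} = \prodasm{\Gamma_3}$. The \emph{follows} condition composes directly, as $A\to^{\Gamma_1}B$ forces $R_{12}^*(A)\to^{\Gamma_2}R_{12}^*(B)$ and hence $R_{13}^*(A)\to^{\Gamma_3}R_{13}^*(B)$. For the \emph{models} condition, given $A\to^{\Gamma_3}B$ and an arbitrary $A_1$ with $R_{13}^*(A_1)=A$, I set $A_2=R_{12}^*(A_1)$, use $\Gamma_2\models_{R_{23}}\Gamma_3$ to obtain $B_2$ with $A_2\to^{\Gamma_2}B_2$ and $R_{23}^*(B_2)=B$, and then apply $\Gamma_1\models_{R_{12}}\Gamma_2$ to this derivation (with preimage $A_1$) to obtain $B_1$ with $A_1\to^{\Gamma_1}B_1$ and $R_{12}^*(B_1)=B_2$, whence $R_{13}^*(B_1)=B$. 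I would emphasize that this step relies on the \emph{universal} quantifier ``for all $A'$ with $R^*(A')=A$'' of non-committal modeling, instantiated at the specific preimage $A_2$; the weaker committal statement would not let us pin down this particular $A_2$, which is exactly why composability is a genuine feature of the non-committal definition.

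The remaining quantitative step, which I expect to be the main obstacle, is bounding the fuzz of the composite, and here I would argue by a two-stage pullback of coordinates. Take any non-empty $m_{12}m_{23}$-block of $A_1\in\prodasm{\Gamma_1}$ at composite coordinate $(X,Y)$; it contains a non-empty $m_{12}$-block at some $(p,q)$ with $\lfloor p/m_{23}\rfloor=X$ and $\lfloor q/m_{23}\rfloor=Y$. Applying $c_{12}$-fuzz of the first simulation yields $(p',q')$ within $c_{12}$ of $(p,q)$ in each coordinate at which $R_{12}$ produces a genuine $\Sigma_2$-tile; that tile lies in the $m_{23}$-block of $A_2=R_{12}^*(A_1)$ at coordinate $(\lfloor p'/m_{23}\rfloor,\lfloor q'/m_{23}\rfloor)$, and passing a shift of at most $c_{12}$ fine blocks through the quotient moves the coarse coordinate by at most $\lceil c_{12}/m_{23}\rceil$, which is $1$ when $c_{12}=1$ and $m_{23}\ge 2$. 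That coarse block of $A_2$ is non-empty, so $c_{23}$-fuzz of the second simulation supplies a coarse coordinate $(X',Y')$ within $c_{23}$ of it at which $R_{23}$, and therefore $R_{13}$, produces a genuine $\Sigma_3$-tile. Summing the two drifts gives $|X'-X|,|Y'-Y|\le \lceil c_{12}/m_{23}\rceil + c_{23}$, which for $c_{12}=c_{23}=1$ and scales larger than $1$ is at most $2$, hence the composite achieves at most $3$-fuzz. The delicate point is precisely the floor/quotient bookkeeping when converting a displacement measured in fine blocks into one measured in coarse blocks, and the scale-larger-than-$1$ hypothesis is what keeps that intermediate drift from exceeding a single coarse block.
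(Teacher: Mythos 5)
Your proposal is correct, but it takes a genuinely different route from the paper. The paper's proof treats a whole chain of $k$ simulations at once and bounds the fuzz geometrically: since each scale factor is at least $2$, the supertiles at each level of the chain shrink by a fraction $\alpha \le \frac{1}{2}$, each level contributes at most one ghost tile adjacent to the previous level's ghost region, and the resulting geometric series of displacements is bounded by a distance of $3$ from the original supertile. Your proof instead composes exactly two simulations with explicit coordinate bookkeeping, and the floor/quotient argument $\bigl|\lfloor p'/m_{23}\rfloor - \lfloor p/m_{23}\rfloor\bigr| \le \lceil c_{12}/m_{23}\rceil$ is sound; it actually yields the sharper bound of $2$-fuzz for the binary composition (your final ``hence at most $3$-fuzz'' is just a weakening). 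Your treatment of the definitional conditions is also more careful than the paper's one-line remark that the representation functions compose, and your observation that the \emph{models} condition must be instantiated at the specific preimage $A_2 = R_{12}^*(A_1)$ --- which is exactly what the universal quantifier of non-committal modeling provides --- is a genuine insight the paper leaves implicit. Two caveats. First, the paper's version is stated and used for chains of arbitrary length (``we may chain many simulations together''), and a uniform bound for long chains does not follow from the binary case alone; however, your composition formula closes this gap immediately, since iterating $c \mapsto 1 + \lceil c/m\rceil$ with $m \ge 2$ has fixed point $2$, so arbitrary chains stay at $2$-fuzz --- you should add this one-line induction to fully match the paper's intent. Second, your closing claim that the scale-larger-than-$1$ hypothesis is what keeps the intermediate drift to one coarse block is slightly misplaced: for $c_{12}=1$ one has $\lceil 1/m_{23}\rceil = 1$ for every $m_{23}\ge 1$, so the binary bound needs no scale hypothesis at all; the hypothesis is genuinely needed only for the iteration over long chains (or when the inner fuzz exceeds $1$), which is precisely where the paper's geometric argument uses it.
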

\begin{proof}
    First consider a chain of $k$ simulating systems where $\Gamma_i$ simulates $\Gamma_{i+1}$ for $0\leq i < k$. 

    Condition 1 from \emph{equivalent productions}, and both the \emph{follows} and \emph{models} conditions of equivalent dynamics are all preserved by the fact we may compose the representation functions.

    The second condition of \emph{equivalent productions}, namely the $c$-fuzz bound, requires more care as the fuzz of a simulation is not immediately preserved.
    However, we can ensure that the fuzz will be bounded by at most $3$.
    At each simulation step, the size of a supertile is getting smaller by a fraction $\alpha \leq \frac{1}{2}$.
    Since each simulation has at most one ghost tile next to its valid parts of the assembly, every simulation can add at most one ghost tile neighboring the previous one, which is a fraction $\alpha$ smaller than the previous.
    Since $\alpha \leq \frac{1}{2}$, this geometric series in the plane can reach a distance of at most $3$ from the original supertile.  
\end{proof}

Even though chaining 1-fuzz simulations can lead to a simulation using 3-fuzz, chaining our specific construction would never lead to more than 1-fuzz.

\begin{theorem}\label{the:chaining_1-fuzz}
    Chaining our simulations results in a 1-fuzz simulation.
\end{theorem}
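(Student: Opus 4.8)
The plan is to analyze only the specific chain behind Theorem~\ref{thm:genIU}: the universal system $\Gamma_U$ simulates the temperature-$1$ system $\Gamma_1$ through the construction of Theorem~\ref{lem:iut1} at some scale $m>1$, while $\Gamma_1$ simulates the original system $\Gamma$ through the temperature reduction of Lemma~\ref{thm:temp1sim} (or Lemma~\ref{thm:temp1sim2}) \emph{at scale $1$}. Writing $R_{\mathrm{IU}}$ for the scale-$m$ representation function and $R_{\mathrm{temp}}$ for the scale-$1$ one, the composed function acting on $\Gamma_U$ is simply $R=R_{\mathrm{temp}}\circ R_{\mathrm{IU}}$ read at scale $m$, with no shrinkage of the macroblock between the two layers. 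I would first stress that this is precisely the feature that defeats the geometric-series blow-up of the preceding transitivity lemma: there the $3$-fuzz bound arose from ghost tiles shrinking by a factor $\alpha\le\frac{1}{2}$ at every level, whereas here the temperature layer introduces no additional scaling. With $R$ set up, I would classify each non-empty block $(A_U)^m_{x,y}$ that maps to $\phi$ under $R$ into two types: (a) the block is an incomplete supertile, already sent to $\phi$ by $R_{\mathrm{IU}}$; and (b) the block is a complete supertile representing a $\Gamma_1$-state $\sigma_1$ that is itself a ghost of the temperature simulation (the state $\mathbf{g}$ or a transient counting/temporary state), so that $R_{\mathrm{temp}}(\sigma_1)=\phi$.

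Type (b) I would dispatch with the temperature construction itself. Since the only attaching state of $\Gamma_1$ is $\mathbf{g}$, and $\mathbf{g}$ has affinity only with real (non-ghost, non-temporary) tiles, every ghost tile of a producible $\Gamma_1$-assembly is \emph{adjacent} to a tile mapping to a real $\Gamma$-state. Because $R_{\mathrm{IU}}(A_U)\in\prodasm{\Gamma_1}$ by the equivalent-productions clause of Theorem~\ref{lem:iut1}, a type-(b) block at $(x,y)$ then has a neighbor at Chebyshev distance $1$ whose block is a complete supertile of a real $\Gamma$-state, i.e.\ a real block of $R$. For type (a), Lemma~\ref{lem:1b} already gives that an incomplete supertile borders a \emph{complete} supertile; the remaining task is to argue that this bordering supertile maps to a real $\Gamma$-state and not to a $\Gamma_1$-ghost.

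The hard part, which I expect to be the main obstacle, is exactly this last point: a priori a type-(a) incomplete supertile might be anchored only by a type-(b) ghost supertile, which would place it at distance $2$ from the nearest real block and yield merely $2$-fuzz. I plan to rule this out using the attachment structure of $\Gamma_1$: new tiles appear only as $\mathbf{g}$-tiles binding to an existing real neighbor, and since $\mathbf{g}$ has affinity with no initial state (it binds only to real, non-initial states, and in particular not to $\mathbf{g}$), no $\Gamma_1$ attachment is ever nucleated adjacent to a ghost tile. Tracing this through Section~\ref{sec:attachment}, I would argue that whenever $\Gamma_U$ builds a new supertile the building site is adjacent to a supertile whose selected state supplies affinity to the chosen initial state $\mathbf{g}$, which forces that anchoring supertile to represent a real $\Gamma$-state. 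The delicate point to verify is that the state-request-and-selection phase never commits $\Gamma_U$ to a retained construction against a neighborhood consisting solely of ghost supertiles; establishing this for our specific gadgets is where the real work lies.

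Granting that step, both cases collapse: every non-empty block of every producible $A_U$ that maps to $\phi$ lies within Chebyshev distance $1$ of a block mapping to a real $\Gamma$-tile, so the composed simulation achieves $1$-fuzz, which is the claim.
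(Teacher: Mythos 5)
Your proposal is correct in substance, but it takes a genuinely different route from the paper's own proof of this theorem. You analyze only the chain behind Theorem~\ref{thm:genIU} ($\Gamma_U$ simulates $\Gamma_1$ at scale $m$, $\Gamma_1$ simulates $\Gamma$ at scale $1$), split the composed fuzz into incomplete supertiles versus complete supertiles representing $\Gamma_1$-ghosts, and close both cases with the affinity structure of the temperature construction: $\mathbf{g}$ binds only to real states, and no attachable state binds to a ghost or temporary state, so fuzz never nucleates further fuzz. The paper's proof of Theorem~\ref{the:chaining_1-fuzz} instead treats chaining of the supertile construction with itself: the individual tiles of a supertile $S$ never leave the bounding box of $S$, so a lower-level supertile $S'$ simulating an edge tile $t$ of $S$ would build a ghost supertile outside $S$'s footprint only if $t$ wanted to build outside $S$, which never happens. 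The scale-$1$ chain you study is handled in the paper not by this proof but by an inline remark in the proof of Theorem~\ref{thm:genIU} (``ghost tiles are only built where a new tile may attach,'' and attached ghosts of the temperature construction have no further affinities), which is essentially your type-(a)/(b) argument. So your route is the one actually needed for the main theorem and makes the dangerous distance-$2$ scenario (an incomplete supertile anchored only by a ghost supertile) explicit, whereas the paper's stated proof buys coverage of self-chaining of the IU construction, which your scope restriction omits. Finally, the ``delicate point'' you flag --- that $\Gamma_U$ never initiates or retains construction against a ghost-only neighborhood --- is precisely the property the paper asserts without further argument; it holds because the attachment check of Section~\ref{sec:attachment} tests for a possible attachment (an affinity with an attachable initial state) rather than an arbitrary affinity, so your proof is no less complete than the paper's on that score.
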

\begin{proof}
    The individual tiles of a supertile $S$ would never go outside the boundingbox of $S$.
    Take an individual tile $t$ on the edge of $S$.
    If we would chain simulations, $t$ would be simulated using a supertile $S'$.
    Supertile~$S'$ would only build a new ghosttile outside of $S$ if $t$ would want to build outside of $S$.
    Since this never happens, chaining our simulation only results in 1-fuzz.
\end{proof}

\paragraph*{Universality Results.}

\genIU*
\begin{proof}
    Lemma \ref{lem:iut1} states that temp-1 is IU for itself.

    Chaining these two simulations will still result in a 1-fuzz simulation as ghost tiles are only built where a new tile may attach. Our construction in Theorem \ref{thm:temp1sim} has 1-fuzz and the ghost tiles that attach do not have any other affinities with neighboring tiles. Thus the supertile simulating them in Lemma \ref{lem:iut1} will not place any additional ghost tiles. For the same reason any assembly which has no attachments will not build any ghost tiles and thus have no fuzz. 
\end{proof}
\section{IU TA Simulates 2D Asynchronous CA \texorpdfstring{$N = 2$}{|N| = 2}}\label{sec:iu_ta_simulates_2d_async_ca}  
Previously, a partial proof of 1D asynchronous cellular automata (ACA) being intrinsically universal was shown in \cite{worsch2013towards}. Here, we apply techniques used throughout this paper to show two subsets of asynchronous cellular automata are intrinsically universal. We start by defining pairwise and block-pairwise ACA.

\subparagraph*{Asynchronous Cellular Automata}
An Asynchronous Cellular Automata (ACA) system is a 4-tuple $\Gamma = \{\Sigma, N,  \Delta, C\}$, where $\Sigma$ is a set of states,  $N \in \mathbb{N}$ is the neighborhood of $\Gamma$, $\Delta$ is a mapping $\Delta: \Sigma^{N} \to \Sigma$ and $C$ is a configuration that is a mapping $C: \mathbb{N}^2 \to \Sigma$. We refer to each mapping in $\Delta$ as a transition rule.

\subparagraph*{Pairwise ACA.}
A pairwise ACA is an ACA with one extra consideration. More formally, it is defined as a 4-tuple $\Gamma = \{ \Sigma, S, \Delta, C\}$, where $S$ consists of all possible subsets of size $2$ between a cell $c$ and adjacent cells in each cardinal direction, and $\Delta$ is a mapping $\Delta: \Sigma^s \to \Sigma$, where $s \in S$. If $\Delta$ is a mapping $\Delta: \Sigma^s \to \Sigma^s$, we consider $\Gamma$ a block-pairwise ACA.

Note that these automaton are a subset of radius-1 ACAs since we can transform each transition rule in $\Delta$ into larger mappings that ignore the neighbors not included in the rule. However, this increases the number of rules by a factor of $|\Sigma|^3$ since we need to account for all possible configurations of the neighborhood.  

\begin{lemma} \label{thm:block-pairwise-aca}
    Block-pairwise ACA is strongly intrinsically universal. 
\end{lemma}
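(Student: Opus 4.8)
The plan is to reuse the transition machinery of the temperature-1 TA construction of Theorem~\ref{lem:iut1}, exploiting the observation that a block-pairwise ACA rule $\Delta:\Sigma^s\to\Sigma^s$ has exactly the same form as a Tile Automata transition rule, namely an update $(\sigma_1,\sigma_2)\to(\sigma_3,\sigma_4)$ of an adjacent pair of cells. The essential difference between the two models is that a block-pairwise ACA acts on a fixed, already-populated infinite configuration and never attaches new cells, whereas seeded TA must build its assembly from a seed. Consequently, the only part of the supertile construction we need is the part that performs \emph{transitions}; the entire attachment and copying apparatus can be discarded, which is precisely what reduces the state count from $4600$ to the $2600$ non-copy states reported in Section~\ref{sec:metrics}.

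Concretely, given a block-pairwise ACA $\Gamma=\{\Sigma,S,\Delta,C\}$ to be simulated, I would define the simulating universal system to have, as its initial configuration, a tiling of the plane by supertiles --- one supertile per cell of $\Gamma$. Each supertile encodes the complete rule set $\Delta$ in its lookup table and has its active column initialized to the state assigned to that cell by $C$. Because no cell is ever created, no attachment process is invoked and every update is a transition. The lookup table, wires, doors, punchdown gadget, and transition selection gadget all carry over verbatim, since each of their elementary steps is already a single pairwise state swap or rewrite between two adjacent tiles, hence a legal block-pairwise update of neighborhood size $2$ and radius $1$.

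Correctness then follows the outline of Lemmas~\ref{lem:onestep}, \ref{lem:2a}, and~\ref{lem:2b}. A block-pairwise update of a pair of cells in $\Gamma$ is simulated by the two neighboring supertiles: the applicable rules are loaded into the shared transition selection gadget, and when the two selection agents align at a rule they may \emph{double-transition}, changing the active column of both supertiles --- and therefore the mapping of both cells --- in a single pairwise step. This single-step change is exactly what makes the simulation non-committal (``strong''): the mapping is unchanged throughout the reversible loading phase, so every rule applicable to the pair, together with the option to abort, remains available until the instant of selection, matching Definition~\ref{def:models}. Equivalent productions and the \emph{follows} condition hold because the only mapping-changing move within a supertile is the double-transition, and only legal rules of $\Delta$ are ever loaded.

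The main obstacle is bookkeeping rather than conceptual: one must verify that every gadget operation invoked during the transition process genuinely respects the neighborhood-$2$, radius-$1$ constraint, i.e.\ that no step secretly reads a third cell or relies on the now-absent attachment dynamics. Since each TA transition rule already rewrites exactly one adjacent pair, this is immediate for the core gadgets; the care lies in the agents that coordinate two distant supertiles, whose long-range effect is mediated entirely by a chain of local pairwise swaps and culminates in the single adjacent double-transition at the gadget border. A secondary point is confirming that deleting the attachment states leaves the transition subsystem self-contained, which is precisely what the $2600$-state figure asserts.
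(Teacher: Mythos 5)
Your proposal is correct and takes essentially the same approach as the paper: the paper's proof likewise maps each cell of the block-pairwise ACA configuration to a supertile (built with the techniques of the TA construction) that stores the rule set and neighbor-state information, with updates realized entirely through the pairwise transition machinery. Your additional observations --- that the attachment/copying apparatus can be discarded (yielding the 2600 non-copy states) and that every gadget step is itself a legal neighborhood-2, radius-1 pairwise update --- are exactly the adaptations the paper asserts, just spelled out in more detail than its terse proof.
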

\begin{proof}
    Let $R$ be a representation function for a given block-pairwise ACA system $\Gamma = (\Sigma, S, \Delta, C)$. Map each cell $c \in C$ to $c'$ such that $R(c')=C(c)$, including a mapping for empty cells, in the same manner as a seeded TA system. As with the techniques described in earlier sections, each new cell $c'$ stores state information about each of its $4$ neighboring cells and the transitional information from $\Gamma$, ensuring cells are only able to change from some $R(c')=\sigma$ to $R(c')=\sigma'$ if there exists a valid transition rule from $\Delta$ to allow it. 
\end{proof}

\begin{figure}[t]
    \centering
    \includegraphics[width=1.\textwidth]{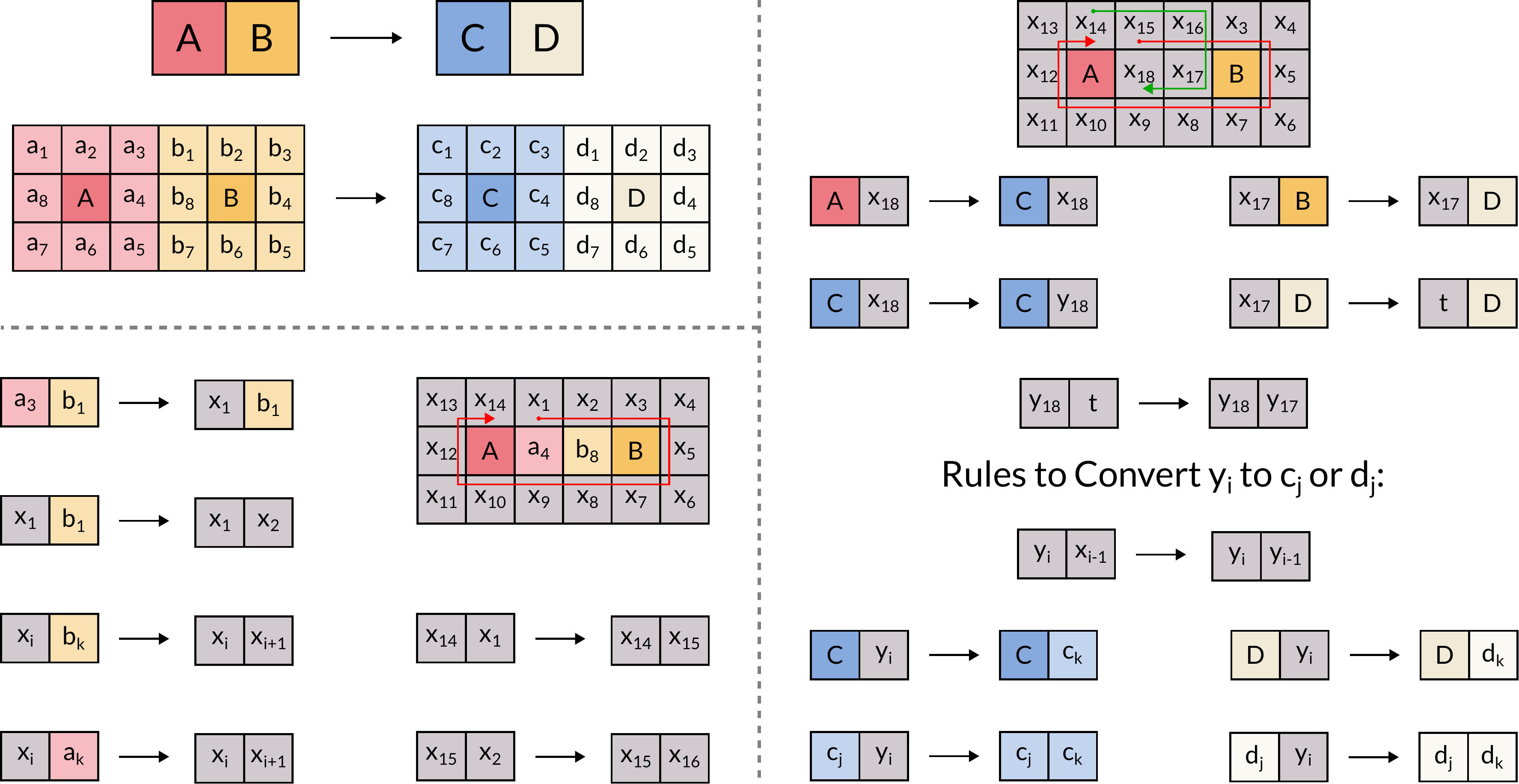}
    \caption{ %16+18+18+1 = 53
        Simulating a dual-transition rule with only single-sided transitions. We scale the simulation by 3 and any transition occurs by ``locking'' the two tiles, transitioning the two tiles, and then unlocking them. The rules shown are a general idea, but it requires an additional 53 states. 
    }
    \label{fig:dualsim}
\end{figure}

\begin{lemma} \label{thm:dualsim}
    Dual transitions in Tile Automata can be simulated by single-sided transitions at constant scale with $O(\Delta_2)$ additional states where $\Delta_2$ is the number of dual-transition rules in the system. 
\end{lemma}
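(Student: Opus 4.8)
The plan is to give an explicit constant-size gadget that replaces every dual-transition rule by a short protocol of single-sided transitions. First I would fix the scale factor at $3$, so that each tile of the dual-transition system $\Gamma$ is represented by a $3\times 3$ block, and designate one tile of each block (say its center) as the \emph{state-carrying} tile that the representation function $R$ reads; every single-sided rule already present in $\Gamma$ is carried over unchanged, acting directly on the state-carrying tiles. The remaining eight tiles of each block give the extra room needed to store, alongside the current simulated state, a \emph{lock} field and a \emph{pending} field recording whether the block is currently engaged in a dual transition and, if so, which rule and which target state it is committed to.

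For each dual rule $((\sigma_1,\sigma_2),(\sigma_3,\sigma_4),d)\in\Delta_2$ I would install a three-phase protocol carried out entirely by single-sided transitions acting across and near the shared block boundary. In the \emph{lock} phase, a single-sided transition at the boundary marks both blocks as locked for this particular rule and records their pending targets $\sigma_3$ and $\sigma_4$; the lock is designed to be mutually exclusive, so that once a pair commits to a rule, neither block on its own nor any competing rule can fire until the protocol either completes or aborts. In the \emph{transition} phase the two state-carrying tiles are overwritten to $\sigma_3$ and $\sigma_4$ by two successive single-sided transitions. In the \emph{unlock} phase the lock and pending fields are cleared, returning both blocks to ordinary operation in their new states. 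Since each rule contributes only a constant number of lock, pending, and intermediate states, the total number of added states is $O(\Delta_2)$, and the scale is the constant $3$, giving exactly the claimed bounds.

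The main obstacle is that a single-sided transition rewrites only one tile, while the simulated dual transition must change the images of two disjoint blocks; consequently the \emph{transition} phase unavoidably passes through a configuration in which one state-carrying tile already reads its new value while the other still reads its old one. To keep this half-transitioned configuration from being read as a spurious simulated state, I would route a locked block to a \emph{ghost} (empty) image for the duration of the protocol, so that the pair's joint image is hidden behind reserved/ghost mappings while the rewrite is in progress and reappears, as $(\sigma_3,\sigma_4)$, only once both blocks unlock; the mutual-exclusion property of the lock guarantees that no neighbor can interact with a block while it is in this transient state. To recover the modeling direction I would make each phase reversible up to the moment the lock is fully acquired, mirroring the abort mechanism of the ghost-tile construction in Lemma~\ref{thm:temp1sim}, so that from any representation of $(\sigma_1,\sigma_2)$ the protocol can either be driven to completion or backed out, and so that every single-sided step of the simulator projects to a legal production of $\Gamma$. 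The part I expect to require the most care is verifying that this locking discipline is deadlock-free and conflict-free when several candidate dual rules, or two dual transitions on overlapping pairs, compete for the same blocks simultaneously.
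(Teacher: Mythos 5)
Your protocol is structurally the same as the paper's (scale-$3$ blocks, a reversible locking phase, a single commitment step, then rewriting the two centers and unlocking, with $O(1)$ extra states per dual rule, hence $O(\Delta_2)$ overall), but your handling of the half-transitioned configuration contains a genuine flaw. You propose to route a locked block to a \emph{ghost} (empty) image for the duration of the protocol. Under the simulation definition used here, that cannot be done to a block that previously mapped to a real tile: if $R^*(A')$ contains tiles in states $\sigma_1,\sigma_2$ at the two positions and some intermediate assembly $B'$ of your protocol has $R^*(B')$ empty there, then the \emph{follows} condition demands $R^*(A') \to^{\Gamma} R^*(B')$, which is impossible --- Tile Automata transitions are affinity-strengthening, there is no detachment, and so no producible assembly can ever lose a tile. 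Equivalent productions fails for the same reason: the intermediate image, an assembly with a hole where $\sigma_1$ (or $\sigma_2$) used to be, is not in $\prodasm{\Gamma}$. Fuzz and ghost tiles in this framework only license \emph{extra} non-mapping blocks adjacent to blocks that map to real tiles; they cannot be used to temporarily hide a block that already represents a tile of the simulated assembly.

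The paper's construction (Figure~\ref{fig:dualsim}) avoids this not by hiding the blocks but by making the representation function sensitive to the lock configuration rather than to the center tile alone: both blocks keep mapping to their \emph{old} states throughout the reversible locking phase, the decisive step is the single commitment transition ($b_8 \to x_{17}$ in the figure), and the subsequent single-sided rewrites of the two centers ($A\rightarrow C$, $B \rightarrow D$) and the unlock steps are all mapping-neutral, because once the boundary is in the committed pattern the image is already determined by it. If you repair your proposal along these lines --- drop the ghost detour, let $R$ read the lock/commit pattern, and let the commitment step itself carry the change of image --- then your reversibility-until-commit mechanism and your $O(\Delta_2)$ state count go through essentially as in the paper.
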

\begin{proof}
This was previously addressed in \cite{cantu2020signal} in relation to the signal tile model. Figure~\ref{fig:dualsim} gives a general overview of how to do this simulation at scale-3 with an additional 53 states. Basically, all tiles around the two $3\times 3$ macroblocks change before changing the states. This locks them into the transitions, and is reversible until state $b_8$ changes to $x_{17}$.
The $x$'s then change to $y$'s after the $A\rightarrow C$ and $B\rightarrow D$ change. The $y$'s then turn to $c$'s and $d$'s. 
\end{proof}

\begin{restatable}{theorem}{acaIU}
\label{thm:pairwise-aca}
    The Asynchronous Cellular Automata model with a cardinal-direction neighborhood of size-2 and radius-1 (pairwise ACA) is strongly intrinsically universal.
\end{restatable}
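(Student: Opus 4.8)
The plan is to obtain pairwise intrinsic universality as a corollary of the block-pairwise result (Lemma~\ref{thm:block-pairwise-aca}) by removing the one feature that distinguishes the two models: the simultaneous update of both cells in a pair. A block-pairwise rule has the form $\Delta:\Sigma^s\to\Sigma^s$ and rewrites both cells at once (a \emph{dual} transition), whereas a pairwise rule $\Delta:\Sigma^s\to\Sigma$ rewrites only a single cell (a \emph{single-sided} transition). Since Lemma~\ref{thm:block-pairwise-aca} already gives a constant-state block-pairwise system $\Gamma_B$ that is strongly IU for all block-pairwise (and hence all pairwise) systems, it will suffice to show that $\Gamma_B$ can itself be strongly simulated by a \emph{pairwise} system $\Gamma_P$ that still uses only constantly many states.

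First I would split the rules of $\Gamma_B$ into those in which a single cell changes---already valid pairwise rules, kept verbatim---and the genuine dual transitions. Each dual transition is exactly the situation addressed by Lemma~\ref{thm:dualsim}: I would replace it with the locking gadget of Figure~\ref{fig:dualsim}, which at scale~$3$ and with $O(1)$ extra states per rule uses single-sided transitions to first ``lock'' the neighborhood, then rewrite the two cells one at a time, and finally unlock. The key property to exploit is that this gadget stays reversible until a single commit step (the transition that actually changes the mapping, e.g.\ $b_8\to x_{17}$), so the chain of single-sided transitions reproduces a dual update \emph{non-committally}: the simulated mapping is unchanged throughout the locking phase and flips in exactly one step. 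The \emph{follows}, \emph{models}, and equivalent-production conditions should then be inherited directly from $\Gamma_B$ by composing representation functions.

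For the quantitative bounds I expect a routine composition argument. Because $\Gamma_B$ has constantly many states it has constantly many dual rules, so $\Delta_2=O(1)$ and the substitution of Lemma~\ref{thm:dualsim} contributes only $O(1)$ additional states, keeping $\Gamma_P$ at $O(1)$ states. The gadget inflates the scale by the constant factor~$3$, leaving the overall scale at $O(n^3)$, in agreement with Table~\ref{tab:results}.

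The hardest part will be checking that the gadget of Lemma~\ref{thm:dualsim}, stated for Tile Automata, transfers cleanly to the ACA setting while preserving the \emph{non-committal} guarantee. Pairwise ACA has no attachment step and every cell is present from the outset, so the locking signals must propagate purely through single-sided state changes, and each intermediate ``locked'' state must map under $R$ to the same simulated cell as before the transition---otherwise a spurious producible configuration would appear or the fuzz bound would be violated. The crux will be verifying that reversibility holds up to the single commit transition even when several adjacent pairs attempt to lock concurrently; I expect to resolve this with the same nondeterministic conflict-resolution argument already used for the supertile transition selection gadget in Section~\ref{sec:transitioning_tiles}.
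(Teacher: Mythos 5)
Your proposal is correct and follows essentially the same route as the paper: the paper likewise observes that pairwise ACA rules are single-sided transitions in Tile Automata terms, and then modifies the block-pairwise IU construction of Lemma~\ref{thm:block-pairwise-aca} by replacing its dual transitions with the scale-3 locking gadget of Lemma~\ref{thm:dualsim}, concluding that the constant number of rules in the universal system keeps the state count at $O(1)$. Your write-up is in fact more explicit than the paper's terse argument about composing representation functions, preserving the single-commit (non-committal) step, and handling concurrent locking, but these elaborations fill in the same skeleton rather than departing from it.
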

\begin{proof}
    Pairwise ACA is a special case of block-pairwise ACA. However, any cell transitions based on its neighbors. Thus, all transitions are single-sided in terms of Tile Automata. Thus, we modify the block-pairwise IU result from Theorem \ref{thm:block-pairwise-aca} to only use single-sided transitions through scaling as shown in Lemma \ref{fig:dualsim}.
    This means that there is a constant-size set of states that is intrinsically universal.
\end{proof}

\section{Conclusion}\label{sec:conclusion}
We showed that no passive or freezing tile assembly model can be non-committal intrinsically universal.
However, we showed that the seeded Tile Automata model, with its unbounded state changes, is non-committal intrinsically universal.
This is done by showing TA is intrinsically universal even under temperature $1$ using $1$-fuzz.
Moreover, a Tile Automata system using temperature $\tau > 1$ can be simulated using a system that uses temperature at most $1$.
Chaining these two simulations shows that there exists a tile set that can simulate any Tile Automata system.
This intrinsic universality result has direct implications for certain Cellular Automata.
Moreover, the result directly implies that the original aTAM model can be simulated using Tile Automata.

There is significant room to optimize and minimize the tile set.
For example, the number of tile states necessary to copy a supertile is large, whereas big sections of the supertile will always be the same, independent of what system we are simulating.
Furthermore, the temperature simulation, and consequently the universal simulation, uses a lot of states.
It might be possible to combine both simulations into one, by storing the affinity strength in the datacell.
A ghost tile would then need to check all neighboring supertiles for their affinity strengths and add them up, before deciding which state it will become.

Another obvious open problem is that of dimensions other than two.
It is still unknown whether the Tile Automata model is intrinsically universal if you extend the model to one, or to three or higher dimensions.
Even though our simulation could technically simulate a one dimensional tile set, the supertiles would still use two dimensions themselves.

Finally, as in the aTAM model, our construction heavily relies on the fact that (locally) only a single tile can attach at a time.
Because of this, our current construction only shows the seeded Tile Automata model to be intrinsically universal.
Hence, the question arises whether or not the non-seeded Tile Automata model is intrinsically universal.

\bibliographystyle{plainurl}
\bibliography{main}
%

% \subsubsection{Acknowledgements} Please place your acknowledgments at
% the end of the paper, preceded by an unnumbered run-in heading (i.e.
% 3rd-level heading).

\end{document}